\documentclass{article}

\usepackage[utf8]{inputenc}
\usepackage[T1]{fontenc}
\usepackage{epigraph}
\usepackage{lettrine}
\usepackage{type1ec}
\usepackage{latexsym}
\usepackage[makeroom]{cancel}

\usepackage[inline]{enumitem}

\usepackage{setspace}
\usepackage{afterpage}
\usepackage{newlfont}
\usepackage{array}
\usepackage{tabularx}
\usepackage{dcolumn}
\usepackage{etoolbox}

\usepackage[bf]{caption}
\usepackage{tikz}
\usepackage{xcolor}
\usetikzlibrary{arrows}
\usepackage{bussproofs}
\usepackage{proof}
\usepackage{stmaryrd}
\usepackage{multirow}
\usepackage{multicol}

\usepackage{lmodern}

\usepackage[hide]{ed}
\usepackage{mathrsfs}
\usepackage{graphicx}
\usepackage{amsmath}
\usepackage{amssymb}
\usepackage{amsthm}

\usepackage{comment}
\usepackage{authblk}

\usepackage[framemethod=tikz]{mdframed}
\newmdenv[innerlinewidth=0.2pt, roundcorner=4pt,innerleftmargin=6pt,
innerrightmargin=6pt,innertopmargin=6pt,innerbottommargin=6pt]{mybox}

\usepackage{tikz}
\usepackage{bussproofs}
\usepackage{multirow}
\usepackage{amssymb}

\usepackage{thm-restate}

\newcommand{\tiz}{\textcolor{red}}

\newcommand{\AND}{\bigand}

\newcommand{\tr}{t}
\newcommand{\ttr}{t'}

\newcommand{\varseq}{\mathit S}


\newcommand{\mf}{\mathsf}

\newcommand{\ufor}{\Vd^{\forall}}
\newcommand{\efor}{\Vd^{\exists}}

\newcommand{\leftrule}[1]{$#1_{\mathsf{L}}$}
\newcommand{\rightrule}[1]{$#1_{\mathsf{R}}$}

\newcommand{\init}{$\mf{init}$}
\newcommand{\lbot}{\leftrule{\bot}}


\newcommand{\initcl}{$\mf{init}^{c\ell}$}
\newcommand{\lbotcl}{$\bot_{\mf L}^{c\ell}$}
\newcommand{\limpcl}{$\imp_{\mf L}^{c\ell}$}
\newcommand{\rimpcl}{$\imp_{\mf R}^{c\ell}$}
\newcommand{\llandcl}{$\land_{\mf L}^{c\ell}$}
\newcommand{\rlandcl}{$\land_{\mf R}^{c\ell}$}
\newcommand{\llorcl}{$\lor_{\mf L}^{c\ell}$}
\newcommand{\rlorcl}{$\lor_{\mf R}^{c\ell}$}





\newcommand{\diam}{\Diamond}
\newcommand{\boxm}[1]{\Box^-#1}
\newcommand{\fp}{f}
\newcommand{\BoxI}{\Box_1}
\newcommand{\BoxM}{\Box_2}
\newcommand{\diamM}{\diam_2}

\newcommand{\Boxtwo}{\Box_2}
\newcommand{\Boxthree}{\Box_3}

\newcommand{\diamtwo}{\diam_2}
\newcommand{\diamthree}{\diam_3}

\newcommand{\G}{\Gamma}
\newcommand{\D}{\Delta}

\newcommand{\Der}{\triangledown}

\newcommand{\N}{\mathcal N}
\newcommand{\R}{\mathcal R}

\newcommand{\W}{\mathcal W}

\newcommand{\V}{\mathcal V}
\newcommand{\FW}{\mathbb{F}}

\newcommand{\MNM}{MNM}

\newcommand{\M}{\mathcal M}
\newcommand{\lan}{\mathcal L}
\newcommand{\lanone}{\mathcal L_1}

\newcommand{\atm}{\mathit{Atm}}
\newcommand{\wff}{\mathit{Wff}}
\newcommand{\CC}{\mathscr{C}}
\newcommand{\DD}{\mathscr{D}}
\newcommand{\EE}{\mathscr{E}}
\newcommand{\U}{\mathscr{U}}
\newcommand{\VV}{\mathscr{V}}
\newcommand{\ZZ}{\mathscr{Z}}
\newcommand{\aU}{\alpha_{\mathscr{U}}}
\newcommand{\aUp}{\alpha_{\mathscr{U'}}}
\newcommand{\canonic}[1]{#1}
\newcommand{\Mc}{\canonic{\M}}
\newcommand{\Vc}{\canonic{\V}}
\newcommand{\Nc}{\canonic{\N}}
\newcommand{\Wc}{\canonic{\W}}
\newcommand{\Rc}{\canonic{\R}}

\newcommand{\FWc}{\canonic{\FW}}
\newcommand{\lessc}{\canonic{\less}}
\newcommand{\morec}{\canonic{\more}}

\newcommand{\Uc}{\U_{C}}

\newcommand{\Ua}{\U_{A}}
\newcommand{\Ub}{\U_{B}}

\newcommand{\Psicd}{\Psi_{CD}}

\newcommand{\Psic}{\Psi_{C}}
\newcommand{\Psid}{\Psi_{D}}

\newcommand{\pow}{\mathcal P}


\newcommand{\EM}{\logicnamestyle{M}}

\newcommand{\EMN}{\logicnamestyle{MN}}
\newcommand{\EMC}{\logicnamestyle{MC}}
\newcommand{\EMCN}{\logicnamestyle{MCN}}

\newcommand{\EMP}{\logicnamestyle{MP}}
\newcommand{\EMNP}{\logicnamestyle{MNP}}
\newcommand{\EMCP}{\logicnamestyle{MCP}}

\newcommand{\EMD}{\logicnamestyle{MD}}
\newcommand{\EMND}{\logicnamestyle{MND}}
\newcommand{\EMCD}{\logicnamestyle{MCD}}

\newcommand{\EMT}{\logicnamestyle{MT}}
\newcommand{\EMNT}{\logicnamestyle{MNT}}
\newcommand{\EMCT}{\logicnamestyle{MCT}}

\newcommand{\K}{\logicnamestyle{K}}
\newcommand{\KT}{\logicnamestyle{KT}}
\newcommand{\KD}{\logicnamestyle{KD}}
\newcommand{\KP}{\logicnamestyle{KP}}

\newcommand{\IK}{\logicnamestyle{I.K}}

\newcommand{\WK}{\logicnamestyle{W.K}}
\newcommand{\WKD}{\logicnamestyle{W.KD}}
\newcommand{\WKT}{\logicnamestyle{W.KT}}
\newcommand{\WM}{\logicnamestyle{W.M}}
\newcommand{\WMN}{\logicnamestyle{W.MN}}
\newcommand{\WMC}{\logicnamestyle{W.MC}}
\newcommand{\WMD}{\logicnamestyle{W.MD}}
\newcommand{\WMND}{\logicnamestyle{W.MND}}
\newcommand{\WMCD}{\logicnamestyle{W.MCD}}
\newcommand{\WMT}{\logicnamestyle{W.MT}}
\newcommand{\WMNT}{\logicnamestyle{W.MNT}}
\newcommand{\WMCT}{\logicnamestyle{W.MCT}}
\newcommand{\WMP}{\logicnamestyle{W.MP}}
\newcommand{\WMNP}{\logicnamestyle{W.MNP}}

\newcommand{\varK}{\logicnamestyle{{}^*K^*}}
\newcommand{\varMC}{\logicnamestyle{{}^*MC^*}}
\newcommand{\seqvarK}{\Gone{{}^*K^*}}
\newcommand{\seqvarMC}{\Gone{{}^*MC^*}}

\newcommand{\CM}{\logicnamestyle{C.M}}
\newcommand{\CMC}{\logicnamestyle{C.MC}}
\newcommand{\CMN}{\logicnamestyle{C.MN}}
\newcommand{\CMP}{\logicnamestyle{C.MP}}
\newcommand{\CMD}{\logicnamestyle{C.MD}}
\newcommand{\CMT}{\logicnamestyle{C.MT}}
\newcommand{\CMNP}{\logicnamestyle{C.MNP}}
\newcommand{\CMND}{\logicnamestyle{C.MND}}
\newcommand{\CMNT}{\logicnamestyle{C.MNT}}

\newcommand{\CMCD}{\logicnamestyle{C.MCD}}
\newcommand{\CMCT}{\logicnamestyle{C.MCT}}
\newcommand{\CK}{\logicnamestyle{C.K}}
\newcommand{\CKD}{\logicnamestyle{C.KD}}
\newcommand{\CKT}{\logicnamestyle{C.KT}}

\newcommand{\MM}{\logicnamestyle{M.M}}
\newcommand{\MMC}{\logicnamestyle{M.MC}}
\newcommand{\MMN}{\logicnamestyle{M.MN}}
\newcommand{\MMP}{\logicnamestyle{M.MP}}
\newcommand{\MMD}{\logicnamestyle{M.MD}}
\newcommand{\MMT}{\logicnamestyle{M.MT}}
\newcommand{\MMNP}{\logicnamestyle{M.MNP}}
\newcommand{\MMND}{\logicnamestyle{M.MND}}
\newcommand{\MMNT}{\logicnamestyle{M.MNT}}

\newcommand{\MMCD}{\logicnamestyle{M.MCD}}
\newcommand{\MMCT}{\logicnamestyle{M.MCT}}
\newcommand{\MMCN}{\logicnamestyle{M.MCN}}
\newcommand{\MK}{\logicnamestyle{M.K}}
\newcommand{\MKD}{\logicnamestyle{M.KD}}
\newcommand{\MKT}{\logicnamestyle{M.KT}}


\newcommand{\seqCLvar}{\Gone{C.L}}
\newcommand{\seqMLvar}{\Gone{M.L}}
\newcommand{\seqWLvar}{\Gone{W.L}}
\newcommand{\seqLvar}{\Gone{L}}

\newcommand{\seqWK}{\Gone{W.K}}
\newcommand{\seqWKD}{\Gone{W.KD}}
\newcommand{\seqWKT}{\Gone{W.KT}}
\newcommand{\seqWM}{\Gone{W.M}}
\newcommand{\seqWMN}{\Gone{W.MN}}
\newcommand{\seqWMC}{\Gone{W.MC}}
\newcommand{\seqWMD}{\Gone{W.MD}}
\newcommand{\seqWMND}{\Gone{W.MND}}
\newcommand{\seqWMCD}{\Gone{W.MCD}}
\newcommand{\seqWMT}{\Gone{W.MT}}
\newcommand{\seqWMNT}{\Gone{W.MNT}}
\newcommand{\seqWMCT}{\Gone{W.MCT}}
\newcommand{\seqWMP}{\Gone{W.MP}}
\newcommand{\seqWMNP}{\Gone{W.MNP}}

\newcommand{\seqCK}{\Gone{C.K}}
\newcommand{\seqCKD}{\Gone{C.KD}}
\newcommand{\seqCKT}{\Gone{C.KT}}
\newcommand{\seqCM}{\Gone{C.M}}
\newcommand{\seqCMN}{\Gone{C.MN}}
\newcommand{\seqCMC}{\Gone{C.MC}}
\newcommand{\seqCMD}{\Gone{C.MD}}
\newcommand{\seqCMND}{\Gone{C.MND}}
\newcommand{\seqCMCD}{\Gone{C.MCD}}
\newcommand{\seqCMT}{\Gone{C.MT}}
\newcommand{\seqCMNT}{\Gone{C.MNT}}
\newcommand{\seqCMCT}{\Gone{C.MCT}}
\newcommand{\seqCMP}{\Gone{C.MP}}
\newcommand{\seqCMNP}{\Gone{C.MNP}}




\newcommand{\seqL}{\Gone{L}}
\newcommand{\SC}{\mathsf{SC}}

\newcommand{\seqK}{\Gone{K}}
\newcommand{\seqKD}{\Gone{KD}}
\newcommand{\seqKT}{\Gone{KT}}
\newcommand{\seqEM}{\Gone{M}}
\newcommand{\seqEMN}{\Gone{MN}}
\newcommand{\seqEMC}{\Gone{MC}}
\newcommand{\seqEMD}{\Gone{MD}}
\newcommand{\seqEMND}{\Gone{MND}}
\newcommand{\seqEMCD}{\Gone{MCD}}
\newcommand{\seqEMT}{\Gone{MT}}
\newcommand{\seqEMNT}{\Gone{MNT}}
\newcommand{\seqEMCT}{\Gone{MCT}}
\newcommand{\seqEMP}{\Gone{MP}}
\newcommand{\seqEMNP}{\Gone{MNP}}

\newcommand{\Sfour}{\logicnamestyle{S4}}
\newcommand{\Sfive}{\logicnamestyle{S5}}

\newcommand{\CPL}{\CL}
\newcommand{\IPL}{\IL}

\newcommand{\Mvar}{\logicnamestyle{M}}
\newcommand{\MLvar}{\logicnamestyle{M.L}}
\newcommand{\CLvar}{\logicnamestyle{C.L}}
\newcommand{\WLvar}{\logicnamestyle{W.L}}
\newcommand{\ILvar}{\logicnamestyle{I.L}}

\newcommand{\MSigma}{\logicnamestyle{M\Sigma}}
\newcommand{\MMSigma}{\logicnamestyle{M.M\Sigma}}
\newcommand{\xLvar}{\xstar}

\newcommand{\xstar}{\logicnamestyle{x}^*}

\newcommand{\xCstar}{\logicnamestyle{xC}^*}
\newcommand{\xDstar}{\logicnamestyle{xD}^*}
\newcommand{\xTstar}{\logicnamestyle{xT}^*}

\newcommand{\xNstar}{\logicnamestyle{xN}^*}
\newcommand{\xPstar}{\logicnamestyle{xP}^*}


\newcommand{\hilbertaxiomstyle}[1]{${#1}$}

\newcommand{\axT}{\hilbertaxiomstyle{T}}
\newcommand{\axD}{\hilbertaxiomstyle{D}}

\newcommand{\ax}{\AxiomC}
\newcommand{\uinf}{\UnaryInfC}
\newcommand{\binf}{\BinaryInfC}

\newcommand{\llab}{\LeftLabel}
\newcommand{\rlab}{\RightLabel}
\newcommand{\disp}{\DisplayProof}





\newcommand{\semcond}[1]{{#1}}

\newcommand{\cN}{\semcond N}

\newcommand{\cC}{\semcond C}
\newcommand{\cT}{\semcond T}
\newcommand{\cP}{\semcond P}
\newcommand{\cD}{\semcond D}

\newcommand{\cX}{\semcond X}

\newcommand{\cCs}{\semcond{C-s}}
\newcommand{\cDs}{\semcond{D-s}}
\newcommand{\cTs}{\semcond{T-s}}

\newcommand{\seq}{\Seq}    
\newcommand{\Seq}{\Rightarrow}

\newcommand{\tto}{\imp\coimp}
\newcommand{\bigand}{\bigwedge}    
\newcommand{\bigor}{\bigvee}
\newcommand{\vd}{\vdash}
\newcommand{\Vd}{\Vdash}
\newcommand{\ie}{i.e.}
\newcommand{\ih}{i.h.}


\newcommand{\Gone}[1]{\mathsf{G1\text{-}{#1}}}

\newcommand{\gtrecp}{\mathbf{G3cp}}




%

%

\newcommand{\gtremp}{\mathbf{G3mp}}


\newcommand{\gtreip}{\mathbf{G3ip}}
\newcommand{\gtreipS}{\mathbf{G3ip'}}

\newcommand{\wij}{Wijesekera}

\newcommand{\intu}{intuitionistic}
\newcommand{\const}{constructive}
\newcommand{\neigh}{neighbourhood}
\newcommand{\seg}{segment}



\newcommand{\fint}{\iota}

\newcommand{\lwk}{$\mathsf{Lwk}$}

\newcommand{\mlwk}{$\mathsf{wk}_{\mathsf{L}}^m$}
\newcommand{\mlctr}{$\mathsf{ctr}_{\mathsf{L}}^m$}
\newcommand{\lwkcl}{$\mathsf{wk}_{\mathsf{L}}^{cl}$}
\newcommand{\rwkcl}{$\mathsf{wk}_{\mathsf{R}}^{cl}$}
\newcommand{\lctrcl}{$\mathsf{ctr}_{\mathsf{L}}^{cl}$}
\newcommand{\rctrcl}{$\mathsf{ctr}_{\mathsf{R}}^{cl}$}

\newcommand{\ilwk}{$\mathsf{wk}_{\mathsf{L}}^i$}
\newcommand{\irwk}{$\mathsf{wk}_{\mathsf{R}}^i$}
\newcommand{\ilctr}{$\mathsf{ctr}_{\mathsf{L}}^i$}

\newenvironment{proof-sketch}{\noindent{\em Sketch of Proof.}\hspace*{0.5em}}{\qed\medskip}


%




\newcommand{\cut}{$\mf{cut}$}

\newcommand{\mix}{$\mf{mix}$}
\newcommand{\mcut}{\cut}








%


%


%


%


\newcommand{\axfour}{\hilbertaxiomstyle{4}}
\newcommand{\axfive}{\hilbertaxiomstyle{5}}
\newcommand{\axB}{\hilbertaxiomstyle{B}}

\newcommand{\nec}{\hilbertaxiomstyle{nec}}
\newcommand{\monbox}{\hilbertaxiomstyle{mon_\Box}}
\newcommand{\mondiam}{\hilbertaxiomstyle{mon_\diam}}
\newcommand{\axfourbox}{\hilbertaxiomstyle{4_\Box}}

\newcommand{\axCdiam}{\hilbertaxiomstyle{C_\Diamond}}
\newcommand{\axNdiam}{\hilbertaxiomstyle{N_\Diamond}}

\newcommand{\axCbox}{\hilbertaxiomstyle{C_\Box}}
\newcommand{\axNbox}{\hilbertaxiomstyle{N_\Box}}

\newcommand{\axKdiam}{\hilbertaxiomstyle{K_\Diamond}}

\newcommand{\axKbox}{\hilbertaxiomstyle{K_\Box}}
\newcommand{\axTbox}{\hilbertaxiomstyle{T_\Box}}
\newcommand{\axTdiam}{\hilbertaxiomstyle{T_\diam}}
\newcommand{\axPbox}{\hilbertaxiomstyle{P_\Box}}
\newcommand{\axPdiam}{\hilbertaxiomstyle{P_\diam}}

\newcommand{\axdual}{\hilbertaxiomstyle{dual}}
\newcommand{\axdualand}{\axmnc}
\newcommand{\Rdualand}{\Rmnc}
\newcommand{\axdualor}{\axmem}
\newcommand{\Rdualor}{\Rmem}
\newcommand{\axmnc}{\hilbertaxiomstyle{mnc}}
\newcommand{\Rmnc}{\hilbertaxiomstyle{Rmnc}}
\newcommand{\axmem}{\hilbertaxiomstyle{mem}}
\newcommand{\Rmem}{\hilbertaxiomstyle{Rmem}}

\newcommand{\rulestyle}[1]{$\mathsf{#1}$}
\newcommand{\ruleKbox}{\rulestyle{K_\Box}}
\newcommand{\ruleKdiam}{\rulestyle{K_\diam}}

\newcommand{\ruleNdiam}{\rulestyle{N_\diam}}
\newcommand{\ruleMbox}{\rulestyle{M_\Box}}
\newcommand{\ruleMdiam}{\rulestyle{M_\diam}}
\newcommand{\ruleCbox}{\rulestyle{C_\Box}}
\newcommand{\ruleCdiam}{\rulestyle{C_\diam}}
\newcommand{\ruleTbox}{\rulestyle{T_\Box}}
\newcommand{\ruleTdiam}{\rulestyle{T_\diam}}

\newcommand{\rulePbox}{\rulestyle{P_\Box}}
\newcommand{\rulePdiam}{\rulestyle{P_\diam}}

\newcommand{\ruleKboxcl}{$\mf K_\Box^{c\ell}$}
\newcommand{\ruleKdiamcl}{$\mf K_\diam^{c\ell}$}
\newcommand{\ruleNboxcl}{$\mf N_\Box^{c\ell}$}
\newcommand{\ruleNdiamcl}{$\mf N_\diam^{c\ell}$}
\newcommand{\ruleMboxcl}{$\mf M_\Box^{c\ell}$}
\newcommand{\ruleMdiamcl}{$\mf M_\diam^{c\ell}$}
\newcommand{\ruleCboxcl}{$\mf C_\Box^{c\ell}$}
\newcommand{\ruleCdiamcl}{$\mf C_\diam^{c\ell}$}
\newcommand{\ruleTboxcl}{$\mf T_\Box^{c\ell}$}
\newcommand{\ruleTdiamcl}{$\mf T_\diam^{c\ell}$}
\newcommand{\ruleDcl}{$\mf D^{c\ell}$}
\newcommand{\rulePboxcl}{$\mf P_\Box^{c\ell}$}
\newcommand{\rulePdiamcl}{$\mf P_\diam^{c\ell}$}
\newcommand{\ruleCDcl}{$\mf{CD}^{c\ell}$}
\newcommand{\ruleDboxcl}{$\mf D_\Box^{c\ell}$}
\newcommand{\ruleDdiamcl}{$\mf D_\diam^{c\ell}$}
\newcommand{\rulemncMcl}{$\mf{mnc}_{\mf M}^{c\ell}$}
\newcommand{\rulememMcl}{$\mf{mem}_{\mf M}^{c\ell}$}
\newcommand{\rulemncCcl}{$\mf{mnc}_{\mf C}^{c\ell}$}
\newcommand{\rulememCcl}{$\mf{mem}_{\mf C}^{c\ell}$}

%

\newcommand{\ruleiNdiam}{\rulestyle{N_\diam}}

\newcommand{\ruleiTbox}{$\mathsf{T}_{\Box}^i$}

\newcommand{\ruleiDbox}{\rulestyle{D_\Box}}

\newcommand{\ruleiPbox}{\rulestyle{P_\Box}}

\newcommand{\ruleiCDbox}{\rulestyle{CD_\Box}}

\newcommand{\ruleTboxone}{\rulestyle{T_\Box^{1}}}
\newcommand{\ruleTboxzero}{\rulestyle{T_\Box^{0}}}

%
\newcommand{\rulemKbox}{$\mathsf{K}_{\Box}^m$}
\newcommand{\rulemKdiam}{$\mathsf{K}_{\diam}^m$}
\newcommand{\rulemCbox}{$\mathsf{C}_{\Box}^m$}
\newcommand{\rulemCdiam}{$\mathsf{C}_{\diam}^m$}
\newcommand{\rulemTbox}{$\mathsf{T}_{\Box}^m$}
\newcommand{\rulemTdiam}{$\mathsf{T}_{\diam}^m$}
\newcommand{\rulemD}{$\mathsf{D}^m$}
\newcommand{\rulemNbox}{$\mathsf{N}_{\Box}^m$}
\newcommand{\rulemMbox}{$\mathsf{M}_{\Box}^m$}
\newcommand{\rulemMdiam}{$\mathsf{M}_{\diam}^m$}
\newcommand{\rulemPdiam}{$\mathsf{P}_{\diam}^m$}
\newcommand{\rulemCD}{$\mathsf{CD}^m$}





\newcommand{\less}{\leq}
\newcommand{\more}{\geq}



\newcommand{\imp}{\supset}
\newcommand{\coimp}{\subset}









\newcommand{\nb}[1]{\textcolor{red}{$\vert$}\mbox{}\marginpar{\scriptsize\raggedright\textcolor{red}{#1}}}




\newcommand{\logicnamestyle}[1]{\mathsf{#1}}

\newcommand{\classical}{\mathsf{Classical}}

\newcommand{\logic}{\logicnamestyle{L}}
\newcommand{\Lone}{\logicnamestyle{L_1}}

\newcommand{\MPL}{\ML}
\newcommand{\ML}{\logicnamestyle{MPL}}
\newcommand{\IL}{\logicnamestyle{IPL}}
\newcommand{\CL}{\logicnamestyle{CPL}}

\newcommand{\CLstar}{\logicnamestyle{C^*}}

\newcommand{\ruleimncM}{\rulestyle{mnc_M}}
\newcommand{\ruleimncC}{\rulestyle{mnc_C}}
\newcommand{\ruleimncK}{\rulestyle{mnc_K}}

\newcommand{\iinit}{$\mf{init}^i$}
\newcommand{\ilbot}{\leftrule{\bot^i}}
\newcommand{\illand}{\leftrule{\land^i}}
\newcommand{\irland}{\rightrule{\land^i}}
\newcommand{\illor}{\leftrule{\lor^i}}
\newcommand{\irlor}{\rightrule{\lor^i}}
\newcommand{\ilimp}{\leftrule{\imp^i}}
\newcommand{\irimp}{\rightrule{\imp^i}}
%

\newcommand{\minit}{$\mf{init}^m$}

\newcommand{\mlland}{\leftrule{\land^m}}
\newcommand{\mrland}{\rightrule{\land^m}}
\newcommand{\mllor}{\leftrule{\lor^m}}
\newcommand{\mrlor}{\rightrule{\lor^m}}
\newcommand{\mlimp}{\leftrule{\imp^m}}
\newcommand{\mrimp}{\rightrule{\imp^m}}


\newtheorem{theorem}{Theorem}[section]
\newtheorem{lemma}[theorem]{Lemma}

\newtheorem{proposition}[theorem]{Proposition}

\newtheorem{definition}{Definition}[section]

\providecommand{\keywords}[1]
{
  \begin{small}
  \noindent
  \textbf{\textit{Keywords---}} #1
  \end{small}
}

\title{Minimal modal logics, constructive modal logics and their relations}
\author{Tiziano Dalmonte}

\affil{\begin{normalsize}Free University of Bozen-Bolzano, Bolzano, Italy\\
tiziano.dalmonte@gmail.com\end{normalsize}}
\date{}

\begin{document}

\maketitle

  \begin{abstract}
\noindent
We present a family of minimal modal logics (namely, modal logics based on minimal propositional logic)
corresponding each to a different classical modal logic.
The minimal modal logics are defined based on their classical counterparts in two distinct ways:
(1) via embedding into fusions of classical modal logics through a natural extension of the Gödel-Johansson translation of minimal logic into modal logic S4;
(2) via extension to modal logics of the multi- vs.~single-succedent 
correspondence
 of sequent calculi for classical and minimal logic.
We show that, despite being mutually independent, the two methods turn out to be equivalent for a wide class of modal systems.
Moreover, we compare the resulting minimal version of K with 
the constructive modal logic CK studied in the literature,
displaying tight relations among the two systems.
Based on these relations, we also define a constructive correspondent for each minimal system,
thus obtaining a family of constructive modal logics which includes CK as well as other constructive modal logics studied in the literature.

\medskip
\keywords{
Minimal modal logic, constructive modal logic, modal companion, sequent calculus, neighbourhood semantics
}
  \end{abstract}
  
\section{Introduction}

Although modal logics 
are usually
 defined as extensions of classical logic,
significant attention has been also 
devoted 
to the analysis of modalities over non-classical basis,
such as 
relevant 
\cite{Bilkova:2010,Ferenz:2023,Fuhrmann:1990,Mares:1993,seki2003general,seki2003sahlqvist},
linear \cite{Garg:2006,Marion:2009,Porello:2015} or other substructural logics \cite{Dagostino,Kamide:2002,Sedlar:2016}.
In this context, a major role is played by \intu\ logic, 
many modal extensions of which have been studied
with motivations 
ranging from philosophical or legal reasoning to computer science applications.

By analogy with intuitionistic connectives, 
modalities $\Box$ and $\diam$ over \intu\ logic are usually assumed to be not interdefinable.
This peculiarity allows for the definition of a wide variety of \intu\ modal 
systems,
since the modalities can validate 
distinct
principles and can interact in several ways.
In particular, different intuitionistic counterparts of the same classical modal logics are definable.
If we consider for instance \intu\ counterparts of classical modal logic $\K$,
three systems are well-attested in the literature:
so-called Intuitionistic K ($\IK$) \cite{FischerServi:1980,fischerservi:1984,Simpson:1994,Plotkin}, 
(the propositional fragment of) \wij's Constructive Concurrent Dynamic Logic~\cite{wij,Wijesekera:2005}
(we call it $\WK$), 
and Constructive K ($\CK$) 
\cite{Bellin,depaiva:2011,Mendler1}.%
\footnote{This list includes only systems with both modalities $\Box$ and $\diam$, however several intuitionistic mono-modal versions of $\K$ have been also studied (see \cite{Simpson:1994} for a survey).}
Axiomatically, these systems have increasing strength, from the weakest $\CK$ to the strongest $\IK$, 
and are definable extending intuitionistic propositional logic ($\IPL$) as follows:
\begin{center}
\begin{tabular}{l}
$\CK$ := $\IPL$ + $\Box(A \imp B) \imp (\Box A \imp \Box B)$, $\Box(A \imp B) \imp (\diam A \imp \diam B)$, \ax{$A$}\uinf{$\Box A$}\disp \\
$\WK$ := $\CK$ + $\neg\diam\bot$ \\
$\IK$ := $\WK$ +  $\diam(A \lor B) \imp \diam A \lor \diam B$,  $(\diam A \imp \Box B) \imp \Box(A \imp B)$ \\
\end{tabular}
\end{center}

Once an intuitionistic counterpart of a classical modal logic is defined, 
the question arises about how to define analogous counterparts of other classical modal logics.
To this aim, general theories of intuitionistic modal logics have been proposed.
An elegant characterisation
 of $\IK$ and related systems was 
 provided by Simpson~\cite{Simpson:1994}:
Given a classical modal logic $\logic$, its intuitionistic counterpart $\ILvar$ is defined as the set of formulas $A$
such that
\[A \in \ILvar \text{ if and only if } \mathscr T\vd_{\mathsf{QIL}} \forall x(A^x)\]
where the superscript $x$ denotes the well-known standard translation of modal formulas into fist-order sentences 
 with respect to the variable $x$,
and $\mathscr T$ is a set of first-order sentences 
that express
the frame conditions 
corresponding to
 the modal axioms of $\logic$ in the relational semantics
(e.g., reflexivity, transitivity, simmetry for \axT, \axfour, \axB).
The logics $\ILvar$ have been also shown to be related to products of modal logics \cite{YB}
via suitable translations of modal formulas \cite{FischerServi:1980,wolter:1999}.

Furthermore, a general 
characterisation of \wij-style modal logics was 
given in \cite{dalmonte:2022}
on the basis of a restriction of sequent calculi for classical modal logics to sequents with at most one formula in the succedent,
thus extending to modal logics a relation that is known to hold between classical and intuitionistic sequent calculi since Gentzen~\cite{gentzen}.
At the same time, the couterparts $\WLvar$ of classical logics $\logic$ presented in \cite{dalmonte:2022} coincide with the sets of modal formulas $A$ such that
\[A \in \WLvar \text{ if and only if } A^g \in \Sfour\oplus\logic\]
where $\Sfour\oplus\logic$ denotes the fusion of $\Sfour$ and $\logic$ \cite{YB},
and $g$ is a natural extension of G{\"o}del 
translation of $\IPL$ into $\Sfour$ \cite{goedel1933}.


By contrast, no such general criterion for the definition of $\CK$ and related systems have been proposed so far in the literature.
Constructive counterparts for 
the whole
modal cube from $\K$ to $\Sfive$ 
obtained  by the
combinations of  the axioms \axD, \axT, \axfour, \axB\ and \axfive\
have been  presented in \cite{arisaka,Alechina,Mendler2}
and 
endowed with nested sequent calculi \cite{arisaka}
and some of them also with relational semantics \cite{Alechina,Mendler1} and
2-sequent calculi \cite{Mendler2}.
These logics are defined based on their axiomatic systems 
by extending $\CK$ with pairs of corresponding $\Box$- and $\diam$-axioms,
like \axT\ $(\Box A \imp A) \land (A \imp \diam A)$ and \axfour\
$(\Box A \imp \Box \Box A) \land (\diam \diam A \imp \diam A)$,
whose need comes from the loss of the duality between $\Box$ and $\diam$ in intuitionistic logic
(the axiom \axD\ 
is an exception as it is taken in the usual formulation $\Box A \imp \diam A$ only).

However, 
the validity of corresponding $\Box$- and $\diam$-principles in constructive modal logics does
not hold in general,
so that it is not clear how to extend this family of systems
from a purely axiomatical point of view,
especially when it comes to logics weaker than $\K$.
For instance, $\CK$ validates 
$\Box A \land \Box B \imp \Box(A \land B)$ and the necessitation rule $A/\Box A$,
but does not validate the corresponding $\diam$-principles 
$\diam(A \lor B) \imp \diam A \lor \diam B$ and $\neg\diam \bot$.
Moreover, $\Box$- and $\diam$-versions also exist for the axiom \axD,
namely $\neg(\Box A \land \Box\neg A)$ and $\diam A \lor \diam\neg A$,
but neither of them is derivable in $\CKD$.

In this paper we address this problem by 
presenting 
a systematisation of constructive modal logics that includes the systems
$\CK$, $\CKD$ and $\CKT$ already studied in the literature 
as well as new constructive counterparts of further classical modal logics.
At the same time, we propose a view on 
 constructive modal logics according to which the constructive modalities are 
 strictly connected with minimal modalities.

 More precisely, we first introduce a family of minimal modal logics 
 corresponding each to a different classical modal logic.
 To our knowledge, this is the first study of modal logics based on
 minimal propositional logic ($\MPL$).
 Following an approach similar to the one of \cite{dalmonte:2022} for the definition of \wij-style modal logics,
 our minimal modal logics are defined in two distinct but ultimately equivalent ways,
 at least for the set of classical logics considered here.
 First, the minimal modal logics $\MLvar$ are defined by means of a reduction to fusions of classical modal logics of the form $\Sfour\oplus\logic$ through a natural extension of the Gödel-Johansson translation of $\MPL$ into $\Sfour$
 (in turn, this translation is a combination of Johansson's translation of $\MPL$ into $\IPL$ \cite{Johansson:1937}
 and Gödel's translation of $\IPL$ into $\Sfour$ \cite{goedel1933}):
The logics $\MLvar$ will coincide with the sets of modal formulas $A$ such that
$A^{\tr} \in \Sfour\oplus\logic$,
where $\tr$ is the aforementioned translation.
To this aim, we provide the minimal modal logics with a modular semantic characterisation.
Second, the same minimal modal logics are defined by restricting sequent calculi for classical modal logics to sequents with exactly one formula in the succedent.
This extends to modal logics a relation that holds between sequent calculi for classical and minimal propositional logic
fistly observed by Johansson \cite{Johansson:1937}
(see \cite{Troelstra:2000} for an extended presentation of sequent calculi for $\CPL$, $\IPL$ and $\MPL$ and corresponding bounds on the cardinality of succedents of sequents).

Then, we observe that $\MK$, the minimal counterpart of $\K$, 
is strictly connected with $\CK$.
In particular, $\CK$ coincides with the extension of $\MK$ with the principle of ex falso quodlibet
$\bot\imp A$ (exactly as $\IPL = \MPL + \bot\imp A$),
which means that 
the two systems share the same modal principles, despite over a different propositional base.
We show that tight relations between $\MK$ and $\CK$ can be observed also in terms of the semantics and of the sequent calculi.
More precisely, $\CK$ is characterised by the class of model for $\MK$ with suitable restrictions which ensure the validity of $\bot\imp A$,
moreover the sequent calculus for $\CK$ defined in \cite{Bellin} can be obtained by adding the modal sequent rules for $\MK$ to an intuitionistic sequent calculus.
By extending these relations to the other systems, we then define a constructive correspondent for each minimal modal logic,
thus obtaining a family of constructive modal logics with corresponding semantics and sequent calculi.

This paper is organised as follows.
In Sections~\ref{subsec:syn prel} and \ref{subsec:sem prel}, we present some 
preliminary notions needed throughout the paper.
In Section~\ref{sec:MK companion}, we present the definition of $\MK$, the minimal counterpart of $\K$,
via a reduction to $\Sfour\oplus\K$.
We also prove the soundness and completeness of $\MK$ with respect to a suitable class of models.
In Section~\ref{sec:seq MK}, we present the definition of $\MK$ by means of the single-succedent restriction of a sequent calculus
for $\K$. We show that the logic obtained in this way coincides with the logic defined in the previous section.
In Section~\ref{sec:rel MK CK}, we analyse the relations between $\MK$ just defined and $\CK$,
both from the point of view of the semantics and of the sequent calculi.
In Section~\ref{sec:family}, we apply the same methods, based on reductions to fusions and restriction of sequent calculi,
to define a family of minimal counterparts of some standard classical modal logics.
By extending to these systems the relations observed between $\MK$ and $\CK$, we also define a corresponding family of constructive modal logics.
Finally, Section~\ref{sec:discussion} contains some discussion of the results.

%
%

\subsection{Syntactic preliminaries}\label{subsec:syn prel}
Given a countable set  $\atm = \{p_0, p_1, p_2, ... \}$ of propositional variables and a finite set $\mathbb{M}$ of unary modal operators, the language $\lan_\mathbb{M}^{\atm}$ is defined by the following BNF grammar,
where $p\in\atm$,   
  $\circ \in \{\land, \lor, \imp\}$,
  and   
  $\heartsuit\in\mathbb{M}$:
  \[A ::= p \mid \bot \mid A \circ A \mid \heartsuit A.\] 

  In the following, we use $p$, $q$, $r$ as metavariables for elements of $\atm$, and $A$, $B$, $C$, $D$ as metavariables for formulas. Moreover, we define 
  $\top := \bot\imp\bot$,
  $\neg A := A \imp \bot$,
  and $A \tto B:=(A \imp B) \land (B \imp A)$.
  Minimal modal logics will be defined in a language $\lan_{\{\Box, \diam\}}^{\atm}$
  containing the modalities $\Box$ and $\diam$.
  For the sake of simplicity, we denote $\lan_{\{\Box, \diam\}}^{\atm}$ as $\lan$.
  
  We consider the following axiomatisation for minimal propositional logic ($\MPL$), formulated in $\lan$
  (see e.g.~\cite{segerbergMin}):%
  \footnote{In this paper we consider axiomatic systems to be defined by axiom schemata and rule schemata. For the sake of simplicity, we simply refer to axiom schemata and rule schemata as axioms and rules.}
\begin{center}
\begin{tabular}{lll}
$A\land B \imp A$ \ &  $(A \imp B) \imp ((A\imp C) \imp (A\imp B\land C))$ \\
$A\land B\imp B$ & $(A \imp C) \imp ((B\imp C) \imp (A\lor B\imp C))$  \\
$A\imp A \lor B$ & $(A \imp (B \imp C)) \imp ((A\imp B) \imp (A\imp C))$ \\ 
$B\imp A \lor B$ & $A \imp (B \imp A)$ \\
\end{tabular}
\
\ax{$A\imp B$}
\ax{$A$}
\binf{$B$}
\disp
\end{center}
As usual, we can define intuitionistic propostitional logic ($\IPL$) as the extension of $\MPL$ with \emph{ex falso quodlibet}:%
\footnote{Given two axiomatic systems $\logic$ and $\logic'$ and an axiom $A$, we denote
$\logic + A$ the axiomatic extension of $\logic$ with the axiom $A$, and $\logic \oplus \logic'$ the fusion
of $\logic$ and $\logic'$ (cf.~Definition~\ref{def:fusion}).}
\[\IPL := \MPL + \bot\imp A,\]
and classical propostitional logic ($\CPL$) as the extension of $\IPL$ with \emph{excluded middle}:
\[\CPL := \IPL + A \lor \neg A.\]

%
  
  In this paper, we shall define minimal and constructive counterparts of classical modal logics.
  The classical modal logics here considered are extensions of
  $\CPL$, formulated in $\lan$,  with modal axioms and rules.
For instance, the well-known logic $\K$ is defined extending $\CPL$ with%
\footnote{As usual, classical monomodal logics can be equivalently defined based on a single modality $\Box$, taking $\diam$ 
defined as $\diam A \tto \neg\Box\neg A$ (or viceversa).
Since $\Box$ and $\diam$ are not interdefinable in minimal modal logics, we prefer to assume both $\Box$ and $\diam$ primitive also in the classical systems in order to simplify the comparison.}
\begin{center}
\axKbox \   $\Box(A \imp B) \imp (\Box A \imp \Box B)$ \qquad 
\axdual \   $\Box A \tto \neg\diam\neg A$
\qquad
\ax{$A$}\llab{\nec}\uinf{$\Box A$}\disp,
\end{center}
and $\Sfour$ is defined extending $\K$ with
\axTbox \   $\Box A \imp A$
and
\axfourbox \   $\Box A \imp \Box\Box A$.
Further classical modal logics will be introduced in Section~\ref{sec:family}.
  
  For any logic defined in the following
  (no matter whether based on classical, intuitionistic or minimal logic),
  we consider the standard notions of derivability:  
  Given a logic $\logic$ formulated in $\lan_{\mathbb M}^{\atm}$
  and formulas $A, B_1, ..., B_n$ of $\lan_{\mathbb M}^{\atm}$,
the rule 
$B_1, ..., B_n / A$ is \emph{derivable} in $\logic$
if there is a finite sequence of formulas 
ending with $A$ in which every formula is an (instance of an) axiom of $\logic$, 
or it belongs to $\{B_1, ..., B_n\}$,
or it is obtained from previous formulas by the application of a rule of $\logic$.
A formula $A$ is
  \emph{derivable} in $\logic$, 
written 
$\logic \vd A$,
if the 
rule $\emptyset/ A$ is derivable in $\logic$.
Finally,  $A$ is (locally) \emph{derivable} 
in $\logic$ from a set of formulas $\Phi$ of $\lan_{\mathbb M}^{\atm}$,
written $\Phi\vd_\logic A$, if there is a finite set 
$\{B_1, ..., B_n\}\subseteq\Phi$ such that 
$\logic \vd B_1 \land ... \land B_n \imp A$.

\subsection{Semantic preliminaries}\label{subsec:sem prel}

We shall define semantics 
 for minimal modal logics by suitably extending relational models for minimal propositional logic.
We consider to this purpose relational models for minimal propositional logic as defined in \cite{segerbergMin}: 
A \emph{minimal relational model} 
  is a tuple $\M = \langle \W, \less, \FW, \V\rangle$, 
where 
$\W$ is a non-empty set of worlds, 
$\less$ is a reflexive and transitive binary relation on $\W$, 
$\FW\subseteq\W$ is a $\less$-upward  closed set
(that is, if $w\in\FW$ and $w\less v$, then $v\in\FW$)
of so-called \emph{fallible worlds}, and
$\V : \atm \longrightarrow \pow(\W)$ is a \emph{hereditary} valuation function
(that is, if $w\in\V(p)$ and $w\less v$, then $v\in\V(p)$).
We write $v\more w$ for $w \less v$.
The forcing relation $\M, w\Vd A$ is 
inductively defined as follows:
\begin{center}
\begin{tabular}{lllll}
$\M, w \Vd p$ & iff & $w\in\V(p)$; \\
$\M, w \Vd \bot$ & iff & $w\in\FW$; \\
$\M, w \Vd B \land C$ & iff & $\M, w \Vd B$ and $\M, w \Vd C$; \\
$\M, w \Vd B \lor C$ & iff & $\M, w \Vd B$ or $\M, w \Vd C$; \\
$\M, w \Vd B \imp C$ & iff & for all $v\more w$, $\M, v \Vd B$ implies $\M, v \Vd C$. \\
\end{tabular}
\end{center}

Given a formula $A$, we say that $A$ is \emph{valid in a model} $\M$, 
written $\M\models A$,
if $\M, w \Vd A$ for all worlds $w$ of $\M$.
The same definition of validity applies to all kinds of models considered in this paper.
In the following, we simply write $w \Vd A$ when $\M$ is clear from the context.

Minimal relational models are a generalisation of the well-known intuitionistic relational models firstly introduced by Kripke~\cite{kripkeInt}.
In particular, a minimal relational model is an \emph{intuitionistic relational model} if $\FW = \emptyset$.
We point out that an alternative semantics for $\IPL$ can be obtained from minimal relational models
by preserving the fallible worlds but assuming the condition 
$\FW\subseteq\V(p)$ for all $p\in\atm$
which ensures the validity of ex falso quodlibet $\bot\imp A$.
We will consider this latter kind of restriction in 
Section~\ref{subsec:rel MK CK sem}.

  \section{Minimal K via bimodal companion}\label{sec:MK companion}
  Our first approach toward
   the definition of 
   minimal modal logics is based on 
  reductions into fusions of classical modal logics. 
  We consider to this purpose the following notions
  of fusion, 
  G{\"o}del-Johansson translation and bimodal companion.

\begin{definition}[Fusion]\label{def:fusion}
 Let $\logic_1$ and $\logic_2$ be \emph{classical} modal logics respectively defined in the languages $\lan_{\{\Box_1, \diam_1\}}^{\atm'}$
and $\lan_{\{\Box_2, \diam_2\}}^{\atm'}$
sharing the same propositional variables and propositional connectives but with disjoint sets of modalities. 
The \emph{fusion} $\logic_1\oplus\logic_2$ of $\logic_1$ and $\logic_2$ is the smallest logic in the language $\lan_{\{\Box_1, \diam_1,\Box_2, \diam_2\}}^{\atm'}$ containing $\logic_1 \cup \logic_2$
and closed under the rules of $\logic_1$ and $\logic_2$.
We denote $\lan_{\{\Box_1, \diam_1,\Box_2, \diam_2\}}^{\atm'}$ 
as $\lan_{1,2}'$.
\end{definition}

\begin{definition}[Extended G{\"o}del-Johansson translation]\label{def:translation}
Let $\atm' = \atm \cup \{f\}$, with $f\notin\atm$.
The \emph{extended G{\"o}del-Johansson translation} 
$\tr: \lan\longrightarrow
\lan_{1,2}'$
is inductively defined as follows:
\begin{center}
\begin{tabular}{rcl}
$\bot^{\tr}$ & $=$ & $\BoxI\fp$ \\

$p^{\tr}$ & $=$ & $\BoxI p$ \\

$(A \land B)^{\tr}$ & $=$ & $A^{\tr} \land B^{\tr}$ \\

$(A \lor B)^{\tr}$ & $=$ & $A^{\tr} \lor B^{\tr}$ \\

$(A \imp B)^{\tr}$ & $=$ & $\BoxI (A^{\tr} \imp B^{\tr})$ \\

$(\Box A)^{\tr}$ & $=$ & $\BoxI\BoxM A^{\tr}$ \\

$(\diam A)^{\tr}$ & $=$ & $\BoxI\diamM A^{\tr}$ \\
\end{tabular}
\end{center}
\end{definition}

\begin{definition}[Bimodal companion]\label{def:bim companion}
A fusion of classical modal logics $\logic_1\oplus\logic_2$ 
in the language 
$\lan_{1,2}'$
is the \emph{bimodal companion}of a minimal modal logic $\Mvar$ in the langiage $\lan$ if it holds:
\[\Mvar \vd A \text{ \  if and only if \ } \logic_1\oplus\logic_2 \vd A^{\tr}.\]
\end{definition}


The above translation $\tr$ is based on 
G{\"o}del's \cite{goedel1933} reduction of
intuitionistic propositional logic into $\Sfour$.
The clauses for the modal formulas extend the translation in the trivial way,
and are considered for instance in
\cite{Fairtlough:1997,wolter:1999},
while the reduction of $\bot$ into a distinguished propositional constant $\fp$ goes back to
 Johansson~\cite{Johansson:1937}.
A similar translation that employs 
Johannson's solution was already applied 
for the embedding
of a constructive modal logic into a classical multimodal logic in \cite{Fairtlough:1997}.

Given a classical modal logic $\logic$, 
we shall define its minimal counterpart $\MLvar$ 
by considering modal companions of the form $\Sfour\oplus\logic$.

\begin{definition}[Minimal counterpart of a classical logic]\label{def:minimal counterpart}
Given a classical modal logic $\logic$, 
the \emph{minimal counterpart} of $\logic$ is the logic 
$\MLvar$ in $\lan$ such that
$\Sfour\oplus\logic$ is the bimodal companion of $\MLvar$.
\end{definition}

In other words, the minimal counterpart $\MLvar$ 
of $\logic$ is the solution of the equation
\begin{center}
$(\ast)$ \quad $\MLvar \vd A$ \ if and only if \ $\Sfour\oplus\logic \vd A^{\tr}$.
\end{center}

Clearly, 
the solution of ($\ast$), if it exists,  is unique
(modulo equivalent axiomatisations).
Indeed, if both $\MLvar$ and $\MLvar'$ are solutions to ($\ast$),
then $\MLvar \vd A$ iff $\Sfour\oplus\logic \vd A^{\tr}$ iff $\MLvar' \vd A$, hence $\MLvar = \MLvar'$.
%
We start by presenting 
the minimal counterpart of the classical modal logic $\K$.

%

\begin{definition}[Minimal $\K$]
The minimal modal logic $\MK$ is defined extending
$\MPL$ with the following axioms and rule:
\begin{center}
\axKbox \   $\Box(A \imp B) \imp (\Box A \imp \Box B)$ \quad 
\axKdiam \  $\Box(A \imp B) \imp (\diam A \imp \diam B)$
\quad
\ax{$A$}\llab{\nec}\uinf{$\Box A$}\disp
\end{center}
\end{definition}

In order to prove that $\MK$ 
is the minimal counterpart of $\K$
(that is, it is the solution of the equation $(\ast)$, for $\logic$ replaced with $\K$), 
we first 
provide a semantics for $\MK$,
which is
defined by suitably extending
minimal relational models for $\MPL$ (cf.~Section~\ref{subsec:sem prel}) 
with an additional relation dealing with the modalities.

\begin{definition}
[Minimal birelational semantics]\label{def:birelational semantics}
A \emph{minimal birelational model} is a tuple
$\M = \langle \W, \less, \FW, \R, \V\rangle$, 
where $\langle \W, \less, \FW, \V\rangle$ is a minimal relational model,
and $\R$ is a binary relation on $\W$.
The forcing relation $\M, w\Vd A$ is 
inductively defined extending the clauses
for $p, \bot, \land, \lor, \imp$
in Section~\ref{subsec:sem prel}
with the following
clauses for the modalities:
\begin{center}
\begin{tabular}{lllll}
$\M, w \Vd \Box B$ & iff & for all $v\more w$, for all $u$, if $v\R u$, then $\M, u \Vd B$; \\
$\M, w \Vd \diam B $ & iff & for all $v\more w$, there is $u$ such that $v\R u$ and $\M, u \Vd B$. \\
\end{tabular}
\end{center}
\end{definition}

The semantics in Definition~\ref{def:birelational semantics} essentially coincides with 
Wijesekera's \cite{wij} semantics for
(the propositional fragment of)
Constructive Concurrent Dynamic Logic
with the only difference of the addition of the fallible worlds,
which is due to the fact that the base
models are minimal rather than intuitionistic.

The generalisation of the standard clauses for $\Box, \diam$ in the relational semantics
to all $\less$-successors
is the simplest way to preserve the hereditary property of minimal relational models.

\begin{proposition}[Hereditary property]
Given a minimal birelational model $\M$ and a formula $A$ of $\lan$, for every worlds $w$ and $v$ of $\M$ it holds:
If $w \Vd A$ and $w \less v$, then $v\Vd A$.
\end{proposition}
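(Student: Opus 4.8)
The plan is to argue by induction on the structure of the formula $A$, following the standard proof of heredity for minimal propositional logic and adding two new cases for the modalities $\Box$ and $\diam$. Fix a minimal birelational model $\M = \langle \W, \less, \FW, \R, \V\rangle$ and worlds $w, v$ with $w \less v$ and $w \Vd A$; the goal is $v \Vd A$.

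First I would dispatch the base cases directly from the model conditions: for $A = p$ this is nothing but hereditariness of $\V$, and for $A = \bot$ it is the $\less$-upward closure of $\FW$. The cases $A = B \land C$ and $A = B \lor C$ reduce immediately to the induction hypothesis applied to the subformulas forced at $w$. For $A = B \imp C$ I would invoke transitivity of $\less$: any $u$ with $u \more v$ also satisfies $u \more w$, so the forcing clause for $\imp$ at $w$ transfers verbatim to $v$ (note that this case uses transitivity of $\less$ but not the induction hypothesis).

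The two modal cases are where one might expect trouble, but they turn out to be the easiest, and I would make explicit why: the forcing clauses for $\Box B$ and $\diam B$ are already phrased as quantifications over \emph{all} $\less$-successors of the evaluation world, so the argument is once more just ``if $u \more v$ then $u \more w$ by transitivity'', and the clause holding at $w$ yields the clause at $v$ with no appeal to the induction hypothesis at all. This is precisely the design feature pointed out in the paragraph immediately preceding the proposition.

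I do not expect any genuine obstacle: the whole proof rests only on hereditariness of $\V$, $\less$-upward closure of $\FW$, and transitivity of $\less$, all of which are part of the definition of a minimal birelational model. The one remark worth stating explicitly is the observation just made about the modal clauses, since it explains why the generalised clauses (quantifying over all $\less$-successors) are adopted in place of the naive ones that would evaluate $\Box$ and $\diam$ at $w$ alone.
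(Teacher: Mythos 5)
Your proof is correct and follows exactly the route the paper intends: the paper's proof is simply ``Immediate by induction on the construction of $A$,'' and your argument spells out that induction, using hereditariness of $\V$, $\less$-upward closure of $\FW$, and transitivity of $\less$ in just the right places. Your side remark that the $\imp$, $\Box$ and $\diam$ cases need only transitivity (and not the induction hypothesis) is accurate and matches the paper's motivation for quantifying the modal clauses over all $\less$-successors.
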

\begin{proof}
Immediate by induction on the construction of $A$.
\end{proof}

\begin{theorem}[Soundness]\label{th:soundness MK}
For all $A\in\lan$,
if $A$ is derivable in $\MK$, then $A$ is valid in every minimal birelational model.
\end{theorem}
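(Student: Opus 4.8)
The plan is to argue by induction on the length of a derivation of $A$ in $\MK$. It suffices to verify two things: that every axiom of $\MK$ is valid in every minimal birelational model $\M = \langle \W, \less, \FW, \R, \V\rangle$, and that each of the two rules --- modus ponens and \nec --- preserves validity, in the sense that if every premise of an instance is valid in $\M$ then so is its conclusion. Throughout I would use freely the reflexivity and transitivity of $\less$ together with the Hereditary Property established just above.

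For the axioms of $\MPL$ one evaluates each schema at an arbitrary world $w$ of $\M$ and unfolds the forcing clauses of Section~\ref{subsec:sem prel}. The only mild subtlety is that the clause for $\imp$ quantifies over all $\more$-successors: so, for instance, to check $A \imp (B \imp A)$ at $w$ one passes to an arbitrary $v \more w$ with $v \Vd A$, then to an arbitrary $u \more v$ with $u \Vd B$, and concludes $u \Vd A$ from $v \Vd A$ and $v \less u$ by the Hereditary Property; the schemata for $\land$ and $\lor$ are immediate, and the remaining implicational schemata of $\MPL$ go through in the same way. For modus ponens, if $\M \models A \imp B$ and $\M \models A$ then, at an arbitrary $w$, reflexivity gives $w \more w$, so the clause for $\imp$ at $w$ together with $w \Vd A$ yields $w \Vd B$; hence $\M \models B$.

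For the rule \nec, if $\M \models A$ then $A$ is forced at every world of $\M$, in particular at every $u$ with $v \R u$ for $v$ ranging over the $\more$-successors of a given $w$; by the clause for $\Box$ this gives $w \Vd \Box A$ for every $w$, so $\M \models \Box A$. For the axiom \axKbox\ I would fix $w$, take $v \more w$ with $v \Vd \Box(A \imp B)$, then $u \more v$ with $u \Vd \Box A$, and finally $z \more u$ together with $t$ such that $z \R t$; transitivity yields $z \more v$, so $v \Vd \Box(A \imp B)$ gives $t \Vd A \imp B$ and $u \Vd \Box A$ gives $t \Vd A$, whence $t \Vd B$ by reflexivity at $t$. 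As $z$ and $t$ were arbitrary this shows $u \Vd \Box B$, hence $v \Vd \Box A \imp \Box B$, and since $v$ was arbitrary, $w \Vd \Box(A \imp B) \imp (\Box A \imp \Box B)$. The case of \axKdiam\ is parallel: after fixing $w$, $v \more w$ with $v \Vd \Box(A \imp B)$, $u \more v$ with $u \Vd \diam A$, and an arbitrary $z \more u$, the clause for $\diam$ applied to $u \Vd \diam A$ produces some $t$ with $z \R t$ and $t \Vd A$; transitivity gives $z \more v$, so $t \Vd A \imp B$, hence $t \Vd B$; as $z$ was arbitrary this yields $u \Vd \diam B$, so $v \Vd \diam A \imp \diam B$, and $w \Vd \Box(A \imp B) \imp (\diam A \imp \diam B)$.

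I do not expect any genuinely hard step. The only point that requires care is the bookkeeping in the two modal axioms: one must keep track of the chain of $\more$-steps and invoke transitivity of $\less$ at exactly the right moment, so that a hypothesis such as $\Box(A \imp B)$, assumed low in the chain, remains applicable at the world where $A \imp B$ is finally needed. Everything else is a routine unfolding of the forcing clauses supported by the Hereditary Property.
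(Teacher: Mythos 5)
Your proposal is correct and follows essentially the same route as the paper: check that the axioms are valid and that modus ponens and \nec\ preserve validity, using the hereditary property and the reflexivity/transitivity of $\less$ to handle the nested implication clauses. Your treatment of \axKbox\ and \axKdiam\ is just a more explicit unfolding of the same argument the paper gives (the paper leaves the propositional part and the heredity bookkeeping implicit).
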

\begin{proof}
By showing that all the axioms and rules of $\MK$ are valid, respectively validity preserving, in every model for $\MK$.
We consider the modal principles.

\begin{enumerate}[leftmargin=*, align=left]
\item[(\axKbox)]
Suppose that $w \Vd \Box(A \imp B)$ and $w \Vd \Box A$. Then for all $v, u$, if $w \less v$ and $v \R u$, then
$u \Vd A \imp B$ and $u \Vd A$, hence $u \Vd B$. 
Thus, $w \Vd\Box B$.
Therefore $\M\models \Box(A \imp B)\imp(\Box A\imp\Box B)$.

\item[(\axKdiam)] Suppose that $w \Vd \Box(A \imp B)$ and $w \Vd \diam A$. Then for all $v$, if $w \less v$, then
there is $u$ such that $v \R u$ and $u \Vd A$. Moreover, $u \Vd A \imp B$.
Thus $u \Vd B$, hence $w \Vd \diam B$.
Therefore $\M\models \Box(A \imp B)\imp(\diam A\imp\diam B)$.

\item[(\nec)] Suppose that $\M\models A$. Then for all $w, v, u$, if $w \less v$ and $v \R u$, then $u\Vd A$, hence $w \Vd \Box A$. Therefore $\M\models \Box A$.\qedhere
\end{enumerate}
\end{proof}



We now present a completeness proof for $\MK$ with respect to the minimal birelational semantics
by the canonical model technique.
The proof adapts the completeness proof for $\WK$ by \wij~\cite{wij} with the addition of the impossible worlds.
For every logic $\logic$ in $\lan$, we call $\logic$-\emph{full} any set $\Phi$ of formulas of $\lan$ 
such that 
if $\Phi\vd_{\logic} A$, then $A\in\Phi$ (closure under derivation), and 
if $A\lor B\in\Phi$, then $A\in\Phi$ or $B\in\Phi$ (disjunction property).
Moreover, for every set of formulas $\Phi$, 
we denote $\boxm\Phi$ the set $\{A \mid \Box A\in\Phi\}$.
One can prove in a standard way the following lemma.

\begin{lemma}[Lindenbaum]\label{lemma:lind rel}
For every set $\Phi$ of formulas of $\lan$, there is a $\MK$-full set $\Psi$ such that $\Phi\subseteq\Psi$.
Moreover, if $\Phi\not\vd_{\MK} A$, then there is a $\MK$-full set $\Psi$ such that $\Phi\subseteq\Psi$ and $A\notin\Psi$.
\end{lemma}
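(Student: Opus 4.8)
The plan is to run the usual Lindenbaum saturation, arranging for the disjunction property alongside closure under derivation. The first claim is immediate, since the whole language $\lan$ is vacuously $\MK$-full (every formula lies in it, and a disjunction in it trivially has a disjunct in it); so the real content is the second claim. Assume $\Phi\not\vd_{\MK}A$. As $\lan$ is countable, fix an enumeration $C_0,C_1,C_2,\dots$ of its formulas and define an increasing chain $\Psi_0\subseteq\Psi_1\subseteq\cdots$ by $\Psi_0=\Phi$ and, at stage $n+1$: if $\Psi_n\cup\{C_n\}\vd_{\MK}A$, put $\Psi_{n+1}=\Psi_n$; otherwise, if $C_n$ is not a disjunction put $\Psi_{n+1}=\Psi_n\cup\{C_n\}$, and if $C_n=D\lor E$ choose $F\in\{D,E\}$ with $\Psi_n\cup\{C_n,F\}\not\vd_{\MK}A$ and put $\Psi_{n+1}=\Psi_n\cup\{C_n,F\}$. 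Finally set $\Psi=\bigcup_n\Psi_n$.

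Two easy facts make this work. The first is a deduction theorem for $\MK$: $\Gamma\cup\{B\}\vd_{\MK}A$ iff $\Gamma\vd_{\MK}B\imp A$. With the notion of local derivability adopted here (reducing $\Gamma\vd_{\MK}A$ to $\MK\vd B_1\land\cdots\land B_n\imp A$), this is merely the $\MPL$-provable interchange between $B_1\land\cdots\land B_n\land B\imp A$ and $B_1\land\cdots\land B_n\imp(B\imp A)$, so the necessitation rule poses no obstacle. The second is a disjunction lemma: if $\Gamma\cup\{D\lor E\}\not\vd_{\MK}A$, then $\Gamma\cup\{D\lor E,D\}\not\vd_{\MK}A$ or $\Gamma\cup\{D\lor E,E\}\not\vd_{\MK}A$; this is the contrapositive of an application of the $\lor$-elimination axiom $(D\imp A)\imp((E\imp A)\imp(D\lor E\imp A))$ together with the deduction theorem, and it is exactly what licenses the choice of $F$.

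It then remains to verify the four properties of $\Psi$. Monotonicity gives $\Phi\subseteq\Psi$. By induction on $n$, using the disjunction lemma in the disjunctive case, one obtains $\Psi_n\not\vd_{\MK}A$ for all $n$; since derivations use finitely many premisses, $\Psi\not\vd_{\MK}A$, hence $A\notin\Psi$. For closure under derivation, if $\Psi\vd_{\MK}C$ and $C=C_n$, then $\Psi_n\cup\{C_n\}\not\vd_{\MK}A$ --- otherwise the deduction theorem, monotonicity and modus ponens would yield $\Psi\vd_{\MK}A$ --- so the adding branch was taken at stage $n+1$ and $C_n\in\Psi_{n+1}\subseteq\Psi$. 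For the disjunction property, if $D\lor E\in\Psi$ then $\Psi\vd_{\MK}D\lor E$, and by the same reasoning the adding branch was taken at the stage with $C_n=D\lor E$, where a disjunct $F\in\{D,E\}$ was inserted; hence $D\in\Psi$ or $E\in\Psi$.

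The only point that needs genuine care is the coupling of the two saturation demands: one must establish closure under derivation first (or simultaneously) in order to know that any disjunction ending up in $\Psi$ was actually processed by the adding branch and thus had a disjunct added; and one must double-check that the deduction theorem is really available despite the presence of the necessitation rule --- which it is, precisely because local derivability here is defined through theoremhood of a single implication rather than through a Hilbert-style derivation with side premisses.
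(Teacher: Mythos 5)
Your proof is correct, and it is exactly the standard Lindenbaum saturation argument that the paper invokes without proof ("one can prove in a standard way"): the observations that local derivability via a single implication gives the deduction theorem despite necessitation, and that the $\lor$-elimination axiom of $\MPL$ yields the disjunct-choice step, are precisely what make the standard construction go through here. Nothing further is needed.
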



\begin{definition}\label{def:rel seg}
For every logic $\logic$ in $\lan$, 
an $\logic$-\emph{relational \seg},  
or just \emph{\seg}, is a pair $(\Phi, \U)$, 
where $\Phi$ is an $\logic$-full set, and $\U$ is a set of $\logic$-full sets such that:
\begin{itemize}
\item if $\Box A\in\Phi$, then for all $\Psi\in\U$, $A\in\Psi$; and
\item if $\diam A\in\Phi$, then there is $\Psi\in\U$ such that $A\in\Psi$.
\end{itemize}
\end{definition}

The following holds.

\begin{lemma}\label{lemma:rel seg}
For every $\MK$-full set $\Phi$, there exists an $\MK$-relational segment $(\Phi, \U)$.
\end{lemma}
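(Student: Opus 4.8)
The plan is to construct the set component $\U$ by a Lindenbaum-style saturation, so that the whole statement reduces to Lemma~\ref{lemma:lind rel}. Given an $\MK$-full set $\Phi$, recall that $\boxm\Phi = \{A \mid \Box A \in \Phi\}$, and I would simply take $\U$ to be the collection of \emph{all} $\MK$-full sets $\Psi$ with $\boxm\Phi \subseteq \Psi$. By construction $\U$ is a set of $\MK$-full sets, and it is non-empty: applying Lemma~\ref{lemma:lind rel} to the set $\boxm\Phi$ produces an $\MK$-full set $\Psi_0 \supseteq \boxm\Phi$, hence $\Psi_0 \in \U$.

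It then remains to verify the two clauses of Definition~\ref{def:rel seg}. The clause for $\Box$ is immediate from the definition of $\U$: if $\Box A \in \Phi$ then $A \in \boxm\Phi$, and since $\boxm\Phi \subseteq \Psi$ for every $\Psi \in \U$, we get $A \in \Psi$. For the clause for $\diam$, assume $\diam A \in \Phi$; I would apply Lemma~\ref{lemma:lind rel} to the set $\boxm\Phi \cup \{A\}$ to obtain an $\MK$-full set $\Psi$ with $\boxm\Phi \cup \{A\} \subseteq \Psi$; then $\Psi \in \U$ because $\boxm\Phi \subseteq \Psi$, and $A \in \Psi$, which is exactly what the clause asks. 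Therefore $(\Phi, \U)$ is an $\MK$-relational segment, and the lemma follows.

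I do not expect a real obstacle here. The instructive contrast is with the intuitionistic Wijesekera setting, where the analogue for $\WK$ requires showing that $\boxm\Phi \cup \{A\}$ is \emph{consistent} whenever $\diam A \in \Phi$ --- an argument that uses the axiom $\neg\diam\bot$ together with \axKdiam\ and \nec. Over minimal logic there is no consistency requirement built into the notion of full set (a full set may well contain $\bot$), so a full set extending $\boxm\Phi \cup \{A\}$ exists unconditionally and the modal axioms are not needed for this lemma. The only point that needs a little care is the $\Box$-clause, and choosing $\U$ to be the family of \emph{all} full supersets of $\boxm\Phi$ --- rather than a single hand-picked witness per $\diam$-formula --- keeps that bookkeeping trivial and at the same time guarantees $\U \neq \emptyset$. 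The genuine work of \axKbox, \axKdiam\ and \nec\ will instead appear later, in the truth lemma for the canonical model, where one needs the converse implications, e.g.\ that $\boxm\Phi \not\vd_{\MK} A$ whenever $\Box A \notin \Phi$.
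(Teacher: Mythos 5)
Your proof is correct and takes essentially the same approach as the paper: you set $\U$ to be the family of $\MK$-full supersets of $\boxm\Phi$ (the paper's $\U$ is just the subcollection of yours whose members additionally witness some $\diam$-formula of $\Phi$), the $\Box$-clause holds by construction, and the $\diam$-clause is discharged exactly as in the paper by applying Lemma~\ref{lemma:lind rel} to $\boxm\Phi\cup\{A\}$, with no consistency side condition needed over minimal logic.
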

\begin{proof}
Given a $\MK$-full set $\Phi$, we define
$\U = \{\Psi \ \MK\textup{-full} \mid \boxm{\Phi}\subseteq\Psi \textup{ and } B\in\Psi
\textup{ for some } \diam B \in \Phi \}$.
Then by definition, for all $\Box A\in\Phi$ and all $\Psi\in\U$, $A\in\Psi$.
Moreover, suppose that $\diam A \in \Phi$.
By Lemma~\ref{lemma:lind rel}, there is an $\MK$-full set $\Psi$ such that $\boxm{\Phi}\cup\{A\}\subseteq\Psi$,
then $A\in\Psi$ and $\Psi\in\U$. 
Hence $(\Phi,\U)$ is an $\MK$-segment.
\end{proof}

\begin{definition}\label{def:can model rel}
For every logic $\logic$ in $\lan$,
the \emph{canonical birelational model} for $\logic$ is
the tuple $\Mc = \langle \Wc, \lessc, \FWc, \Rc, \Vc \rangle$,
where:
\begin{itemize}
\item $\Wc$ is the class of all $\logic$-relational segments;
\item for all $(\Phi, \U), (\Psi, \VV)\in\Wc$,  $(\Phi, \U) \lessc (\Psi, \VV)$ 
if and only if $\Phi\subseteq\Psi$;
\item for all $(\Phi, \U)\in\Wc$,  $(\Phi,\U)\in\FWc$ if and only if  $\bot\in\Phi$;
\item for all $(\Phi, \U), (\Psi, \VV)\in\Wc$,  $(\Phi, \U) \Rc (\Psi, \VV)$ 
if and only if $\Psi\in\U$;
\item for all $(\Phi, \U)\in\Wc$,  $(\Phi,\U)\in\Vc(p)$ if and only if $p\in\Phi$.
\end{itemize}
\end{definition}

It is easy to see that the canonical birelational model for $\MK$ is a minimal birelational model
(Definition~\ref{def:birelational semantics}).
We prove the following lemma.

\begin{lemma}\label{lemma:can model rel}
Let $\Mc = \langle \Wc, \lessc, \FWc, \Rc, \Vc \rangle$ be the canonical birelational model for $\MK$. Then for all $(\Phi, \U)\in\Wc$ and all $A \in\lan$, 
$(\Phi, \U)\Vd A$ if and only if $A\in\Phi$.
\end{lemma}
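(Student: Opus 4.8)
The plan is to prove the statement by induction on the structure of $A$, following the standard canonical-model argument for Wijesekera-style birelational semantics, now also accommodating the fallible worlds. The base cases are immediate from Definition~\ref{def:can model rel}: $(\Phi,\U)\Vd p$ iff $(\Phi,\U)\in\Vc(p)$ iff $p\in\Phi$, and $(\Phi,\U)\Vd\bot$ iff $(\Phi,\U)\in\FWc$ iff $\bot\in\Phi$. For $A=B\land C$ and $A=B\lor C$ the induction step reduces to routine facts about $\MK$-full sets: closure under derivation gives $B,C\in\Phi$ iff $B\land C\in\Phi$, and ($B\in\Phi$ or $C\in\Phi$) implies $B\lor C\in\Phi$, while the disjunction property supplies the converse for $\lor$.

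For $A=B\imp C$, if $B\imp C\in\Phi$ and $(\Psi,\VV)$ is any segment with $\Phi\subseteq\Psi$ and $(\Psi,\VV)\Vd B$, then $B\in\Psi$ by the induction hypothesis, whence $C\in\Psi$ by closure of $\Psi$ under derivation, so $(\Psi,\VV)\Vd C$; thus $(\Phi,\U)\Vd B\imp C$. Conversely, if $B\imp C\notin\Phi$ then $\Phi\cup\{B\}\not\vdash_{\MK}C$ --- otherwise currying in $\MPL$ plus closure of $\Phi$ under derivation would put $B\imp C$ in $\Phi$ --- so Lemma~\ref{lemma:lind rel} yields an $\MK$-full $\Psi\supseteq\Phi\cup\{B\}$ with $C\notin\Psi$, which Lemma~\ref{lemma:rel seg} extends to a segment $(\Psi,\VV)\in\Wc$; then $(\Phi,\U)\lessc(\Psi,\VV)$ and, by the induction hypothesis, $(\Psi,\VV)\Vd B$ while $(\Psi,\VV)\not\Vd C$, so $(\Phi,\U)\not\Vd B\imp C$.

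The modal cases carry the real content. For $A=\Box B$: if $\Box B\in\Phi$, then for every $(\Psi,\VV)\more(\Phi,\U)$ and every segment $(\Theta,\ZZ)$ with $\Theta\in\VV$ we have $\Box B\in\Psi$, so the first clause of Definition~\ref{def:rel seg}, applied to the segment $(\Psi,\VV)$, gives $B\in\Theta$ and hence $(\Theta,\ZZ)\Vd B$ by the induction hypothesis; thus $(\Phi,\U)\Vd\Box B$. If $\Box B\notin\Phi$, the key point is $\boxm\Phi\not\vdash_{\MK}B$: from $\MK\vdash C_1\land\dots\land C_n\imp B$ with all $\Box C_i\in\Phi$ one would obtain, by \nec\ and $n$ applications of \axKbox, that $\MK\vdash\Box C_1\land\dots\land\Box C_n\imp\Box B$, forcing $\Box B\in\Phi$. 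So Lemma~\ref{lemma:lind rel} gives an $\MK$-full $\Theta\supseteq\boxm\Phi$ with $B\notin\Theta$; then $(\Phi,\U\cup\{\Theta\})$ is still an $\MK$-segment (adding $\Theta$ preserves the first clause of Definition~\ref{def:rel seg} because $\boxm\Phi\subseteq\Theta$, and the second clause is untouched), it is a $\lessc$-successor of $(\Phi,\U)$, and via Lemma~\ref{lemma:rel seg} it $\Rc$-sees a segment $(\Theta,\ZZ)$, for which the induction hypothesis gives $(\Theta,\ZZ)\not\Vd B$; hence $(\Phi,\U)\not\Vd\Box B$. The case $A=\diam B$ is dual: if $\diam B\in\Phi$ then for every $(\Psi,\VV)\more(\Phi,\U)$ the second clause of Definition~\ref{def:rel seg} (note $\diam B\in\Psi$) yields some $\Theta\in\VV$ with $B\in\Theta$, realised by Lemma~\ref{lemma:rel seg} as a segment $(\Theta,\ZZ)$ that $(\Psi,\VV)$ reaches, and the induction hypothesis gives $(\Theta,\ZZ)\Vd B$, so $(\Phi,\U)\Vd\diam B$; if $\diam B\notin\Phi$ then for every $\diam C\in\Phi$ one has $\boxm\Phi\cup\{C\}\not\vdash_{\MK}B$ --- otherwise \nec, \axKbox\ and \axKdiam\ would yield $\diam B\in\Phi$ --- so Lemma~\ref{lemma:lind rel} provides $\MK$-full sets $\Xi_C\supseteq\boxm\Phi\cup\{C\}$ with $B\notin\Xi_C$, and then $\VV=\{\Xi_C\mid\diam C\in\Phi\}$ makes $(\Phi,\VV)$ an $\MK$-segment and a $\lessc$-successor of $(\Phi,\U)$ none of whose $\Rc$-successors forces $B$, so $(\Phi,\U)\not\Vd\diam B$.

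The step I expect to be the main obstacle is getting the two modal ``only if'' directions exactly right. The temptation is to refute $\Box B$ (resp.\ $\diam B$) directly at $(\Phi,\U)$, but the given neighbourhood $\U$ need not contain a full set witnessing the refutation; the fix is the Wijesekera manoeuvre of passing to a fresh $\lessc$-successor with a purpose-built neighbourhood, and the delicate bookkeeping is to verify that the modified pair still satisfies both clauses of Definition~\ref{def:rel seg} and that the required underivability facts ($\boxm\Phi\not\vdash_{\MK}B$, resp.\ $\boxm\Phi\cup\{C\}\not\vdash_{\MK}B$) genuinely follow from \nec, \axKbox\ and \axKdiam. The remaining cases are routine, and the fallible worlds play no special role beyond the $\bot$ base case.
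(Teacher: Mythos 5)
Your proof is correct and follows essentially the same canonical-model argument as the paper: the same base and propositional cases, the same underivability facts ($\boxm\Phi\not\vdash_{\MK}B$ and $\boxm\Phi\cup\{C\}\not\vdash_{\MK}B$ via \nec, \axKbox, \axKdiam), and the same move of passing to a fresh $\lessc$-successor with a purpose-built neighbourhood in the modal refutation cases. The only (harmless) variation is in refuting $\Box B$: you extend the existing neighbourhood to $\U\cup\{\Theta\}$, whereas the paper constructs a fresh neighbourhood consisting of the refuting set together with witnesses for each $\diam C\in\Phi$; both constructions satisfy Definition~\ref{def:rel seg}, so the argument goes through either way.
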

\begin{proof}
By induction on the construction of $A$.
For the base 
 case $A = p$ and the inductive cases $A = B \land C, B\lor C$ the proof is immediate.
We show the other cases, writing $\vd$ for $\vd_{\MK}$.

\begin{enumerate}[leftmargin=*, align=left]
\item[($A = \bot$)] $(\Phi, \U)\Vd \bot$ iff $(\Phi, \U)\in\FWc$ iff, by definition, $\bot\in\Phi$.

\item[($A = B \imp C$)] 
Suppose that $B\imp C\in\Phi$, and assume $(\Phi, \U)\lessc(\Psi, \VV)$ and $(\Psi, \VV)\Vd B$.
By definition, $\Phi\subseteq\Psi$, thus $B\imp C\in\Psi$.
Moreover by \ih, $B\in\Psi$, hence $C\in\Psi$, thus by \ih, $(\Psi, \VV)\Vd C$.
Therefore $(\Phi, \U)\Vd B \imp C$.
Now suppose that $B\imp C\notin\Phi$.
Then $\Phi\not\vd B\imp C$, thus $\Phi\cup\{B\}\not\vd C$.
By Lemma~\ref{lemma:lind rel}, there is $\Psi$ 
$\MK$-full such that $\Phi\cup\{B\}\subseteq\Psi$ and $C\notin\Psi$.
Then by Lemma~\ref{lemma:rel seg} and Definition~\ref{def:can model rel}, there is an $\MK$-segment 
$(\Psi, \VV)$, and $(\Psi, \VV)\in\Wc$.
Thus by definition, $(\Phi, \U)\lessc(\Psi, \VV)$, and by \ih, $(\Psi, \VV)\Vd B$ and $(\Psi, \VV)\not\Vd C$.
Therefore $(\Phi, \U)\not\Vd B\imp C$.


\item[($A = \Box B$)] 
Suppose that $\Box B\in\Phi$.
Then for all $(\Psi, \VV)\morec(\Phi,\U)$, $\Box B\in\Psi$.
Moreover by definition, if $(\Psi, \VV)\Rc(\Theta, \ZZ)$,
then $\Theta\in\VV$,
hence by definition of segment, $B\in\Theta$.
Then by \ih, $(\Theta, \ZZ) \Vd B$, therefore $(\Phi, \U)\Vd \Box B$. 
Now suppose that $\Box B\notin\Phi$.
Then $\boxm{\Phi}\not\vd B$ 
(indeed, if $\boxm{\Phi}\vd B$, 
then $\vd C_1 \land ... \land C_n \imp B$ for some $\Box C_1, ..., \Box C_n\in\Phi$,
then by \nec, $\vd \Box(C_1 \land ... \land C_n \imp B)$,
and by \axKbox, $\vd \Box(C_1 \land ... \land C_n) \imp \Box B$;
since $\vd \Box C_1 \land ... \land \Box C_n \imp \Box(C_1 \land ... \land C_n)$,
we have $\vd \Box C_1 \land ... \land \Box C_n \imp \Box B$,
hence $\Phi\vd\Box B$, therefore $\Box B\in\Phi$,
against the assumption).
By Lemma~\ref{lemma:lind rel}, there is $\Psi$ $\MK$-full such that $\boxm\Phi\subseteq\Psi$ and $B\notin\Psi$.
We define $\VV = \{\Psi\} \cup \{\Theta \ \MK\textup{-full} \mid \boxm{\Phi}\subseteq\Theta \textup{ and } C\in\Theta \textup{ for some } \diam C \in \Phi\}$.
Given that, by Lemma~\ref{lemma:lind rel}, such a set $\Theta$ exists for every $\diam C \in \Phi$,
we have that $(\Phi, \VV)$ is an $\MK$-segment.
Furthermore, by Lemma~\ref{lemma:rel seg}, there exists an $\MK$-segment $(\Psi, \ZZ)$,
hence since $\Psi\in\VV$, by definition, $(\Phi, \VV)\Rc(\Psi, \ZZ)$.
Moreover, since $B\notin\Psi$, by \ih, $(\Psi, \ZZ)\not\Vd B$, then since $(\Phi, \U)\less(\Phi, \VV)$,
we have
$(\Phi, \U)\not\Vd\Box B$.


\item[($A = \diam B$)]
Suppose that $\diam B\in\Phi$. Then for all $(\Psi, \VV)\morec(\Phi,\U)$, $\diam B\in\Psi$.
Thus by Definition~\ref{def:rel seg}, there is $\Theta\in\VV)$ such that $B\in\Theta$, 
and by Lemma~\ref{lemma:rel seg}, there is a segment $(\Theta, \ZZ)\in\Mc$.
Moreover, by definition, $(\Psi, \VV)\Rc(\Theta, \ZZ)$, and by \ih, $(\Theta, \ZZ)\Vd B$.
It follows that $(\Phi,\U)\Vd\diam B$.
Now suppose that $\diam B\notin\Phi$.
Then for every $\diam C\in\Phi$,
$\boxm{\Phi}\cup\{C\}\not\vd B$ 
(indeed, if $\boxm{\Phi}\cup\{C\}\vd B$, 
then $\vd D_1 \land ... \land D_n \land C \imp B$ for some $\Box D_1, ..., \Box D_n\in\Phi$,
thus $\vd D_1 \land ... \land D_n \imp (C \imp B)$,
hence $\vd \Box(D_1 \land ... \land D_n) \imp \Box (C \imp B)$,
then by \axKdiam\ and valid principles,
$\vd \Box D_1 \land ... \land \Box D_n \imp (\diam C \imp \diam  B)$,
so 
$\vd \Box D_1 \land ... \land \Box D_n \land \diam C \imp \diam  B$,
which implies 
$\Phi\vd \diam B$,
hence, finally, $\diam B\in\Phi$,
against the assumption).
We define $\VV = \{\Psi \ \MK\textup{-full} \mid \boxm{\Phi}\subseteq\Psi, B \notin \Psi \textup{ and } C\in\Psi \textup{ for some } \diam C \in \Phi\}$.
By Lemma~\ref{lemma:lind rel},
such a set $\Psi$ exists for every $\diam C \in \Psi$. 
It is easy to see that $(\Phi, \VV)$ is a $\MK$-segment,
hence $(\Phi, \VV)\in\Wc$.
Moreover, by definition, for all $(\Psi, \ZZ)$ such that
$(\Phi, \VV)\Rc(\Psi, \ZZ)$,
$B\notin\Psi$, thus by \ih, $(\Psi, \ZZ)\not\Vd B$.
Given that $(\Phi, \U)\less(\Phi, \VV)$,
we obtain $(\Phi, \U)\not\Vd\diam B$.\qedhere
\end{enumerate}
\end{proof}

%
%
%
%

\begin{theorem}[Completeness]\label{th:compl MK}
For all $A\in\lan$,
if $A$ is valid in every minimal birelational model, then $A$ is derivable in $\MK$.
\end{theorem}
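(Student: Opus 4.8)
The plan is to prove the contrapositive via the canonical birelational model for $\MK$ constructed above. So suppose $A$ is \emph{not} derivable in $\MK$; I want to produce a minimal birelational model falsifying $A$. Since $\MK \not\vd A$ means $\emptyset \not\vd_{\MK} A$, the Lindenbaum Lemma~\ref{lemma:lind rel} (applied with $\Phi = \emptyset$) yields a $\MK$-full set $\Phi$ with $A \notin \Phi$.

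Next I would move to the canonical model. By Lemma~\ref{lemma:rel seg} the $\MK$-full set $\Phi$ extends to an $\MK$-relational segment $(\Phi, \U)$, which by Definition~\ref{def:can model rel} is a world of the canonical birelational model $\Mc = \langle \Wc, \lessc, \FWc, \Rc, \Vc \rangle$ for $\MK$. Before applying the truth lemma I would record explicitly that $\Mc$ is a bona fide minimal birelational model in the sense of Definition~\ref{def:birelational semantics}: $\lessc$ is reflexive and transitive because it is induced by set inclusion, $\FWc$ is $\lessc$-upward closed since $(\Phi,\U)\in\FWc$ iff $\bot\in\Phi$ and $\bot$ is preserved along $\subseteq$, and $\Vc$ is hereditary for the analogous reason on atoms; this is the "easy to see" remark made just before Lemma~\ref{lemma:can model rel}. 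Then the Truth Lemma~\ref{lemma:can model rel} gives $(\Phi,\U)\Vd B \iff B\in\Phi$ for every $B\in\lan$, so from $A\notin\Phi$ we get $\Mc,(\Phi,\U)\not\Vd A$, hence $\Mc\not\models A$. Thus $A$ is not valid in every minimal birelational model, and contraposition closes the argument.

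Since all the heavy lifting — the Lindenbaum construction, the existence of segments, and especially the truth lemma — is already in place, there is no real obstacle here; the whole proof is a two-line assembly of the preceding lemmas. The only points deserving a sentence of care are the verification that the canonical structure actually meets the frame conditions of a minimal birelational model (so that soundness-style semantic notions legitimately apply to it) and the trivial observation that $\emptyset$ sits inside every full set, which is what lets the Lindenbaum step start from the hypothesis $\MK\not\vd A$.
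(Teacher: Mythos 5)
Your proposal is correct and follows the paper's own proof essentially verbatim: contraposition, the Lindenbaum Lemma to get an $\MK$-full set omitting $A$, Lemma~\ref{lemma:rel seg} to obtain a segment, the observation that the canonical structure is a minimal birelational model, and the Truth Lemma~\ref{lemma:can model rel} to falsify $A$ at that world. The only difference is that you spell out the frame-condition check that the paper dismisses as "easy to see", which is fine but not a different argument.
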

\begin{proof}
Suppose that 
$\MK\not\vd A$.
Then by Lemma~\ref{lemma:lind rel}, there is an $\MK$-full set $\Psi$ such that $A\notin\Psi$,
hence by Lemma~\ref{lemma:rel seg}, 
there exists an $\MK$-segment $(\Psi,\U)$. 
By Definition~\ref{def:can model rel}, $(\Psi,\U)$ belongs to the canonical model $\Mc$ for $\MK$,
then by Lemma~\ref{lemma:can model rel}, $(\Psi,\U)\not\Vd A$.
Since $\Mc$ is a minimal birelational model, 
we conclude that it is not the case that 
$A$ is valid in all models for $\MK$.
\end{proof}

Based on this semantic characterisation, 
we now show that $\MK$ is the solution to the equation ($\ast$) for $\logic$ replaced with $\K$,
hence, according to our criterion, it is the minimal counterpart of classical $\K$.

\begin{theorem}\label{th:companion MK}
For all $A\in\lan$, 
$A$ is derivable in $\MK$ if and only if $A^{\tr}$ is derivable in $\Sfour\oplus\K$.
\end{theorem}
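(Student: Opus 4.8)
The plan is to prove the biconditional semantically, using the translation $\tr$ to relate the birelational semantics of $\MK$ (Definition~\ref{def:birelational semantics}) to the relational semantics of the fusion $\Sfour\oplus\K$. A relational model of $\Sfour\oplus\K$ is a tuple $\langle\W,\less,\R,\V\rangle$ in which $\less$ is a preorder (interpreting $\BoxI,\diamone$) and $\R$ is arbitrary (interpreting $\BoxM,\diamM$); I will use that $\Sfour\oplus\K$ is sound and Kripke complete with respect to this class of models, which follows from the standard completeness-transfer theorem for fusions~\cite{YB}. The link between the two kinds of model is given by two mutually converse constructions.

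First I would prove the following lemma. \emph{(i)} Given a minimal birelational model $\M=\langle\W,\less,\FW,\R,\V\rangle$, let $\M^{\bullet}=\langle\W,\less,\R,\V^{\bullet}\rangle$ be the relational model over $\atm'$ with $\V^{\bullet}(p)=\V(p)$ for $p\in\atm$ and $\V^{\bullet}(\fp)=\FW$; since $\less$ is reflexive and transitive and $\R$ is arbitrary, $\M^{\bullet}$ is a model of $\Sfour\oplus\K$, and $\M,w\Vd A$ iff $\M^{\bullet},w\Vd A^{\tr}$ for every world $w$ and every $A\in\lan$. \emph{(ii)} Given a relational model $\M'=\langle\W,\less,\R,\V\rangle$ of $\Sfour\oplus\K$, let $(\M')^{\circ}=\langle\W,\less,\FW^{\circ},\R,\V^{\circ}\rangle$ with $\V^{\circ}(p)=\{w\mid\M',w\Vd\BoxI p\}$ and $\FW^{\circ}=\{w\mid\M',w\Vd\BoxI\fp\}$; reflexivity and transitivity of $\less$ make $\V^{\circ}$ hereditary and $\FW^{\circ}$ $\less$-upward closed, so $(\M')^{\circ}$ is a minimal birelational model, and $(\M')^{\circ},w\Vd A$ iff $\M',w\Vd A^{\tr}$. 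Both halves go by induction on $A$: the clauses $p^{\tr}=\BoxI p$ and $\bot^{\tr}=\BoxI\fp$ are matched because hereditariness of $\V(p)$ and $\less$-upward closure of $\FW$ (together with reflexivity of $\less$) make the $\BoxI$-prefix inert at atoms and $\bot$ in direction (i), while in direction (ii) $\V^{\circ}$ and $\FW^{\circ}$ are defined precisely so as to read off a leading $\BoxI$; the clauses for $\imp,\Box,\diam$ follow by unfolding the $\BoxI$ in $(A\imp B)^{\tr}=\BoxI(A^{\tr}\imp B^{\tr})$, $(\Box A)^{\tr}=\BoxI\BoxM A^{\tr}$, $(\diam A)^{\tr}=\BoxI\diamM A^{\tr}$ against the corresponding birelational clauses, the universal-over-$\less$-then-$\R$ pattern matching in each case (in particular the ``for all $\more$, there is an $\R$-successor'' shape of the $\diam$-clause is exactly what $\BoxI\diamM$ produces).

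The two directions then follow at once. If $\MK\vd A$, then by soundness (Theorem~\ref{th:soundness MK}) $A$ is valid in every minimal birelational model; hence for any relational $\Sfour\oplus\K$-model $\M'$ the model $(\M')^{\circ}$ validates $A$, so by (ii) $\M'$ validates $A^{\tr}$; as $\M'$ was arbitrary and $\Sfour\oplus\K$ is complete, $\Sfour\oplus\K\vd A^{\tr}$. If $\MK\not\vd A$, then by completeness (Theorem~\ref{th:compl MK}) there are a minimal birelational model $\M$ and a world $w$ with $\M,w\not\Vd A$; by (i) $\M^{\bullet},w\not\Vd A^{\tr}$, and since $\M^{\bullet}$ is a model of $\Sfour\oplus\K$, soundness of $\Sfour\oplus\K$ gives $\Sfour\oplus\K\not\vd A^{\tr}$.

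The main obstacle is the reliance on the soundness and, above all, the Kripke completeness of $\Sfour\oplus\K$, which is not established in this paper and must be imported as the standard completeness-transfer result for fusions of canonical modal logics~\cite{YB}. The remainder is bookkeeping, the only care needed being to check that $\M\mapsto\M^{\bullet}$ and $\M'\mapsto(\M')^{\circ}$ really yield legitimate models ($\V^{\bullet}(\fp)=\FW$ is $\less$-upward closed as required; $\V^{\circ}$ is hereditary and $\FW^{\circ}$ is $\less$-upward closed by transitivity of $\less$) and that the induction in the lemma handles the $\diam$-clause correctly. A fully self-contained alternative for the left-to-right direction would instead induct on $\MK$-derivations, translating axioms and rules one by one; there the delicate step is showing that the $\tr$-images of the two modal axioms of $\MK$ are theorems of $\Sfour\oplus\K$, which needs the auxiliary fact $\Sfour\oplus\K\vd A^{\tr}\tto\BoxI A^{\tr}$ (for all $A\in\lan$) together with the reflexivity and transitivity principles for $\BoxI$ and the normality of $\BoxM$, applied so as not to consume the outer $\BoxI$-prefixes.
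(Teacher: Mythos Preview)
Your proposal is correct and takes essentially the same approach as the paper: both directions are proved semantically via the same pair of model translations (your $\M^{\bullet}$ and $(\M')^{\circ}$ coincide with the paper's $\M''$ and $\M'$, respectively), using soundness/completeness of $\MK$ (Theorems~\ref{th:soundness MK} and~\ref{th:compl MK}) together with Kripke soundness and completeness of $\Sfour\oplus\K$. The only cosmetic difference is that the paper argues both directions by contraposition, whereas you phrase the $\MK\vd A\Rightarrow\Sfour\oplus\K\vd A^{\tr}$ direction directly.
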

\begin{proof}
We recall that $\Sfour\oplus\K$ is sound and complete with respect to the class of all classical birelational models $\langle \W, \R_1, \R_2, \V\rangle$, where $\R_1$ and $\R_2$ are binary relations on $\W$ and $\R_1$ is reflexive and transitive.

\smallskip
\noindent
($\Rightarrow$)
Suppose that $\Sfour\oplus\K\not\vd A^{\tr}$. 
Then there are a model $\M = \langle \W, \R_1, \R_2, \V\rangle$
for $\Sfour\oplus\K$ and a world $w$ such that $\M, w\not \Vd A^{\tr}$.
We define $\M' = \langle \W, \less, \FW, \R, \V'\rangle$ over the same set $\W$ of $\M$,
where
$\less \ = \R_1$, $\R = \R_2$,
for all $p\in\atm$, $\V'(p) = \{v \mid \textup{for all } u,  v\R_1 u \textup{ implies } u\in\V(p)\}$,
and $\FW = \{v \mid \textup{for all } u,  v\R_1 u \textup{ implies } u\in\V(\fp)\}$.
It is easy to verify that $\M'$ is a minimal birelational model,
in particular $\V(p)$ and $\FW$ are $\less$-upward closed.
We show that for all $v\in\W$ and all $B\in\lan$ it holds:
\[\M', v \Vd B \text{ \ if and only if \ } \M, v \Vd B^{\tr},\]
from which it follows that $\M', w \not\Vd A$,
therefore $\MK\not\vd A$.
The proof is by induction on the construction of $B$.
The cases $B = C \land D$ and $B = C \lor D$ are immediate by \ih.
We consider the other cases.

\begin{enumerate}[leftmargin=*, align=left]
\item[($B = p$)] $\M', v \Vd p$ iff 
$v\in\V'(p)$ iff (by definition of $\V'$)
for all $u$,  $v\R_1 u$ implies $u\in\V(p)$; iff
for all $u$,  $v\R_1 u$ implies $\M, u\Vd p$; iff
$\M, v \Vd \Box_1 p$.

\item[($B = \bot$)] $\M', v \Vd \bot$ iff 
$v\in\FW$ iff (by definition of $\FW$)
for all $u$,  $v\R_1 u$ implies $u\in\V(f)$; iff
for all $u$,  $v\R_1 u$ implies $\M, u\Vd f$; iff
$\M, v \Vd \Box_1 f$.


\item[($B = C \imp D$)] $\M', v \Vd C \imp D$ iff
for all $u\more v$, $\M', u \Vd C$ implies $\M', u \Vd D$; iff
(by definition of $\less$ and \ih)
for all $u$, if $v \R_1 u$, then $\M, u \Vd C^{\tr}$ implies $\M, u \Vd D^{\tr}$; iff
for all $u$, if $v \R_1 u$, then $\M, u \Vd C^{\tr} \imp D^{\tr}$; iff
$\M, v \Vd \Box_1(C^{\tr} \imp D^{\tr})$.

\item[($B = \Box C$)] $\M', v \Vd \Box C$ iff
for all $u\more v$, for all $z$, if $u \R z$, then $\M', z \Vd C$;
iff
(by definition of $\less$ and $\R$ and \ih)
for all $u,z$, if $v \R_1 u$ and $u \R_2 z$, then $\M, z \Vd C^{\tr}$;
iff
$\M, v \Vd \Box_1\Box_2 C^{\tr}$.

\item[($B = \diam C$)] $\M', v \Vd \diam C$ iff
for all $u\more v$, there is $z$ such that $u \R z$ and $\M', z \Vd C$;
iff
(by definition of $\less$ and $\R$ and \ih)
for all $u$, if $v \R_1 u$, then there is $z$ such that  $u \R_2 z$ and $\M, z \Vd C^{\tr}$;
iff
$\M, v \Vd \Box_1\diam_2 C^{\tr}$.
\end{enumerate}

\smallskip
\noindent
($\Leftarrow$)
Suppose that $\MK\not\vd A$. 
Then there are a model $\M = \langle \W, \less, \FW, \R, \V\rangle$
for $\MK$ and a world $w$ such that $\M, w\not \Vd A$.
We define $\M'' = \langle \W, \R_1, \R_2, \V''\rangle$ over the same set $\W$ of $\M$,
where
$\R_1 = \ \less$, $\R_2 = \R$,
for all $p\in\atm$, 
$\V''(p) = \V(p)$,
and $\V''(\fp) = \FW$.
$\M''$ is a model for $\Sfour\oplus\K$.
We show that for all $v\in\W$ and all $B\in\lan$ it holds:
\[\M, v \Vd B \text{ \ if and only if \ } \M'', v \Vd B^{\tr},\]
from which it follows that $\M'', w \not\Vd A$,
therefore $\Sfour\oplus\K\not\vd A^{\tr}$.
The proof is by induction on the construction of $B$.
The cases $B = C \land D$ and $B = C \lor D$ are immediate by \ih.
We consider the other cases.

\begin{enumerate}[leftmargin=*, align=left]
\item[($B = p$)] $\M, v \Vd p$ iff $v\in\V(p)$ iff
(since $\V$ is $\less$-closed)
for all $u \more v$, $u\in\V(p)$; iff
(by definition of $\R_1$ and $\V''$)
for all $u$, if $v\R_1 u$, then $u\in\V''(p)$; iff
$\M'', v \Vd \Box_1 p$.

\item[($B = \bot$)] $\M, v \Vd \bot$ iff $v\in\FW$ iff
(since $\FW$ is $\less$-closed)
for all $u \more v$, $u\in\FW$; iff
(by definition of $\R_1$ and $\V''$)
for all $u$, if $v\R_1 u$, then $u\in\V''(f)$; iff
$\M'', v \Vd \Box_1 f$.


\item[($B = C \imp D$)] $\M, v \Vd C \imp D$ iff
for all $u\more v$, $\M, u \Vd C$ implies $\M, u \Vd D$; iff
(by definition of $\R_1$ and \ih)
for all $u$, if $v \R_1 u$, then $\M'', u \Vd C^{\tr}$ implies $\M'', u \Vd D^{\tr}$; iff
$\M'', v \Vd \Box_1(C^{\tr} \imp D^{\tr})$.

\item[($B = \Box C$)] $\M, v \Vd \Box C$ iff
for all $u\more v$, for all $z$, if $u \R z$, then $\M, z \Vd C$;
iff
(by definition of $\R_1$ and $\R_2$ and \ih)
for all $u,z$, if $v \R_1 u$ and $u \R_2 z$, then $\M'', z \Vd C^{\tr}$;
iff
$\M'', v \Vd \Box_1\Box_2 C^{\tr}$.

\item[($B = \diam C$)] $\M, v \Vd \diam C$ iff
for all $u\more v$, there is $z$ such that $u \R z$ and $\M, z \Vd C$;
iff
(by definition of $\R_1$ and $\R_2$ and \ih)
for all $u$, if $v \R_1 u$, then there is $z$ such that  $u \R_2 z$ and $\M'', z \Vd C^{\tr}$;
iff
$\M'', v \Vd \Box_1\diam_2 C^{\tr}$.\qedhere
\end{enumerate}
\end{proof}

%

\section{Minimal K via sequent calculus}\label{sec:seq MK}

From the point of view of the sequent calculi,
classical and minimal propositional logic stay in a clear and neat relation:
Given a suitable sequent calculus $\SC$ for $\CPL$,
a calculus for $\MPL$ can be obtained by restricting the rules of $\SC$
to \emph{single-succedent} sequents,
namely sequents with exactly one formula in the succedent.
This relation is particularly evident in G1-style sequent calculi \cite{Troelstra:2000}.
In this section, we extend this relation to modal logics 
and define a minimal version of $\K$ by restricting a standard G1-calculus for $\K$ to
single-succedent sequents.
We show that the resulting logic is precisely $\MK$ introduced in the previous section.
We consider the following standard definitions.

\begin{definition}\label{def:sequent}
A \emph{sequent} is a pair $\G\Seq\D$, where $\G$ and $\D$
(respectively, the \emph{antecedent} and the \emph{succedent} of the sequent)
 are finite, possibly empty multisets of formulas of $\lan$. 
A sequent $\G\Seq\D$ is interpreted as a formula of $\lan$
via the \emph{formula interpretation} $\fint$
as $\bigand\G \imp \bigor\D$
if $\G$ is non-empty, and 
as $\bigor\D$ if $\G$ is empty, where
$\bigor\emptyset$ is interpreted as $\bot$.
A \emph{sequent calculus} $\SC$ is a set of
initial sequents and sequent rules.%
\footnote{More precisely, as for axiomatic systems, we rather consider sequent and rule schemata, omitting this specification throughout the text.}
A \emph{derivation} of a sequent $\varseq$ 
in a calculus $\SC$
 is a tree where each node is labelled by a sequent, the root is labelled by $\varseq$, 
 the leaves are labelled by initial sequents 
 and each node is obtained by the immediate predecessor(s) by the application of a rule of 
 $\SC$. 
 A sequent $\varseq$ is \emph{derivable} 
in a calculus $\SC$ if there is a derivation of $\varseq$ in $\SC$.
A formula $A$ is derivable in $\SC$ if the sequent $\seq A$ 
is derivable in $\SC$.
A sequent calculus $\SC$ is a calculus for a logic $\logic$ if
for every formula $A$, $A$ is derivable in $\SC$ if and only if it is derivable in $\logic$. 
\end{definition}

%
%
%
%

In order to analyse the sequent calculi, we also consider the following standard concepts about the sequent rules. 

\begin{definition}\label{def:sequent}
A rule $R$ is \emph{admissible} in a calculus $\SC$ if
whenever the premisses of $R$ are derivable in $\SC$,
the conclusion is also derivable in $\SC$.
A formula is \emph{principal} in the application of a rule if it occurs in the conclusion and not in the premiss(es),
while it is \emph{active} if it occurs in (at least) one premiss and not in the conclusion.
Contraction rules are an exception to these definitions:
the principal and active formula is the one formula $A$
which has $n$ occurrences in the conclusion and $n + 1$ occurrences in the premiss.
All formulas which are neither principal nor active are the \emph{context}.
\end{definition}

\begin{figure}[t]
\begin{small}
\textbf{Sequent calculus $\Gone\CPL$ for $\CPL$.}

\vspace{0.2cm}
\initcl\ \ $A \Seq A$
\hfill
\ax{$\G, A_i \Seq \D$}
\llab{\llandcl}
\rlab{($i = 1, 2$)}
\uinf{$\G, A_1\land A_2 \Seq \D$}
\disp
\hfill
\ax{$\G \Seq A, \D$}
\ax{$\G \Seq B, \D$}
\llab{\rlandcl}
\binf{$\G \Seq A\land B, \D$}
\disp

\vspace{0.2cm} 
\lbotcl\ \ $\bot \Seq$
\hfill
\ax{$\G, A \Seq \D$}
\ax{$\G, B \Seq \D$}
\llab{\llorcl}
\binf{$\G, A \lor B \Seq \D$}
\disp
\hfill
\ax{$\G \Seq A_i, \D$}
\llab{\rlorcl}
\rlab{($i = 1, 2$)}
\uinf{$\G \Seq A_1\lor A_2, \D$}
\disp

\vspace{0.2cm} 
\ax{$\G \Seq A, \D$}
\ax{$\G, B \Seq \D$}
\llab{\limpcl}
\binf{$\G, A \imp B \Seq \D$}
\disp
\hfill
\ax{$\G, A\Seq B, \D$}
\llab{\rimpcl}
\uinf{$\G \Seq A\imp B, \D$}
\disp
\hfill
\ax{$\G \Seq \D$}
\llab{\lwkcl}
\uinf{$\G, A \Seq \D$}
\disp

\vspace{0.2cm} 
\ax{$\G \Seq \D$}
\llab{\rwkcl}
\uinf{$\G \Seq A, \D$}
\disp
\hfill
\ax{$\G, A, A \Seq \D$}
\llab{\lctrcl}
\uinf{$\G, A \Seq \D$}
\disp
\hfill
\ax{$\G \Seq A, A, \D$}
\llab{\rctrcl}
\uinf{$\G \Seq A, \D$}
\disp

\vspace{0.3cm}
\textbf{Sequent calculus $\Gone\MPL$ for $\MPL$.}

\vspace{0.2cm}
\minit\ \
$A \Seq A$
\hfill
\ax{$\G, A_i \Seq C$}
\llab{\mlland}
\rlab{($i = 1, 2$)}
\uinf{$\G, A_1\land A_2 \Seq C$}
\disp
\hfill
\ax{$\G \Seq A$}
\ax{$\G \Seq B$}
\llab{\mrland}
\binf{$\G \Seq A\land B$}
\disp

\vspace{0.2cm} 
\ax{$\G, A \Seq C$}
\ax{$\G, B \Seq C$}
\llab{\mllor}
\binf{$\G, A \lor B \Seq C$}
\disp
\hfill
\ax{$\G \Seq A_i$}
\llab{\mrlor}
\rlab{($i = 1, 2$)}
\uinf{$\G \Seq A_1\lor A_2$}
\disp
\hfill
\ax{$\G, A\Seq B$}
\llab{\mrimp}
\uinf{$\G \Seq A\imp B$}
\disp

\vspace{0.2cm} 
\ax{$\G \Seq A$}
\ax{$\G, B \Seq C$}
\llab{\mlimp}
\binf{$\G, A \imp B \Seq C$}
\disp
\hfill
\ax{$\G \Seq C$}
\llab{\mlwk}
\uinf{$\G, A \Seq C$}
\disp
\hfill
\ax{$\G, A, A \Seq C$}
\llab{\mlctr}
\uinf{$\G, A \Seq C$}
\disp
\end{small}
\caption{\label{fig:prop seq rules}Sequent calculi $\Gone\CPL$ and $\Gone\MPL$.}
\end{figure}

The well-known G1-sequent calculi $\Gone\CPL$ and $\Gone\MPL$ \cite{Troelstra:2000} for $\CPL$ and $\MPL$ are displayed in Figure~\ref{fig:prop seq rules}.
It is easy to see that $\Gone\MPL$ corresponds to the single-succedent restriction of $\Gone\CPL$.
In particular, the initial sequent \lbotcl\ and the rule \rctrcl\
are dropped in $\Gone\MPL$ as they have respectively no formula in the succedent, and two occurrences of the active formula in the succedent of the premiss.
\rwkcl\ is also dropped as it requires 
either no formula in the succedent of the premiss or at least two formulas in the succedent of the conclusion.
Concerning the other rules,
the right context is removed from the sequents with an active or principal formula in the succedent, this is the case for instance of initial sequents \minit\ and of the rule \mrland, as well as of the left premiss of the rule \mlimp.
In the other rules, the right context is converted from an arbitrary multiset $\D$ to a single formula $C$.

In order to extend the multi- vs. single-succedent relation to modal logics,
we consider the G1-sequent calculus $\Gone\K$ for $\K$,
defined extending $\Gone\CPL$ with the  modal rules
\ruleKboxcl\ and \ruleKdiamcl\ in Figure~\ref{fig:K seq rules}.
In these rules and the following, 
given a multiset $\G = A_1, ..., A_n$,
 we denote $\Box\G$ and $\diam\G$ the multisets
$\Box A_1, ..., \Box A_n$ and $\diam A_1, ..., \diam A_n$, respectively. 
As for axiomatic systems, sequent calculi for $\K$ are more commonly defined in terms of $\Box$ only.
Here we consider a formulation of the calculus with both $\Box$ and $\diam$ explicit in order to better display the relation with minimal modal logics, where $\Box$ and $\diam$ are not interdefinable.
The rules \ruleKboxcl\ and \ruleKdiamcl\ 
for $\K$ with explicit $\Box$ and $\diam$ can be found e.g. in \cite{indr:2021book}.

\begin{figure}[t]
\begin{small}
\centering
\ax{$\Sigma \Seq A, \Pi$}
\llab{\ruleKboxcl}
\uinf{$\Box\Sigma \Seq \Box A,  \diam\Pi$}
\disp
\hfill
\ax{$\Sigma, A \Seq \Pi$}
\llab{\ruleKdiamcl}
\uinf{$\Box\Sigma, \diam A \Seq \diam\Pi$}
\disp
\hfill
\ax{$\Sigma \Seq A$}
\llab{\rulemKbox}
\uinf{$\Box\Sigma \Seq \Box A$}
\disp
\hfill
\ax{$\Sigma, A \Seq B$}
\llab{\rulemKdiam}
\uinf{$\Box\Sigma, \diam A \Seq \diam B$}
\disp
\end{small}
\caption{\label{fig:K seq rules}Modal rules for $\Gone\K$ and $\Gone\MK$.} 
\end{figure}

On the basis of $\Gone\K$,
we now define the calculus $\Gone\MK$ as the single-succedent
restriction of $\Gone\K$.
As a result, 
$\Gone\MK$ contains the rules of $\Gone\MPL$
and the modal rules \rulemKbox\ and \rulemKdiam.
Indeed, in the rule \ruleKboxcl, the succedent of the conclusion must have a $\Box$-formula $\Box A$ and can have additional $\diam$-formulas. Then, its single-succedent restriction only preserves $\Box A$.
Concerning \ruleKdiamcl/\rulemKdiam, the consequent of the conclusion of \ruleKdiamcl\ has an arbitrary number of $\diam$-formulas. 
Correspondingly, the consequent of the conclusion of \rulemKdiam\ has exactly one $\diam$-formula.

In the remaining part 
of this section,
we show that $\Gone\MK$ is equivalent to the logic $\MK$ defined in the previous section.
The proof is based on the following theorem, 
which entails that the addition of the cut rule to $\Gone\MK$ does not extend the set of derivable sequents.
To do its length, the proof of Theorem~\ref{th:cut MK}
is presented in the appendix.

\begin{restatable}{theorem}{ThCutMK}\label{th:cut MK}
The following rule \cut\ is admissible in $\Gone\MK$:
\begin{center}
\ax{$\G \seq A$}
\ax{$\Sigma, A \seq C$}
\llab{\mcut}
\binf{$\G, \Sigma \seq C$}
\disp
\end{center}
\end{restatable}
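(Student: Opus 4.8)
The plan is to obtain \cut\ as a special case of the admissibility of a generalised \multicut\ rule, namely: whenever $\Pi \seq A$ and $\Lambda \seq C$ are derivable in $\Gone\MK$ and $A$ occurs $k \geq 1$ times in the multiset $\Lambda$, then $\Pi, \Lambda^\ast \seq C$ is derivable, where $\Lambda^\ast$ is $\Lambda$ with every occurrence of $A$ deleted; the case $k = 1$, $\Lambda = \Sigma, A$, is exactly the rule \cut. Passing to \multicut\ is the standard device for coping with the explicit contraction rule \mlctr\ of $\Gone\MK$: a single cut does not permute above a contraction on the cut formula, whereas a cut on all of its occurrences does.

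I would argue by a primary induction on the size of the cut formula $A$ and a secondary induction on the sum of the heights of the derivations of the two premisses, with the usual case analysis on their last rules. If either premiss is an instance of \minit, the conclusion follows from the other premiss by multiset equality together with, possibly, some contractions. If $A$ is not principal in the last rule of the left premiss — in particular if that rule is \mlwk, \mlctr\ or a left propositional rule — I permute the cut into the premiss(es) of that rule, reintroducing any deleted context by \mlwk\ and appealing to the secondary induction hypothesis; and symmetrically for the right premiss. The \multicut\ formulation absorbs two otherwise awkward subcases into this pattern: when the last rule of the right premiss is \mlctr\ with $A$ principal, its premiss has $k+1$ occurrences of $A$, so the secondary hypothesis applies with the same cut formula; and when it is \mlwk\ introducing an occurrence of $A$, we either recurse with $k-1$ occurrences or, if $k = 1$, conclude directly by \mlwk. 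Finally, when $A$ is principal in the last rules of both premisses and $A$ is built by a propositional connective, I use the standard reductions to cuts on the immediate subformulas of $A$, justified by the primary induction hypothesis.

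The genuinely new cases are those in which $A$ is a modal formula principal in both premisses, and here the single-succedent shape of the modal rules keeps the analysis short. A $\Box$-formula can be introduced in the succedent only by \rulemKbox, a $\diam$-formula only by \rulemKdiam; and since the antecedents of the conclusions of \rulemKbox\ and \rulemKdiam\ consist entirely of principal formulas, a modal cut formula occurring in the antecedent of the right premiss is automatically principal there as soon as that premiss ends with a modal rule. Hence exactly three configurations arise: $A = \Box B$ with the left premiss ending in \rulemKbox\ and the right in \rulemKbox; $A = \Box B$ with the left ending in \rulemKbox\ and the right in \rulemKdiam; and $A = \diam B$ with both ending in \rulemKdiam. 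In each case I delete the two modal inferences, perform a cut on the smaller formula $B$ on the corresponding premisses (primary induction hypothesis), and re-apply a single instance of \rulemKbox\ (first configuration) or \rulemKdiam\ (the other two); the antecedent- and succedent-shapes match exactly because \rulemKdiam\ carries precisely one $\diam$-formula on each side.

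I expect the main difficulty to be organisational rather than conceptual: handling the contraction/cut interaction correctly — which is precisely what the \multicut\ reformulation is designed to neutralise — and then checking exhaustively, with careful multiset and context bookkeeping, that the permutation steps go through for every propositional rule of $\Gone\MPL$. By contrast the modal reductions are immediate once the observation about principality in the single-succedent modal rules has been made.
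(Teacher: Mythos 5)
Your proposal is correct and follows essentially the same route as the paper: the paper also reduces \cut\ to a Gentzen-style multicut (there called \mix, deleting all displayed occurrences $A^n$), argues by the same lexicographic induction on (complexity of the cut formula, sum of premiss heights), absorbs the contraction and weakening interactions exactly as you describe, and treats precisely your three modal configurations (\rulemKbox--\rulemKbox, \rulemKbox--\rulemKdiam, \rulemKdiam--\rulemKdiam) by cutting on the subformula $B$ and reapplying one modal inference. Your observation that the modal rules cannot occur with a non-principal cut formula is also made explicitly in the paper's case analysis, so nothing is missing.
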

\begin{proof}
The proof is in the appendix.
\end{proof}

\begin{theorem}
For all $A\in\lan$, 
$A$ is derivable in $\Gone\MK$ if and only if 
$A$ is derivable in $\MK$.
\end{theorem}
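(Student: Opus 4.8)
The plan is to prove the two directions by a standard translation argument, using the formula interpretation $\fint$ of sequents as the bridge and the admissibility of cut (Theorem~\ref{th:cut MK}) as the only non-trivial input. For the ``if'' direction I would show that every theorem of $\MK$ is derivable in $\Gone\MK$ by simulating an axiomatic derivation; for the ``only if'' direction I would prove, by induction on the height of a $\Gone\MK$-derivation, that whenever $\G\seq C$ is derivable in $\Gone\MK$, the formula $\fint(\G\seq C)$ (that is, $\bigand\G\imp C$, or just $C$ when $\G$ is empty) is derivable in $\MK$; specialising to $\seq A$ gives the claim.

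For the ``if'' direction, first each axiom schema of $\MPL$ is derivable as a sequent $\seq A$ already in $\Gone\MPL\subseteq\Gone\MK$ by routine applications of $\minit$, $\mrimp$, $\mlimp$, $\mrland$, etc. Modus ponens is simulated via cut: from $\Gone\MK\vd\;\seq A$ and $\Gone\MK\vd\;\seq A\imp B$, using the derivable sequent $A\imp B, A\seq B$ (from $\minit$ and $\mlimp$) and two applications of the cut rule of Theorem~\ref{th:cut MK} one obtains $\seq B$. Necessitation is simulated by applying $\rulemKbox$ with empty $\Sigma$ to $\seq A$, yielding $\seq\Box A$. Finally, $\axKbox$ is derived by applying $\rulemKbox$ to the derivable sequent $A\imp B, A\seq B$, obtaining $\Box(A\imp B),\Box A\seq\Box B$, followed by two applications of $\mrimp$; and $\axKdiam$ is derived analogously, applying $\rulemKdiam$ to $A\imp B, A\seq B$ to get $\Box(A\imp B),\diam A\seq\diam B$, followed by two applications of $\mrimp$.

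For the ``only if'' direction, the induction on derivation height treats the initial sequents and the propositional and structural rules by elementary minimal-logic manipulations of the formula interpretations (weakening $\bigand\G\imp C$ to $\bigand\G\land A\imp C$, contraction, and the interpretations of the $\land$-, $\lor$- and $\imp$-rules are all derivable in $\MPL$). The two modal rules are the only cases that use the modal axioms. For $\rulemKbox$, the premiss $\Sigma\seq A$ gives $\MK\vd\bigand\Sigma\imp A$; by $\nec$ and $\axKbox$, $\MK\vd\Box\bigand\Sigma\imp\Box A$; and since $\MK\vd\bigand\Box\Sigma\imp\Box\bigand\Sigma$ (obtained by iterating $\Box X\land\Box Y\imp\Box(X\land Y)$, which is itself derivable from $\MPL\vd X\imp(Y\imp X\land Y)$ together with $\nec$ and $\axKbox$), we get $\MK\vd\bigand\Box\Sigma\imp\Box A$, the interpretation of the conclusion; when $\Sigma$ is empty one uses $\nec$ directly. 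For $\rulemKdiam$, the premiss $\Sigma, A\seq B$ gives $\MK\vd\bigand\Sigma\imp(A\imp B)$; by $\nec$ and $\axKbox$, $\MK\vd\Box\bigand\Sigma\imp\Box(A\imp B)$; by $\axKdiam$, $\MK\vd\Box(A\imp B)\imp(\diam A\imp\diam B)$; chaining these and composing with $\bigand\Box\Sigma\imp\Box\bigand\Sigma$ yields $\MK\vd\bigand\Box\Sigma\land\diam A\imp\diam B$, again the interpretation of the conclusion, with the empty-$\Sigma$ case handled directly by $\nec$ and $\axKdiam$.

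The hard work has essentially been delegated: cut admissibility, the one genuinely delicate ingredient, is supplied by Theorem~\ref{th:cut MK}. What remains is bookkeeping, and the point requiring a little care is the treatment of the antecedents as multisets under $\fint$: one must fix a parenthesisation of $\bigand\G$ and check that the minimal-logic derivations used in the induction are insensitive to it up to provable equivalence, and one must keep track of the degenerate empty-antecedent cases in the modal rules. The conceptual core is the observation that the ``aggregation'' principle $\bigand\Box\Sigma\imp\Box\bigand\Sigma$ is $\MK$-derivable — this is precisely what makes $\rulemKbox$ and $\rulemKdiam$ sound with respect to $\MK$, and it rests only on $\MPL\vd X\imp(Y\imp X\land Y)$ together with $\axKbox$ and $\nec$.
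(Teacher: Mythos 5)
Your proposal is correct and follows essentially the same route as the paper: one direction translates each $\Gone\MK$-rule into an $\MK$-derivation of its formula interpretation using \nec, \axKbox\ and the aggregation principle $\Box C_1\land\Box C_2\imp\Box(C_1\land C_2)$, and the other simulates the axioms and rules of $\MK$ in $\Gone\MK$, deriving \axKbox\ and \axKdiam\ from $A\imp B, A\seq B$ via the modal rules and handling modus ponens through the cut admissibility of Theorem~\ref{th:cut MK}. This matches the paper's proof (and its Figures of derivations) in all essential steps.
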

\begin{proof}
($\Rightarrow$) 
$\fint(A \seq A) = A \imp A$ is derivable in $\MK$, 
moreover we can show that
for all rules $\varseq_1, ..., \varseq_n/\varseq$ of $\Gone\MK$,
the rule $\fint(\varseq_1), ..., \fint(\varseq_n)/\fint(\varseq)$
is derivable in $\MK$,
where $\fint$ is the formula interpretation of sequents as defined in Definition~\ref{def:sequent}.
For the propositional rules the proof is standard.
We show in Figure~\ref{fig:der MK}  the derivations of the modal rules,
considering the representative cases where $\Sigma = C_1, C_2$.
The cases where $\Sigma$ contains less or more formulas are a simplification or a generalisation of these cases.

($\Leftarrow$)
The proof consists in showing that all axioms and rules of $\MK$ are derivable, respectively admissible in $\Gone\MK$.
We omit the derivations of the propositional axioms which are standard. The derivations 
of the modal axioms and rule are displayed in Figure~\ref{fig:der G1MK}.
\end{proof}

\begin{figure}[t]
\begin{small}
\begin{multicols}{2}
1.  $A \imp (B \imp A \land B)$ \hfill ($\MPL$) \quad \

2. $\Box(A \imp (B \imp A \land B))$ \hfill (1, \nec) \quad \

3. $\Box A \imp \Box(B \imp A \land B)$ \hfill (2, \axKbox) \quad \

4. $\Box A \imp (\Box B \imp \Box(A \land B))$ \hfill (3, \axKbox) \quad \

5. $\Box A \land \Box B \imp \Box(A \land B)$ \hfill (4, $\MPL$) \quad \

\columnbreak

1. $C_1 \land C_2 \imp A$ \hfill (premiss of \rulemKbox) \

2.  $\Box(C_1 \land C_2 \imp A)$ \hfill (1, \nec) \

3.  $\Box (C_1 \land C_2) \imp \Box A$ \hfill (2, \axKbox, mp) \

4.  $\Box C_1 \land \Box C_2 \imp \Box (C_1 \land C_2)$ \hfill (derivable) \

5.  $\Box C_1 \land \Box C_2 \imp \Box A$ \hfill (3, 4) \
\end{multicols}


\begin{center}
\begin{tabular}{lr}
1. $C_1 \land C_2 \land A \imp B$ & (premiss of \rulemKdiam) \\
2. $C_1 \land C_2 \imp(A \imp B)$ & (1, $\MPL$) \\
3.  $\Box(C_1 \land C_2 \imp (A \imp B))$ & (2, \nec) \\
4.  $\Box (C_1 \land C_2) \imp \Box (A \imp B)$ & (3, \axKbox, mp) \\
5.  $\Box C_1 \land \Box C_2 \imp \Box (C_1 \land C_2)$ & (derivable) \\
6.  $\Box C_1 \land \Box C_2 \imp \Box (A \imp B)$ & (4, 5) \\
7.  $\Box C_1 \land \Box C_2 \imp (\diam A \imp \diam B)$ & (6, \axKdiam) \\
8.  $\Box C_1 \land \Box C_2 \land \diam A \imp \diam B$ & (7, $\MPL$) \\
\end{tabular}
\end{center}
\end{small}
\caption{\label{fig:der MK}Derivations in $\MK$.}
\end{figure}

\begin{figure}[t]
\begin{small}
\ax{$(\ast) \ A \imp B, A \seq B$}
\rlab{\rulemKbox}
\uinf{$\Box(A \imp B), \Box A \seq \Box B$}
\rlab{\mrimp}
\uinf{$\Box(A \imp B) \seq \Box A \imp \Box B$}
\rlab{\mrimp}
\uinf{$\seq \Box(A \imp B) \imp (\Box A \imp \Box B)$}
\disp
\hfill
\ax{$(\ast) \ A \imp B, A \seq B$}
\rlab{\rulemKdiam}
\uinf{$\Box(A \imp B), \diam A \seq \diam B$}
\rlab{\mrimp}
\uinf{$\Box(A \imp B) \seq \diam A \imp \diam B$}
\rlab{\mrimp}
\uinf{$\seq \Box(A \imp B) \imp (\diam A \imp \diam B)$}
\disp
\ \
\ax{$\seq A$}
\rlab{\rulemKbox}
\uinf{$\seq \Box A$}
\disp

\vspace{0.2cm}
\ax{$\seq A$}
\ax{$\seq A \imp B$}
\ax{$(\ast) \ A \imp B, A \seq B$}
\rlab{\cut}
\binf{$A \seq B$}
\rlab{\cut}
\binf{$\seq B$}
\disp
\hfill
\ax{$A \seq A$}
\ax{$B \seq B$}
\rlab{\mlwk}
\uinf{$B, A \seq B$}
\rlab{\mlimp}
\binf{$(\ast) \ A \imp B, A \seq B$}
\disp
\end{small}
\caption{\label{fig:der G1MK}Derivations in $\Gone\MK$.}
\end{figure}

\section{Relating minimal K and constructive K}\label{sec:rel MK CK}

The minimal modal logic $\MK$ just defined is strictly related to the constructive modal logic $\CK$
studied in the literature.
In particular, from an axiomatical perspective, $\CK$ coincides with the extension of $\MK$ with ex falso quodlibet $\bot\imp A$, exactly as $\IPL$ amounts to $\MPL$ + $\bot\imp A$.
Except for the different propositional base, $\MK$ and $\CK$ share the same modal principles.
In this section, we show that $\CK$ is also  strictly related to $\MK$
both from a semantical and from a proof theoretical point of view.

%
%
%
%

\subsection{Semantics}\label{subsec:rel MK CK sem}

As recalled in Section \ref{subsec:sem prel},
disregarding the modalities, there are two 
ways to transform relational models for $\MPL$ into relational models for $\IPL$:
(1) 
assuming $\FW = \emptyset$, 
thus obtaining Kripke's intuitionistic relational models,
or
(2) preserving the 
fallible worlds but
ensuring the validity of 
ex falso quodlibet 
by
assuming $\FW\subseteq\V(p)$ for all $p\in\atm$.
Interestingly, the two ways are equivalent for propositional logic,
as they both provide a semantics for $\IPL$,
but they are not equivalent in presence of the modalities.
In particular,
if applied to minimal birelational models,
the restriction (1) gives relational models for $\WK$ as defined in \cite{wij}.
By contrast, a suitable adaptation of (2) which esures the validity of $\bot\imp A$ also
 in presence of the modalities
gives birelational models for $\CK$.

\begin{definition}[Constructive birelational semantics]\label{def:const birel model}
A minimal birelational model $\M = \langle \W, \less, \FW, \R, \V\rangle$ is a 
\emph{constructive birelational model} if for all $w \in \FW$ it holds:
\begin{itemize}
\item[(i)] $w \in\V(p)$ for all $p \in \atm$;
\item[(ii)] if $w \R v$, then $v\in\FW$;
\item[(iii)] there is $v$ such that $w \R v$.
\end{itemize}
\end{definition}

Analogous constructive birelational models for $\CK$ were defined in \cite{Mendler1}.
The models in Definition~\ref{def:const birel model} are slightly different because of the latter condition (iii)
which is not considered in \cite{Mendler1}.
We observe however that this (or a similar) condition
is necessary in order to ensure the validity of ex falso quodlibet over the whole language $\lan$.
To see this, consider a model $\M$ 
satisfying (i) and (ii) but not (iii), where $\W = \FW = \{w\}$, $w \less w$ and not $w \R w$.
It is easy to verify that $w \Vd \bot$ but $w \not\Vd \diam p$, and hence $\M\not\models \bot\imp\diam p$.

We prove that $\CK$ is sound and complete
with respect to constructive birelational models.

\begin{theorem}\label{th:sound CK}
For all $A\in\lan$,
if $A$ is derivable in $\CK$, then $A$ is valid in every constructive birelational model $\M$.
In particular, $\M\models\bot\imp B$ for every $B\in\lan$.
\end{theorem}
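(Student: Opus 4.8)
The plan is to reduce the bulk of the argument to the already-established soundness of $\MK$ (Theorem~\ref{th:soundness MK}) and to isolate the single new ingredient, which is the validity of \emph{ex falso quodlibet} $\bot\imp B$; the latter is exactly the ``in particular'' clause, so proving it does double duty.

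First I would observe that every constructive birelational model (Definition~\ref{def:const birel model}) is in particular a minimal birelational model (Definition~\ref{def:birelational semantics}); hence, by Theorem~\ref{th:soundness MK}, all axioms and rules of $\MK$ remain valid, respectively validity preserving, over the smaller class of constructive birelational models. Since $\CK$ is obtained from $\MK$ simply by adding the axiom $\bot\imp A$ (equivalently, $\CK$ extends $\IPL=\MPL+(\bot\imp A)$ with the same modal axioms \axKbox, \axKdiam\ and rule \nec\ as $\MK$), it then remains only to check that $\bot\imp B$ is valid in every constructive birelational model, for every $B\in\lan$.

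The heart of the proof will be the lemma: \emph{in any constructive birelational model $\M$, if $w\in\FW$ then $\M,w\Vd B$, for all $B\in\lan$}, proved by induction on the construction of $B$. The atomic case uses clause (i) of Definition~\ref{def:const birel model}, the case $B=\bot$ is immediate, and the cases $B=C\land D$, $B=C\lor D$ follow at once from the induction hypothesis. For the three remaining cases I would use that $\FW$ is $\less$-upward closed, so every $v\more w$ is again fallible: for $B=C\imp D$ the induction hypothesis then gives $\M,v\Vd D$; for $B=\Box C$, clause (ii) keeps every $\R$-successor of $v$ inside $\FW$, so the induction hypothesis yields forcing of $C$ there; and for $B=\diam C$, clause (iii) provides an $\R$-successor $u$ of $v$, which clause (ii) places in $\FW$, whence $\M,u\Vd C$ by the induction hypothesis. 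Having this lemma, I would conclude: for any world $w$ and any $v\more w$ with $\M,v\Vd\bot$ (i.e.\ $v\in\FW$) the lemma gives $\M,v\Vd B$, so $\M,w\Vd\bot\imp B$; as $w$ is arbitrary, $\M\models\bot\imp B$, and together with the reduction of the previous paragraph this establishes the soundness of $\CK$.

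The only subtle step is the $\diam$ case of this induction: a fallible world forces $\diam C$ only if it has some $\R$-successor, so condition (iii) of Definition~\ref{def:const birel model} is genuinely needed here — this is precisely the point illustrated by the counterexample given right after that definition, where dropping (iii) breaks validity of $\bot\imp\diam p$. Everything else is routine, either inherited from Theorem~\ref{th:soundness MK} or a direct unwinding of the forcing clauses.
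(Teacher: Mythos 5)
Your proposal is correct and follows essentially the same route as the paper: the paper likewise reduces everything to Theorem~\ref{th:soundness MK} and then shows, by induction on $A$, that every fallible world forces every formula, using clause (i) for atoms, $\less$-upward closure of $\FW$ for $\imp$, clause (ii) for $\Box$, and clauses (ii) and (iii) together for $\diam$. Your remark on the indispensability of condition (iii) in the $\diam$ case matches the paper's discussion following Definition~\ref{def:const birel model}.
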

\begin{proof}
The proof extends the proof of Theorem~\ref{th:soundness MK} by showing that $\M\models\bot\imp A$
for every $A$.
Suppose that $w \Vd \bot$. Then $w \in \FW$. 
We show by induction on the construction of $A$ that $w \Vd A$.
($A = p$) By Definition~\ref{def:const birel model}, item (i), $w\in\V(p)$, then $w \Vd p$.
($A = \bot$) By hypothesis.
($A = B \land C, B \lor C$) Immediate by applying the \ih.
($A = B \imp C$) Immediate by \ih\ and $\less$-upward closure of $\FW$.
($A = \Box B$)
Suppose $w \less v$. Since $\FW$ is $\less$-upward closed, $v \in\FW$.
Then by Definition~\ref{def:const birel model}, item (ii),
for all $u$ such that $v \R u$, $u\in\FW$.
Then by \ih, $u \Vd B$, therefore $w \Vd \Box B$.
($A = \diam B$)
Suppose $w \less v$. Since $\FW$ is $\less$-upward closed, $v \in\FW$.
Then by Definition~\ref{def:const birel model}, item (iii),
there is $u$ such that $v \R u$, and by item (ii), $u\in\FW$.
Then by \ih, $u \Vd B$, therefore $w \Vd \diam B$.
\end{proof}


We now prove that $\CK$ is complete
with respect to constructive birelational models.
First, note that Lemmas~\ref{lemma:lind rel} and \ref{lemma:rel seg} also hold for $\CK$
(in particular, for Lemma~\ref{lemma:rel seg} the proof is the same,
uniformly replacing $\MK$ with $\CK$).
We additionally prove the following lemma.

\begin{lemma}
Let $\Mc = \langle \Wc, \lessc, \FWc, \Rc, \Vc \rangle$ be the canonical birelational model for $\CK$
(Definition~\ref{def:can model rel}). 
Then for all $(\Phi, \U)\in\Wc$ and all $A \in\lan$, 
$(\Phi, \U)\Vd A$ if and only if $A\in\Phi$.
Moreover, $\Mc$
 is a constructive birelational model.
\end{lemma}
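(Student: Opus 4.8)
The plan is to prove the two assertions in turn. For the truth lemma — that $(\Phi,\U)\Vd A$ iff $A\in\Phi$ — I would observe that the argument is the proof of Lemma~\ref{lemma:can model rel} verbatim, with $\MK$ replaced throughout by $\CK$. The only properties of the logic invoked there are the Lindenbaum Lemma, the existence of segments, and the derivability of \axKbox, \axKdiam\ and the rule \nec; the first two are available for $\CK$ (as already noted, Lemmas~\ref{lemma:lind rel} and \ref{lemma:rel seg} hold with $\MK$ replaced by $\CK$), and the latter three hold since $\CK$ extends $\MK$. So the induction on the structure of $A$ goes through unchanged, and I would not reproduce it.

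For the second assertion I would check directly that $\Mc$ fulfils conditions (i)--(iii) of Definition~\ref{def:const birel model} at every fallible world, i.e.\ at every segment $(\Phi,\U)\in\Wc$ with $\bot\in\Phi$. The crux is a purely syntactic remark: since $\CK\vd\bot\imp B$ for every formula $B$, in particular $\CK\vd\bot\imp p$, $\CK\vd\bot\imp\Box\bot$ and $\CK\vd\bot\imp\diam\bot$; because $\Phi$ is $\CK$-full and $\bot\in\Phi$, closure under derivation gives $p\in\Phi$ for all $p\in\atm$ and $\Box\bot,\diam\bot\in\Phi$. Condition~(i) is then immediate from the definition of $\Vc$. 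For condition~(ii), if $(\Phi,\U)\Rc(\Psi,\VV)$ then $\Psi\in\U$, and since $\Box\bot\in\Phi$ the first clause of Definition~\ref{def:rel seg} yields $\bot\in\Psi$, i.e.\ $(\Psi,\VV)\in\FWc$. For condition~(iii), since $\diam\bot\in\Phi$ the second clause of Definition~\ref{def:rel seg} provides some $\Psi\in\U$ with $\bot\in\Psi$; as $\Psi$ is $\CK$-full, Lemma~\ref{lemma:rel seg} (for $\CK$) supplies a $\CK$-segment $(\Psi,\VV)\in\Wc$, and $\Psi\in\U$ means $(\Phi,\U)\Rc(\Psi,\VV)$, as required. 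Together with the routine verification that $\Mc$ is a minimal birelational model (reflexivity and transitivity of $\lessc$, $\lessc$-upward closure of $\FWc$, heredity of $\Vc$ — exactly as for $\MK$), this shows that $\Mc$ is a constructive birelational model.

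I expect no genuine obstacle here: the proof is essentially a rerun of the $\MK$ case plus a short syntactic calculation. The one point deserving care is condition~(iii), where seriality of $\Rc$ at fallible worlds must be witnessed by an honest \emph{segment} in $\Wc$ rather than merely by a $\CK$-full set — this is exactly what the ($\CK$-version of the) segment-existence Lemma~\ref{lemma:rel seg} delivers — and, relatedly, the need to have recorded first that Lemmas~\ref{lemma:lind rel} and \ref{lemma:rel seg}, together with the relevant derivability facts, transfer from $\MK$ to $\CK$, so that the induction for the truth lemma may be quoted without change.
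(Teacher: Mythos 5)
Your proposal is correct and follows essentially the same route as the paper: the truth lemma is quoted from the $\MK$ case with $\MK$ replaced by $\CK$, and the constructive conditions (i)--(iii) are verified at fallible segments via ex falso quodlibet (the paper just phrases this as $\Phi=\lan$) together with the segment clauses of Definition~\ref{def:rel seg} and the $\CK$-version of Lemma~\ref{lemma:rel seg} for the witness in (iii).
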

\begin{proof}
The first claim is proved exactly as Lemma~\ref{lemma:can model rel}.
For the second claim, we show that $\M$ satisfies the conditions of Definition~\ref{def:const birel model}.
Suppose that $(\Phi, \U)\in\FWc$. Then $\bot\in\Phi$.
Since $\Phi$ is closed under derivation, by ex falso quodlibet we obtain $\Phi = \lan$,
which entails the following.
(i) For all $p\in\atm$, $p\in\Phi$, hence by definition, $(\Phi, \U)\in\Vc(p)$.
(ii) $\Box\bot\in\Phi$, hence by Definition~\ref{def:rel seg}, $\bot\in\Psi$ for all $\Psi\in\U$.
Then $(\Phi,\U)\Rc(\Psi,\VV)$ entails $\bot\in\Psi$, thus $(\Psi, \VV)\in\FWc$.
(iii) $\diam\bot\in\Phi$, hence by Definition~\ref{def:rel seg}, there is $\Psi\in\U$ such that $\bot\in\Psi$.
By Lemma~\ref{lemma:rel seg} 
(which holds for $\CK$ as well),
there exists a $\CK$-segment $(\Psi,\VV)$.
Then $(\Phi,\U)\Rc(\Psi,\VV)$ and $(\Psi, \VV)\in\FWc$.
\end{proof}

As a consequence of the lemma, we obtain the completeness of $\CK$ 
(cf.~proof of Theorem~\ref{th:compl MK}).

\begin{theorem}
\label{th:compl CK}
For all $A\in\lan$,
$A$ is derivable in $\CK$ if and only if $A$ is valid in every constructive birelational model.
\end{theorem}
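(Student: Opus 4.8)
The plan is to prove the two directions of the biconditional separately, mirroring the structure of the completeness proof for $\MK$ (Theorem~\ref{th:compl MK}). The left-to-right direction --- if $A$ is derivable in $\CK$, then $A$ is valid in every constructive birelational model --- is already established: it is exactly the statement of Theorem~\ref{th:sound CK}. So nothing further is needed there, and the whole argument reduces to the converse.

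For the converse I would argue by contraposition. Suppose $\CK\not\vd A$, i.e.\ $\emptyset\not\vd_{\CK} A$. By the Lindenbaum lemma (Lemma~\ref{lemma:lind rel}, which holds for $\CK$ as remarked before the lemma above), there is a $\CK$-full set $\Psi$ with $A\notin\Psi$. By Lemma~\ref{lemma:rel seg} --- which, as noted, holds for $\CK$ as well --- there exists a $\CK$-relational segment $(\Psi,\U)$, and by Definition~\ref{def:can model rel} the pair $(\Psi,\U)$ is a world of the canonical birelational model $\Mc$ for $\CK$. By the truth lemma proved just above (the analogue of Lemma~\ref{lemma:can model rel} for $\CK$), $(\Psi,\U)\Vd A$ if and only if $A\in\Psi$, hence $(\Psi,\U)\not\Vd A$. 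Since the same lemma also asserts that $\Mc$ is a constructive birelational model (Definition~\ref{def:const birel model}), we conclude that $A$ is not valid in every constructive birelational model, which is the desired contrapositive.

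Honestly, at this point of the paper there is no real obstacle left: the delicate part --- verifying that the canonical model for $\CK$ genuinely satisfies conditions (i)--(iii) of Definition~\ref{def:const birel model}, and in particular the existence condition (iii) on $\R$-successors of fallible worlds, which distinguishes constructive from merely minimal birelational models --- has already been discharged in the preceding lemma by appeal to \emph{ex falso quodlibet} (collapsing $\Phi$ to the whole of $\lan$ whenever $\bot\in\Phi$) together with Lemma~\ref{lemma:rel seg}. So the theorem follows simply by assembling the Lindenbaum lemma, the segment lemma, and the truth lemma for the canonical model of $\CK$, in the same way Theorem~\ref{th:compl MK} was assembled from the corresponding lemmas for $\MK$.
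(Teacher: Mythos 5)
Your proposal is correct and follows exactly the route the paper intends: soundness is Theorem~\ref{th:sound CK}, and completeness is assembled from the Lindenbaum lemma, the segment lemma for $\CK$, and the truth lemma for the canonical model of $\CK$ (which also establishes that this model is constructive), just as Theorem~\ref{th:compl MK} was assembled for $\MK$. No gaps.
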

%
%
%

\subsection{Sequent calculus}

We have considered in Section~\ref{sec:seq MK} the multi- vs.~single-succedent correspondence between the sequent calculi $\Gone\CPL$ and $\Gone\MPL$.
A similar relation holds between sequent calculi for $\CPL$ and $\IPL$.
In particular, the calculus $\Gone\IPL$ for $\IPL$ can be defined by restricting $\Gone\CPL$
to sequents with \emph{at most one} formula in the succedent (cf.~\cite{Troelstra:2000}).
The resulting calculus is displayed in Figure~\ref{fig:G1IPL}, where $0 \leq |\delta| \leq 1$.
If we apply the same restriction to $\Gone\K$, we obtain the calculus
\[\Gone\WK =  \Gone\IPL + \text{\rulemKbox} + \text{\rulemKdiam} + \ax{$\Sigma, A \seq$}\uinf{$\Box\Sigma, \diam A \seq$}\disp\]
for $\WK$ defined in \cite{wij}.
By contrast, in order to obtain a calculus for $\CK$,
we need to extend $\Gone\IPL$ with the minimal modal rules only:
 \[\Gone\CK =  \Gone\IPL + \text{\rulemKbox} + \text{\rulemKdiam}.\]
 In this way we obtain the sequent calculus for $\CK$ defined 
 and proved to be cut-free in \cite{Bellin}.

\begin{figure}[t]
\begin{small}
\iinit\ \
$A \Seq A$
\hfill
\ilbot\ \ $\bot\seq$
\hfill
\ax{$\G, A_i \Seq \delta$}
\llab{\illand}
\rlab{($i = 1, 2$)}
\uinf{$\G, A_1\land A_2 \Seq \delta$}
\disp
\hfill
\ax{$\G \Seq A$}
\ax{$\G \Seq B$}
\llab{\irland}
\binf{$\G \Seq A\land B$}
\disp

\vspace{0.2cm} 
\ax{$\G, A \Seq \delta$}
\ax{$\G, B \Seq \delta$}
\llab{\illor}
\binf{$\G, A \lor B \Seq \delta$}
\disp
\hfill
\ax{$\G \Seq A_i$}
\llab{\irlor}
\rlab{($i = 1, 2$)}
\uinf{$\G \Seq A_1\lor A_2$}
\disp
\hfill
\ax{$\G, A\Seq B$}
\llab{\irimp}
\uinf{$\G \Seq A\imp B$}
\disp

\vspace{0.2cm} 
\ax{$\G \Seq A$}
\ax{$\G, B \Seq \delta$}
\llab{\ilimp}
\binf{$\G, A \imp B \Seq \delta
$}
\disp
\hfill
\ax{$\G \Seq \delta$}
\llab{\ilwk}
\uinf{$\G, A \Seq \delta$}
\disp
\hfill
\ax{$\G \Seq$}
\llab{\irwk}
\uinf{$\G \Seq A$}
\disp
\hfill
\ax{$\G, A, A \Seq \delta$}
\llab{\ilctr}
\uinf{$\G, A \Seq \delta$}
\disp
\end{small}
\caption{\label{fig:G1IPL}Sequent calculus $\Gone\IPL$.}
\end{figure}

%
%
%
%
%
%
%
%
%

\section{A framework of minimal and constructive modal logics}\label{sec:family}

We have seen that the two considered methods,
respectively based on bimodal companion and sequent calculus restriction, define the same minimal counterpart of $\K$.
In 
section, we show that the equivalence of the two methods is not a peculiarity of $\K$ only:
We apply the two methods to a family of 14 standard classical modal logics, 
and show that they are equivalent for all of them,
thus obtaining a minimal counterpart for each of these 
systems.

In order to apply our sequent-based approach,
we consider a family of classical modal logics enjoying standard
 cut-free Gentzen calculi 
(this restriction excludes well-known modal logics
for which such calculi are not available, such as $\Sfive$).
We also require the logics to have a uniform semantic characterisation,
we consider to this purpose a neighbourhood semantics that uniformly covers all considered systems,
that include both normal and non-normal modal logics.

%
%
%
%
%
%
%
%

\begin{figure}[t]
\centering
\begin{small}
\begin{tabular}{llllllcllll}
\multirow{2}{*}{\ax{$A \imp B$}
\llab{\monbox \ }
\uinf{$\Box A \imp \Box B$}
\disp} &
\axdual &  $\Box A \tto \neg\diam\neg A$ 
&  \axNbox & $\Box\top$ \\   

&
\axKbox &  $\Box(A \imp B) \imp (\Box A \imp \Box B)$ 
& \axNdiam & $\neg\diam\bot$ \\

\multirow{2}{*}{\ax{$A \imp B$}
\llab{\mondiam \ }
\uinf{$\diam A \imp \diam B$}
\disp} &
\axKdiam &  $\Box(A \imp B) \imp (\diam A \imp \diam B)$ \
&  \axTbox & $\Box A \imp A$ \\

& \axCbox &  $\Box A\land \Box B \imp \Box(A\land B)$
& \axTdiam & $A \imp \diam A$ \\

\multirow{2}{*}{\ax{$A$}
\llab{\nec \ }
\uinf{$\Box A$}
\disp} & 
 \axCdiam &  $\diam(A \lor B) \imp \diam A \lor \diam B$
&  \axPbox & $\neg\Box\bot$ \\

& \axD & $\Box A \imp \diam A$ &  \axPdiam & $\diam\top$ \\
\end{tabular}
\end{small}
\caption{\label{fig:axioms}Modal axioms and rules.}
\end{figure}

Specifically, we consider 14 
 classical modal logics that are
  axiomatically
defined in the language
$\lan$ extending 
$\CPL$, formulated in $\lan$,
with the following modal axioms and rules from Figure~\ref{fig:axioms}:
%
%
%
%
%
%
%
%
%
%
%
\begin{center}
\begin{tabular}{lllllllllllll}
$\EM$  := \axdual, \monbox &&
  $\EMD$   :=  $\EM$  + \axD &&
$\EMT$  :=   $\EM$  + \axTbox \\

$\EMN$  :=   $\EM$ + \axNbox &&  
$\EMND$  :=   $\EMN$  + \axD &&
$\EMNT$   :=   $\EMN$  + \axTbox \\  

$\EMC$  :=  $\EM$ + \axCbox &&
$\EMCD$   :=  $\EMC$ + \axD &&
$\EMCT$  :=   $\EMC$  + \axTbox \\ 

\vspace{0.2cm}
$\K$   :=   $\EM$ + \axNbox, \axCbox &&
$\KD$   := $\K$ + \axD &&
$\KT$   :=  $\K$  + \axTbox \\

  $\EMP$  :=  $\EM$ + \axPbox \\
  $\EMNP$   :=   $\EMN$ + \axPbox \\
\end{tabular}
\end{center}
Each classical modal logic is denoted $\MSigma$, where 
$\mathsf\Sigma \subseteq\{\mathsf{C,N,P,D,T}\}$ corresponds to the list of 
 axioms
 among \axCbox, \axNbox, \axPbox, \axD, \axTbox\
  extending $\EM$.
The only exceptions to this notation are $\K$, $\KD$ and $\KT$ for which we adopt the standard names.
Note
however  that $\K$ amounts to $\logicnamestyle{MCN}$, 
this  axiomatisation of $\K$ is equivalent to the more standard one with \nec\ and \axKbox\
considered in Section~\ref{subsec:syn prel} (cf.~e.g.~\cite{Chellas:1980}).
As usual, given the duality between $\Box$ and $\diam$ in classical logics, 
the above systems can be equivalently defined by
replacing \monbox, \axNbox, \axCbox, \axPbox, and \axTbox, with their 
$\diam$-versions \mondiam, \axNdiam, \axCdiam, \axPdiam, and \axTdiam\ 
 (Figure~\ref{fig:axioms}).%
\footnote{A $\Box$- and a $\diam$-formulation of the axiom \axD\ could be also considered, namely $\neg(\Box A \land \Box \neg A)$ and $\diam A \lor \diam \neg A$. 
We prefer to consider 
the more standard 
version 
$\Box A \imp \diam A$, which is adequate for both 
formulations
of the logics
and is commonly adopted in the definition of \intu\ modal logics.}
The systems $\EMCP$ and $\KP$ are not listed above 
as they are respectively
equivalent to $\EMCD$ and $\KD$
(\axPbox\ and \axD\ are interderivable given \monbox\ and \axCbox).
The resulting classical modal logics and their inclusion relations are displayed in Figure~\ref{fig:cl dyag}.
In the following, we use $\logic$ or $\MSigma$, without specifying the set $\mathsf\Sigma$, to denote any of the above classical logics.

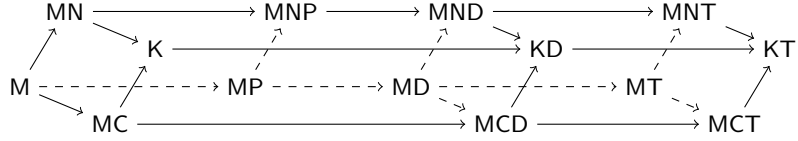
\begin{figure}[t]
\centering
\begin{small}
\begin{tikzpicture}
    \node (M) at  (0,0)  {$\EM$};
    \node (MN) at (0.6, 1) {$\EMN$};
    \node  (MC) at (1.2, -0.5) {$\EMC$};
    \node (K) at (1.8, 0.5) {$\K$};

    \node (MP) at  (3.,0)  {$\EMP$};
    \node (MNP) at (3.6, 1) {$\EMNP$};

    \node (MD) at  (5.2,0)  {$\EMD$};
    \node (MND) at (5.8, 1) {$\EMND$};
    \node  (MCD) at (6.4, -0.5) {$\EMCD$};
    \node (KD) at (7, 0.5) {$\KD$};

    \node (MT) at  (8.3,0)  {$\EMT$};
    \node (MNT) at (8.9, 1) {$\EMNT$};
    \node  (MCT) at (9.5, -0.5) {$\EMCT$};
    \node (KT) at (10.1, 0.5) {$\KT$};

	\draw[->] (M) -- (MN);
	\draw[->] (M) -- (MC);
	\draw[->] (MN) -- (K);	
	\draw[->] (MC) -- (K);
	\draw[->, dashed] (MP) -- (MNP);
	\draw[->, dashed] (MD) -- (MND);
	\draw[->, dashed] (MD) -- (MCD);
	\draw[->] (MND) -- (KD);	
	\draw[->] (MCD) -- (KD);
	\draw[->, dashed] (MT) -- (MNT);
	\draw[->, dashed] (MT) -- (MCT);
	\draw[->] (MNT) -- (KT);	
	\draw[->] (MCT) -- (KT);

	\draw[->] (MN) -- (MNP);
	\draw[->] (MNP) -- (MND);
	\draw[->] (MND) -- (MNT);
	\draw[->] (MC) -- (MCD);
	\draw[->] (MCD) -- (MCT);
	\draw[->, dashed] (MP) -- (MD);
	\draw[->, dashed] (M) -- (MP);
	\draw[->, dashed] (MD) -- (MT);
	\draw[->] (KD) -- (KT);
	\draw[->] (K) -- (KD);
\end{tikzpicture}
\end{small}
\caption{\label{fig:cl dyag} Dyagram of classical modal logics.}
\end{figure}

In the following subsections, we show that the two methods define, 
for each classical modal logic $\logic$, the following minimal counterpart $\MLvar$.

\begin{definition}[Minimal modal logics]
\emph{Minimal modal logics} are axiomatically defined
in the language $\lan$  extending $\MPL$
with the following modal axioms and rules from Figure~\ref{fig:axioms}:
%
\begin{center}
\begin{tabular}{llllllllllllllll}
$\MM :=$ \monbox, \mondiam &&  $\MMP := \MM\ +$ \axPdiam \\
$\MMN := \MM\ +$ \axNbox && $\MMNP := \MMN\ +$ \axPdiam \\
$\MMC := \MM\ +$ \axCbox, \axKdiam \\
\vspace{0.2cm}
$\MK := \MMC\ +$ \axNbox &&   \\

$\MMD := \MM\ +$ \axD, \axPdiam && $\MMT := \MM\ +$ \axTbox, \axTdiam \\  
$\MMND := \MMN\ +$ \axD && $\MMNT := \MMN\ +$ \axTbox, \axTdiam \\
$\MMCD := \MMC\ +$ \axD, \axPdiam && $\MMCT := \MMC\ +$ \axTbox, \axTdiam \\ 
$\MKD := \MK\ +$ \axD && $\MKT := \MK\ +$ \axTbox, \axTdiam \\
\end{tabular}
\end{center}
\end{definition}



\subsection{Minimal modal logics via bimodal companions}

We prove that each minimal logic $\MLvar$ above is the minimal counterpart of the classical logic $\logic$
as defined in Definition~\ref{def:minimal counterpart}
(that is, $\MLvar\vd A$ if and only if $\Sfour\oplus\logic\vd A^{\tr}$).
As before, in order to prove this result, we first provide a semantics for minimal modal logics.

We start recalling the neighbourhood semantics for classical modal logics (cf. \cite{Chellas:1980,Pacuit:2017}).
A \emph{classical \neigh\ model} 
  is a tuple $\M = \langle \W, \N, \V\rangle$, 
where 
$\W$ is a non-empty set of worlds, 
$\V : \atm \longrightarrow \pow(\W)$ is a  valuation function for propositional variables, and
$\N$ is a function $\W\longrightarrow \pow(\pow(\W))$, called  neighbourhood function.
Modal formulas are interpreted in classical neighbourhood models as
$w\Vd\Box B$ iff  there is $\alpha\in\N(w)$ such that 
for all $v \in \alpha$, $v \Vd B$;
and
$w\Vd\diam B$ iff  for all $\alpha\in\N(w)$, 
there is $v\in\alpha$ such that $v\Vd B$.
Each classical modal logic $\logic$ considered in this work is characterised by the class of all classical neighbourhood models satisfying 
the following condition (\cC), (\cN), (\cP), (\cD), or (\cT),  for all $\alpha, \beta\subseteq\W$,
if $\logic$ contains the axiom \axCbox, \axNbox, \axPbox, \axD, or \axTbox,
respectively:
\begin{center}
\begin{tabular}{llllrlll}
(\cC) \ If $\alpha,\beta\in\N(w)$, then $\alpha\cap\beta\in\N(w)$. && 
(\cN) \ $\N(w)\not=\emptyset$. \\
(\cD) \  If $\alpha,\beta\in\N(w)$, then $\alpha\cap\beta\not=\emptyset$. &&
(\cP) \  $\emptyset\notin\N(w)$. \\
(\cT) \ If $\alpha\in\N(w)$, then $w\in\alpha$. \\
\end{tabular}
\end{center}

We also remark that for each considered classical modal logic $\logic$,
the fusion $\Sfour \oplus \logic$ is characterised by the class of models $ \langle \W, \R, \N, \V\rangle$,
where $\R$ is a reflexive and transitive binary relation on $\W$, and $\N$ is a neighbourhood function satisfying the conditions among (\cC), (\cN), (\cD), (\cP), (\cT) satisfied by the models for $\logic$.
This 
characterisation of fusions $\Sfour \oplus \logic$
can be easily proved by combining the completeness proofs by canonical models for $\Sfour$ and for $\logic$
(see e.g.~\cite{Chellas:1980}).

By combining relational models for $\MPL$ and classical neighbourhood models, we now define minimal neighbourhood models for minimal modal logics as follows.

\begin{definition}[Minimal neighbourhood semantics]\label{def:neigh semantics}
A \emph{minimal \neigh\ model} 
  is a tuple $\M = \langle \W, \less, \FW, \N, \V\rangle$, 
where 
$\langle \W, \less, \FW, \V\rangle$ is a minimal relational model, and
$\N$ is 
neighbourhood
a function $\W\longrightarrow \pow(\pow(\W))$. 
The forcing relation $\M, w\Vd A$ is 
inductively defined extending the clauses
for $p, \bot, \land, \lor, \imp$
in Section~\ref{subsec:sem prel}
with the following clauses for the modalities:
\begin{center}
\begin{tabular}{lllll}
$\M, w\Vd\Box B$ & iff &  for all $v \more w$, there is $\alpha\in\N(v)$ such that $\alpha\ufor B$; \\
$\M, w\Vd\diam B$ & iff &  for all $v \more w$, for all $\alpha\in\N(v)$, $\alpha\efor B$; \\ 
\end{tabular}
\end{center}

\noindent
where $\alpha\ufor B$ and $\alpha\efor B$ 
are abbreviations for, respectively, `for all $u\in\alpha$, $\M, u\Vd B$',
and `there is $u\in\alpha$ such that $\M, u\Vd B$'.

For each minimal modal logic $\MMSigma$, we say that a minimal neighbourhood model $\M$ is a model for $\MMSigma$
(or it is a $\MMSigma$-model) if it satisfies the condition \emph{(\cX)}
above
for all $\mathsf{X} \in \mathsf{\Sigma}$.
Note that $\MK$ amounts to $\MMCN$, hence the corresponding models must satisfy both \emph{(\cC)} and \emph{(\cN)}.
\end{definition}

By an easy induction on the construction of formulas one can prove the following.

\begin{proposition}
[Hereditary property]
For every $A\in\lan$, every minimal neighbourhood model $\M$, and every world $w$ of $\M$,
if $w\Vd A$ and $w\less v$, then $v\Vd A$.
\end{proposition}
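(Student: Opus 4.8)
The plan is to prove the hereditary property by induction on the construction of the formula $A$, exactly as suggested by the paper's phrasing ``by an easy induction''. The propositional cases ($A = p$, $A = \bot$, $A = B \land C$, $A = B \lor C$, $A = B \imp C$) are handled precisely as in the proof of the corresponding proposition for minimal birelational models (the unnamed Hereditary property proposition in Section~\ref{sec:MK companion}): for atoms and $\bot$ one uses that $\V(p)$ and $\FW$ are $\less$-upward closed; for $\land$, $\lor$ one applies the induction hypothesis directly; for $\imp$ one uses transitivity of $\less$, since every $\less$-successor of $v$ is also a $\less$-successor of $w$.

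The new content is the two modal cases, and here the key observation is that the forcing clauses for $\Box B$ and $\diam B$ in Definition~\ref{def:neigh semantics} are both stated as universal quantifications over all $\less$-successors $v \more w$. So suppose $w \Vd \Box B$ and $w \less v$; I must show $v \Vd \Box B$, i.e.\ for all $u \more v$ there is $\alpha \in \N(u)$ with $\alpha \ufor B$. But if $u \more v$ then by transitivity of $\less$ we have $u \more w$, so the assumption $w \Vd \Box B$ directly yields such an $\alpha$. The case $A = \diam B$ is identical in structure: from $w \Vd \diam B$ and $w \less v$, any $u \more v$ satisfies $u \more w$ by transitivity, hence for all $\alpha \in \N(u)$ we have $\alpha \efor B$, which is exactly $v \Vd \diam B$. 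Note that the induction hypothesis is not even needed in the modal cases — only transitivity of $\less$ — because the ``for all $v \more w$'' guard in the forcing clauses already builds in the monotonicity.

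There is essentially no main obstacle: this is a routine ``persistence/monotonicity'' lemma of the kind that always accompanies a Kripke-style semantics whose modal clauses are designed (as the paper explicitly remarks before the proposition) precisely to preserve heredity by quantifying over all $\less$-successors. If anything, the only point requiring a sentence of care is making sure the $\imp$ case correctly invokes transitivity rather than just reflexivity of $\less$, and the modal cases likewise; since $\less$ is assumed reflexive and transitive in the definition of minimal relational models, this is immediate. Hence the proof can be written very compactly, mirroring the earlier birelational version almost verbatim and adding two short paragraphs for $\Box$ and $\diam$.
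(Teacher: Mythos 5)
Your proof is correct and follows exactly the route the paper intends (the paper only says ``by an easy induction on the construction of formulas'', mirroring its earlier birelational hereditary property): induction on $A$, with the propositional cases using upward closure of $\V(p)$ and $\FW$ plus transitivity of $\less$ for $\imp$, and the modal cases needing only transitivity because the clauses for $\Box$ and $\diam$ already quantify over all $\less$-successors. Nothing is missing.
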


Now we prove that the logics $\MLvar$ are sound and complete with respect to the corresponding classes of models.

\begin{theorem}\label{th:soundness ML}
For all $A\in\lan$ and all minimal modal logic $\MLvar$,
if $A$ is derivable in $\MLvar$, then $A$ is valid in all minimal neighbourhood models for $\MLvar$.
\end{theorem}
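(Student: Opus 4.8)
The plan is to prove soundness by induction on the length of derivations in $\MLvar$, showing that every axiom of $\MLvar$ is valid in all minimal neighbourhood models for $\MLvar$ and that every rule preserves validity. Since the propositional part is inherited from $\MPL$ and the hereditary property is already available, the work concentrates on the modal axioms and rules from Figure~\ref{fig:axioms}, namely \monbox, \mondiam, \nec, \axKdiam, \axCbox, \axNbox, \axPdiam, \axD, \axTbox\ and \axTdiam, together with the observation that each frame condition \cC, \cN, \cP, \cD, \cT\ is exactly what is needed for the corresponding axiom.

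First I would dispatch the monotonicity rules and \nec. For \monbox: assuming $\M\models A\imp B$, if $w\Vd\Box A$ then for every $v\more w$ there is $\alpha\in\N(v)$ with $\alpha\ufor A$; since $A\imp B$ is valid and forcing is evaluated pointwise, $\alpha\ufor B$, so $w\Vd\Box B$. The case of \mondiam\ is dual, using the $\efor$ clause. For \nec, if $\M\models A$ then trivially for every $v\more w$ and every $\alpha\in\N(v)$ we have $\alpha\ufor A$; but we need \emph{some} $\alpha\in\N(v)$, so this is where condition \cN\ ($\N(v)\neq\emptyset$) enters — and indeed \nec\ only appears in systems containing \axNbox. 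Then I would handle the remaining modal axioms one at a time, each time invoking the matching semantic condition: \axNbox\ ($\Box\top$) uses \cN; \axPdiam\ ($\diam\top$) and \axD\ use \cP\ respectively \cD\ (note $\alpha\efor\top$ iff $\alpha\neq\emptyset$, and for \axD\ one needs $\alpha\cap\beta\neq\emptyset$ when $\alpha,\beta\in\N(v)$); \axCbox\ uses \cC\ (from $\alpha\ufor A$ and $\beta\ufor B$ with $\alpha,\beta\in\N(v)$ one gets $\alpha\cap\beta\ufor A\land B$ and $\alpha\cap\beta\in\N(v)$); \axTbox\ and \axTdiam\ use \cT; and \axKdiam\ is proved much as \axKbox\ was in Theorem~\ref{th:soundness MK}, combining a $\Box$-neighbourhood witnessing $A\imp B$ with the universal-over-neighbourhoods reading of $\diam A$. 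In every case one must quantify over all $v\more w$ to respect the minimal-logic clauses, and one should note that $w\in\FW$ plays no special role here — fallible worlds are handled uniformly by the forcing clauses.

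The only genuinely delicate point, and the one I expect to be the main obstacle, is keeping the $\less$-quantification straight in the modal clauses: because $\M,w\Vd\Box B$ is defined as ``for all $v\more w$, there is $\alpha\in\N(v)$ with $\alpha\ufor B$'', and $\M,w\Vd\diam B$ dually, each soundness argument has to carry the $v\more w$ prefix through, and in the \axKdiam\ case one has to make sure the neighbourhood chosen for the $\Box$-premiss and the neighbourhood quantified in the $\diam$-premiss live at the \emph{same} world $v$. Since $\N$ need satisfy the frame conditions at every world, this is not a real difficulty, but it is where the bookkeeping lies. Given that the semantic conditions were deliberately chosen to mirror the axioms, each individual verification is short; the proof is therefore essentially a checklist, and I would present it as a case analysis paralleling the one in the proof of Theorem~\ref{th:soundness MK}, writing out \monbox, \nec, \axCbox, \axKdiam\ and one representative of \axN/\axP/\axD/\axT\ in full and leaving the remaining cases to the reader.
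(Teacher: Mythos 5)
Most of your checklist matches the paper's proof of Theorem~\ref{th:soundness ML}: validity of the axioms and validity-preservation of the rules is verified case by case, with each axiom paired with its frame condition exactly as you list (your extra check of \nec\ is harmless --- in the axiomatisation of the logics $\MLvar$ of Section~\ref{sec:family} \nec\ is not primitive but derivable from \monbox\ and \axNbox, and your observation that its soundness uses (\cC{})\hspace{-0.25em}, sorry, (\cN) is correct).

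The genuine gap is in your treatment of \axKdiam. You say it goes ``much as \axKbox\ in Theorem~\ref{th:soundness MK}, combining a $\Box$-neighbourhood witnessing $A\imp B$ with the universal-over-neighbourhoods reading of $\diam A$'', and you locate the delicacy in making the two neighbourhoods live at the same world $v\more w$. That is not where the difficulty lies, and the naive combination fails: from $w\Vd\Box(A\imp B)$ you get, at each $v\more w$, \emph{one} $\alpha\in\N(v)$ with $\alpha\ufor A\imp B$, while to conclude $w\Vd\diam B$ you must produce a $B$-point in \emph{every} $\beta\in\N(v)$; the witness for $A$ that $\diam A$ gives you inside an arbitrary $\beta$ need not belong to $\alpha$, so you cannot apply $A\imp B$ to it, and applying $\diam A$ to $\alpha$ itself only shows that $\alpha$ contains a $B$-point, not that $\beta$ does. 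Indeed \axKdiam\ is \emph{not} valid on arbitrary minimal neighbourhood models (take $\less$ the identity, $\N(w)=\{\{u_1\},\{u_2\}\}$ with $u_1\Vd A,B$ and $u_2\Vd A$ but $u_2\not\Vd B$: then $\Box(A\imp B)$ and $\diam A$ hold at $w$ but $\diam B$ fails). The missing idea, which is the heart of the paper's argument, is that \axKdiam\ only occurs in logics containing \axCbox\ (the $\MMC$ family), so the models satisfy (\cC): given $\beta\in\N(v)$ one forms $\alpha\cap\beta\in\N(v)$, the universal reading of $\diam A$ yields a witness $u\in\alpha\cap\beta$ with $u\Vd A$, and since $u\in\alpha$ also $u\Vd A\imp B$, whence $u\Vd B$ and $\beta\efor B$. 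Without invoking (\cC) at exactly this point your case analysis does not go through.
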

\begin{proof}
We show that all modal axioms and rules of $\MLvar$ are valid, respectively validity preserving, in every minimal neighbourhood model $\M$ for $\MLvar$.
\begin{enumerate}[leftmargin=*, align=left]
\item[(\monbox)] Suppose that $\M\models A \imp B$ and $w \Vd \Box A$.
Then for all $v \more w$, there is $\alpha\in\N(v)$ such that for all $z\in\alpha$, $z\Vd A$,
thus $z\Vd B$, hence $w \Vd \Box B$.
Therefore $\M\models \Box A \imp\Box B$.

\item[(\mondiam)] Suppose that $\M\models A \imp B$ and $w \Vd \diam A$.
Then for all $v \more w$, for all $\alpha\in\N(v)$, there is $z\in\alpha$ such that $z\Vd A$,
thus $z\Vd B$, hence $w \Vd \diam B$.
Therefore $\M\models \diam A \imp\diam B$.

\item[(\axNbox)] For all $w$ and all $v\more w$, by (\cN), there is $\alpha\in\N(v)$.
Since $z\Vd\top$ for all $z\in\alpha$, we have $w \Vd \Box\top$. Thus $\M\models \Box\top$.

\item[(\axCbox)] Suppose that $w\Vd \Box A\land\Box B$.
Then for all $v \more w$, there are $\alpha,\beta\in\N(v)$ such that $\alpha\ufor A$ and $\beta\ufor B$.
By (\cC), $\alpha\cap\beta\in\N(v)$, moreover $\alpha\cap\beta\ufor A \land B$.
Hence $\M\models \Box A \land \Box B \imp \Box(A\land B)$.

\item[(\axKdiam)] Suppose that $w \Vd \Box(A \imp B)$ and $w \Vd \diam A$. Then  for all $v \more w$, 
there is $\alpha\in\N(v)$ such that $\alpha\ufor A \imp B$.
Now, suppose that $\beta\in\N(v)$. By (\cC), $\alpha\cap\beta\in\N(v)$.
Since $\alpha\cap\beta\subseteq\alpha$, $\alpha\cap\beta\ufor A \imp B$.
Moreover, by $w\Vd\diam B$, $\alpha\cap\beta\efor A$.
Thus $\alpha\cap\beta\efor B$, which implies $\beta\efor B$.
Since this holds for every $\beta\in\N(v)$,
$w\Vd\diam B$.
Therefore $\M\models \Box(A \imp B) \imp (\diam A \imp\diam B)$.

\item[(\axPdiam)] For all $w$ and all $v\more w$, by (\cP), $\emptyset\notin\N(v)$.
Hence, for all $\alpha\in\N(v)$, $\alpha\not=\emptyset$,
thus $\alpha\efor\top$. Then we have $w \Vd \diam\top$. Thus $\M\models \diam\top$.

\item[(\axD)] Suppose that $w \Vd \Box A$.
Then  for all $v \more w$, 
there is $\alpha\in\N(v)$ such that $\alpha\ufor A$.
Now, suppose that $\beta\in\N(v)$. By (\cD), there is $z\in\alpha\cap\beta$.
Then $z\Vd A$, hence $\beta\efor A$, therefore $w \Vd\diam A$.
Hence $\M\models\Box A \imp \diam A$.

\item[(\axTbox)] Suppose that $w \Vd \Box A$.
Then  for all $v \more w$, 
there is $\alpha\in\N(v)$ such that $\alpha\ufor A$.
Hence in particular there is $\alpha\in\N(w)$ such that $\alpha\ufor A$.
By (\cT), $w\in\alpha$, then $w \Vd A$.
Therefore $\M\models\Box A \imp A$.

\item[(\axTdiam)] Suppose that $w \Vd A$.
By the hereditary property of minimal neighbourhood models,
for all $v\more w$, $v\Vd A$.
Moreover, by (\cT), for all $\alpha\in\N(v)$, $v\in\alpha$, hence $\alpha\efor A$.
Thus $w \Vd \diam A$, therefore $\M\models \diam A$.\qedhere
\end{enumerate}
\end{proof}


The proof of completeness proceeds essentially as the one in Section~\ref{sec:MK companion}.
First, we observe that Lemma~\ref{lemma:lind rel} also holds for all logics $\MLvar$.
We consider the following definition of neighbourhood segment.

\begin{definition}\label{def:neigh seg}
For every logic $\logic$ in $\lan$, 
an $\logic$-\emph{neighbourhood \seg},  
or just \emph{\seg}, is a pair $(\Phi, \CC)$, 
where $\Phi$ is an $\logic$-full set, and $\CC$ is a class of sets of $\logic$-full sets such that:
\begin{itemize}
\item if $\Box A\in\Phi$, then there is $\U\in\CC$ such that for all $\Psi\in\U$, $A\in\Psi$; and
\item if $\diam A\in\Phi$, then for all $\U\in\CC$, there is $\Psi\in\U$ such that $A\in\Psi$.
\end{itemize}
Moreover, if $\logic$ contains the axiom \axCbox, or the axiom \axD, or the axiom \axTbox, then 
the $\logic$-segments must satisfy the following corresponding condition:
\begin{center}
\begin{tabular}{llllrlll}
\emph{(\cCs)}  If $\U,\VV\in\CC$, then $\U\cap\VV\in\CC$. &
\emph{(\cTs)}  For all $\U\in\CC$, $\Phi\in\U$. \\
\emph{(\cDs)}  If $\U,\VV\in\CC$, then $\U\cap\VV\not=\emptyset$. \\
\end{tabular}
\end{center}
\end{definition}

\begin{lemma}\label{lemma:neigh seg}
For every minimal modal logic $\MLvar$ and every $\MLvar$-full set $\Phi$, 
\begin{itemize}
\item[(i)] there exists an $\MLvar$-neighbourhood segment $(\Phi, \CC)$;
\item[(ii)] if $\Box A\notin\Phi$, then there exists an $\MLvar$-neighbourhood segment $(\Phi, \CC)$ such that
for all $\U\in\CC$, there is $\Psi\in\U$ such that $A\notin\Psi$;
\item[(iii)] if $\diam A\notin\Phi$, then there exists an $\MLvar$-neighbourhood segment $(\Phi, \CC)$ such that 
there is $\U\in\CC$ such that for all $\Psi\in\U$, $A\notin\Psi$.
\end{itemize}
\end{lemma}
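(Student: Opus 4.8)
The plan is to carry the segment constructions that appear inside the proof of Lemma~\ref{lemma:can model rel} over to the neighbourhood setting, splitting into cases according to which of the conditions (\cCs), (\cDs), (\cTs) — and the $\cN$/$\cP$ requirements — the logic $\MLvar$ has to meet. As in Section~\ref{sec:MK companion}, the only substantive ingredients besides bookkeeping are the Lindenbaum Lemma~\ref{lemma:lind rel} (which holds for every $\MLvar$) and a handful of underivability facts obtained from the modal axioms of $\MLvar$ exactly as in the $\Box$- and $\diam$-cases of the proof of Lemma~\ref{lemma:can model rel}.

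For part (i), for each $B$ with $\Box B\in\Phi$ I would put $\U_B=\{\Psi\ \MLvar\text{-full}\mid B\in\Psi\}$ into $\CC$. Whenever $\diam A\in\Phi$, extending $\{B,A\}$ by Lemma~\ref{lemma:lind rel} produces $\Psi\in\U_B$ with $A\in\Psi$, so the two bullet points of Definition~\ref{def:neigh seg} hold for $\CC_0=\{\U_B\mid\Box B\in\Phi\}$ (this is the neighbourhood analogue of Lemma~\ref{lemma:rel seg}). If $\logic$ contains \axNbox\ and $\Phi$ has no $\Box$-formula, one adds a single default neighbourhood, e.g. $\{\Psi\mid\Psi\text{ contains }C\text{ for witnessing every }\diam A\in\Phi\}$, to keep $\CC\neq\emptyset$. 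If $\logic$ contains \axCbox\ or \axD, I close $\CC$ under finite intersection; since $\bigcap_{i\le k}\U_{B_i}=\{\Psi\mid B_1,\dots,B_k\in\Psi\}$ is always nonempty and, by Lemma~\ref{lemma:lind rel}, still contains a witness of every $\diam A\in\Phi$, the bullets are preserved and (\cCs)/(\cDs) are obtained. If $\logic$ contains \axTbox, then $\Box B\in\Phi$ forces $B\in\Phi$, so $\Phi\in\U_B$ and $\Phi$ lies in every finite intersection, giving (\cTs); and \axPdiam\ (equivalently, its model condition $\cP$) makes every neighbourhood nonempty, so $\emptyset\notin\CC$.

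For (ii) and (iii) I would modify this $\CC$. For (ii), with $\Box A\notin\Phi$: arguing as in the $\Box$-case of Lemma~\ref{lemma:can model rel} one gets $\boxm\Phi\not\vd_{\MLvar}A$, and more generally $B_1\land\dots\land B_k\not\vd_{\MLvar}A$ whenever $\Box B_1,\dots,\Box B_k\in\Phi$ (using \monbox, plus \axCbox\ when the intersections are needed); hence by Lemma~\ref{lemma:lind rel} every $\U_B$ and every finite intersection of them already contains a member omitting $A$, which is exactly the property demanded of $\CC$. For (iii), with $\diam B\notin\Phi$: arguing as in the $\diam$-case of Lemma~\ref{lemma:can model rel} one gets $\boxm\Phi\cup\{C\}\not\vd_{\MLvar}B$ for every $\diam C\in\Phi$ (here one uses \axKdiam, which is available precisely in the systems whose neighbourhoods are so constrained; for the weaker systems the simpler fact $C\not\vd_{\MLvar}B$, coming from \mondiam, suffices); so the neighbourhood $\U^{\ast}=\{\Psi\mid\boxm\Phi\subseteq\Psi,\ B\notin\Psi,\text{ and }C\in\Psi\text{ for some }\diam C\in\Phi\}$, together with the (suitably intersected) $\U_B$'s, forms a segment in which $\U^{\ast}$ witnesses $\not\Vd\diam B$, the compatibility of $\U^{\ast}$ with (\cCs)/(\cDs)/(\cTs) being checked by the same underivability computations and an appeal to \axD\ or \axTdiam\ where these are present.

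I expect the main obstacle to be exactly this last point: arranging the auxiliary neighbourhood $\U^{\ast}$ (for (iii)) and the $A$-omitting members (for (ii)) so that they coexist with \emph{all} the frame conditions (\cCs), (\cDs), (\cTs) at once. Closure under intersection must not destroy the witnessing of $\diam$-formulas, and ensuring this is precisely where the extra $\diam$-axiom \axKdiam\ (and, for the systems with \axD\ or \axTdiam, those axioms) is needed — so the verification has to be carried out logic by logic, following the list of axioms in the definition of the minimal modal logics rather than in one uniform stroke.
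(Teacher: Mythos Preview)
Your plan is the paper's: for (i) take $\CC=\{\U_B\mid\Box B\in\Phi\}$ (adding $\Phi$ to each $\U_B$ in the $\mathsf T$-logics and observing $\U_B\cap\U_C=\U_{B\land C}$ for (\cCs) in the $\mathsf C$-logics); for (ii) note that each $\U_B$ already contains an $A$-omitting member since $\Box B\in\Phi$, $\Box A\notin\Phi$ force $B\not\vdash A$; for (iii) adjoin an auxiliary neighbourhood. Your $\U_B$'s are larger than the paper's (you drop the ``$\exists\,\diam C\in\Phi$ with $C\in\Psi$'' clause), which is a harmless simplification.

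There are, however, two concrete gaps in your (iii). First, the single $\U^{*}=\{\Psi\mid\boxm\Phi\subseteq\Psi,\;B\notin\Psi,\;C\in\Psi\text{ for some }\diam C\in\Phi\}$ only works when the logic contains \axKdiam\ (equivalently \axCbox). In $\MMD$, for instance, one can have $\Box p,\Box(p\supset q)\in\Phi$ with $\diam q\notin\Phi$; then $p,p\supset q\in\boxm\Phi$, so every $\Psi\supseteq\boxm\Phi$ contains $q$, and your $\U^{*}$ is empty---it then fails the $\diam$-witness clause of Definition~\ref{def:neigh seg} for any $\diam C\in\Phi$. The paper splits (iii) into two subcases precisely for this reason and \emph{drops} the $\boxm\Phi$-clause in the non-$\mathsf C$ logics; your parenthetical about \mondiam\ shows you see the issue, but the construction you actually write is the $\mathsf C$-case one.

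Second, closing $\CC$ under finite intersection for the $\mathsf D$-logics is a mistake: (\cDs) only asks that pairwise intersections be \emph{nonempty}, not that they belong to $\CC$. If you close and then throw in $\U^{*}$, (\cDs) now demands $\U^{*}\cap(\U_D\cap\U_E)\neq\emptyset$ for all $\Box D,\Box E\in\Phi$. In the same $\MMD$ example, every $\Psi\in\U_p\cap\U_{p\supset q}$ contains $q$, while $\U^{*}$ (even the corrected non-$\mathsf C$ version) excludes $q$, so the intersection is empty. The fix is simply not to close: verify (\cDs) directly on the generating family $\{\U^{*}\}\cup\{\U_D\mid\Box D\in\Phi\}$, using the $\mathsf D$-axiom to get $\diam D\in\Phi$ and hence $D\not\vdash B$ from \mondiam, exactly as the paper does.
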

\begin{proof} \quad
\begin{enumerate}[leftmargin=*, align=left]
\item[(i)]
Given an $\MLvar$-full set $\Phi$, we construct an $\MLvar$-segment $(\Phi, \CC)$ as follows.
For all $\Box A\in\Phi$, 
we define 
$\U_{A}^- = \{\Psi \ \MLvar\text{-full} \mid A \in \Psi \text{ and there is } \diam B\in\Phi \text{ such that } B\in\Psi\}$;
and $\Ua = \U_{A}^-$ if $\MLvar$ does not contain \axTbox, and 
$\Ua = \U_{A}^- \cup\{\Phi\}$  
$\MLvar$ it contains \axTbox. 
Moreover, we define $\CC = \{\Ua \mid \Box A \in\Phi\}$.
We show that  $(\Phi, \CC)$ is an $\MLvar$-segment.
\begin{itemize}
\item If $\Box A\in\Phi$, then by definition $\Ua\in\CC$.
Moreover, if $\MLvar$ does not contain \axTbox, then $A\in\Psi$  for all $\Psi\in\Ua$.
If instead $\MLvar$ contains \axTbox, then for all $\Psi\in\Ua$ we have $A\in\Psi$ or $\Psi=\Phi$,
where,
by \axTbox\ and closure under derivation of $\MLvar$-full sets, $A\in\Phi$.

\item If $\diam A\in\Phi$, then assume $\U\in\CC$.
Then, by definition, $\U = \Ub$ for some $\Box B\in\Phi$.
By Lemma~\ref{lemma:lind rel}, there is an $\MLvar$-full set $\Psi$ such that $A,B\in\Psi$,
hence $\Psi\in\Ub = \U$ and $A\in\Psi$.
\end{itemize}
Moreover, the conditions (\cCs), (\cDs) and (\cTs) are satisfied if
$\MLvar$ contains the axioms \axCbox, \axD, or \axTbox,
respectively:
\begin{enumerate}[leftmargin=*, align=left]
\item[(\cCs)] Suppose $\U, \VV\in\CC$. Then $\U = \Ua$ and $\VV = \Ub$ for some $\Box A, \Box B\in\Phi$.
Hence, given that $\MLvar$ contains \axCbox, 
by closure under derivation of $\MLvar$-full sets, we have $\Box(A \land B)\in\Phi$, thus 
$\U_{A\land B}\in\CC$.
Note also that for all $\MLvar$-full sets $\Psi$ it holds $A,B\in\Psi$ if and only if $A\land B\in\Psi$.
One can easily verify that this implies $\U_{A\land B} = \Ua \cap \Ub$,
therefore $\U \cap \VV = \Ua \cap \Ub \in\CC$.

\item[(\cDs)] Suppose $\U, \VV\in\CC$. Then $\U = \Ua$ and $\VV = \Ub$ for some $\Box A, \Box B\in\Phi$.
Given that $\MLvar$ contains \axD,
by closure under derivation of $\MLvar$-full sets, we have $\diam A, \diam B\in\Phi$.
By Lemma~\ref{lemma:lind rel}, there is an $\MLvar$-full set
$\Psi$ such that $A, B\in\Psi$.
Then by definition, $\Psi\in\Ua$ and $\Psi\in\Ub$,
hence $\Psi\in\Ua\cap\Ub$,
 therefore $\U \cap \VV = \Ua \cap \Ub \not=\emptyset$.

\item[(\cTs)] By definition, for all $\U\in\CC$, $\Phi\in\U$.
\end{enumerate}

\item[(ii)]
For all $\Box B\in\Phi$, 
we define 
$\U_{B}^- = \{\Psi \ \MLvar\text{-full} \mid B \in \Psi \text{ and there is } \diam C\in\Phi \text{ such that } C\in\Psi\}
\cup\{\Psi \ \MLvar\text{-full} \mid B \in \Psi \text{ and } A \notin\Psi\}$;
and $\Ub = \U_{B}^-$ if $\MLvar$ does not contain \axTbox, and 
$\Ub = \U_{B}^- \cup\{\Phi\}$  
$\MLvar$ it contains \axTbox. 
Moreover, we define $\CC = \{\Ub \mid \Box B \in\Phi\}$.
We can show that $(\Phi, \CC)$ is an $\MLvar$-segment
as in item (i).
Now, suppose that $\U\in\CC$.
Then $\U = \Ub$ for some $\Box B\in\Phi$.
Thus, since $\Box A\notin\Phi$, 
$\{B\}\not\vd A$ 
(otherwise $\vd B \imp A$, and by \monbox, $\vd \Box B\imp\Box A$,
hence by closure under derivation, $\Box A\in\Phi$).
Then by Lemma~\ref{lemma:lind rel}, there is an $\MLvar$-full set $\Psi$ such that $B\in\Psi$ and $A \notin \Psi$,
and by definition, $\Psi\in\Ub = \U$.

\item[(iii)]
 If there is no $\diam B\in\Phi$,
then $(\Phi, \{\emptyset\})$ is an $\MLvar$-segment
(note that this never happens if $\MLvar$ contains the axiom \axD\ or the axiom \axTbox\
bacuse in both cases $\diam \top\in\Phi$
for all $\MLvar$-full sets $\Phi$),
moreover $\emptyset$ satisfies the claim of the lemma.
Otherwise we distinguish two subcases.

(iii.i) $\MLvar$ does not contain \axCbox, \axKdiam.
We define $\U^- = \{\Psi \ \MLvar\text{-full} \mid A \notin \Psi \text{ and there is } \diam B\in\Phi \text{ such that } B\in\Psi\}$, and for all
$\Box C\in\Phi$, 
we define 
$\U_{C}^- = \{\Psi \ \MLvar\text{-full} \mid C \in \Psi \text{ and there is } \diam B\in\Phi \text{ such that } B\in\Psi\}$.
Moreover, we define $\U = \U^-$, $\U_{C} = \U_{C}^-$
if $\MLvar$ does not contain \axTbox, 
and $\U = \U^- \cup\{\Phi\}$, $\U_{C} = \U_{C}^- \cup\{\Phi\}$
if $\MLvar$ contains \axTbox.
Finally, we define $\CC = \{\U\}\cup\{\U_{C} \mid \Box C \in \Phi\}$.
Note that $\U$ satisfies the condition of the lemma, in particular if \axTbox\ belongs to $\MLvar$, then $A\notin\Phi$,
since if $A\in\Phi$, then by \axTdiam, $\diam A\in\Phi$,
against the assumption.
We can show that $(\Phi, \CC)$ is an $\MLvar$-segment.
First, the conditions of $\MLvar$-segments for any $\Box B\in\Phi$ and for any $\diam B\in\Phi$ can be shown to be satisfied similarly to item (i).
Moreover, the property (\cTs) of $\MLvar$-segments for $\MLvar$ containing \axTbox\ follows immediately from the definition. 
We show that (\cDs) is satisfied if $\MLvar$ contains the axiom \axD:
Suppose that $\VV, \ZZ\in\CC$. 
If $\VV = \U_{C}, \ZZ = \U_{D}$ for some some $\Box C, \Box D \in \Phi$, the proof is analogous to
the one of (\cDs) in item (i).
Now suppose $\VV = \U_{C}$ for some some $\Box C \in \Phi$ and $\ZZ = \U$.
Then by axiom \axD, $\diam C\in\Phi$.
Thus we have $\{C\}\not\vd B$,
otherwise we would have $\vd C \imp A$,
and by \mondiam,
$\vd \diam C \imp \diam A$, hence $\diam A \in \Phi$,
against the assumption.
By Lemma~\ref{lemma:lind rel}, there is an $\MLvar$-full set
$\Psi$ such that $C\in\Psi$  and $A\notin \Psi$.
By definition, $\Psi\in\U$, moreover $\Psi\in\U_{C}$
(since $\diam C\in\Phi$),
hence $\Psi\in\U\cap\U_{C}=\VV\cap\ZZ$,
therefore $\VV\cap\ZZ\not=\emptyset$.

(iii.ii) $\MLvar$ contains \axCbox, \axKdiam.
We define $\U^- = \{\Psi \ \MLvar\text{-full} \mid A \notin\Psi, \text{ and } \boxm{\Phi}\subseteq\Psi,
\text{ and  } B\in\Psi \text{ for some } \diam B\in\Phi\}$,
and $\U = \U^-$ if $\MLvar$ does not contain \axTbox, and 
$\U = \U^- \cup\{\Phi\}$  
$\MLvar$ it contains \axTbox. 
Moreover, we define $\CC = \{\U\}$.
Clearly, $\U$ satisfies the claim of the lemma
(in particular, if $\MLvar$ contains \axTbox, then $A\notin\Phi$).
We show that $(\Phi, \CC)$ is an $\MLvar$-segment.
First, observe that for any $\diam B\in\Phi$,
$\boxm\Phi\cup\{B\}\not\vd A$.
Indeed, if $\boxm\Phi\cup\{B\}\vd A$,
then there are $C_1, ..., C_n\in\boxm\Phi$ such that
$\vd C_1\land ...\land C_n\land B \imp A$,
hence
$\vd C_1\land ...\land C_n\imp (B \imp A)$,
then by \monbox,
$\vd \Box(C_1\land ...\land C_n)\imp \Box(B \imp A)$,
thus by \axCbox\ ($n$ times) and \axKdiam,
$\vd \Box C_1\land ...\land \Box C_n\imp (\diam B \imp \diam A)$,
which gives
$\vd \Box C_1\land ...\land \Box C_n\land\diam B \imp \diam A$,
therefore
$\Box C_1, ...,  \Box C_n, \diam B \vd \diam A$;
since $\Box C_1, ...,  \Box C_n, \diam B\in\Phi$,
this entails $\diam A\in\Phi$,
against the assumption.
We then have:
the condition for any $\Box B\in\Phi$ follows immediately from the definition.
If $\diam B\in\Phi$,
then $\boxm\Phi\cup\{B\}\not\vd A$,
thus by Lemma~\ref{lemma:lind rel}, there is an $\MLvar$-full set $\Psi$ such that $\boxm\Phi\subseteq\Psi$, $B\in\Psi$ and $A \notin \Psi$,
hence $\Psi\in\U$.
By the same argument, the property (\cDs) is satisfied
for $\MLvar$ containing the axiom \axD\
given that $\diam\top\in\Phi$ entails the existence of such an $\MLvar$-full set $\Psi$, hence $\U\not=\emptyset$.
Moreover, (\cCs) is trivial, and (\cTs) for $\MLvar$ containing \axTbox\ follows immediately from the definition.\qedhere
\end{enumerate}
\end{proof}

\begin{definition}\label{def:can model neigh}
For every logic $\logic$ in $\lan$,
the \emph{canonical neighbourhood model} for $\logic$ is
the tuple $\Mc = \langle \Wc, \lessc, \FWc, \Nc, \Vc \rangle$,
where:
\begin{itemize}
\item $\Wc$ is the class of all $\logic$-neighbourhood segments;
\item for all $(\Phi, \CC), (\Psi, \DD)\in\Wc$,  $(\Phi, \CC) \lessc (\Psi, \DD)$ 
if and only if $\Phi\subseteq\Psi$;
\item for all $(\Phi, \CC)\in\Wc$,  $(\Phi,\CC)\in\FWc$ if and only if  $\bot\in\Phi$;
\item for all sets $\U$ of $\MLvar$-full sets, $\alpha_{\U} = \{(\Phi, \CC) \mid \Phi\in\U\}$; 
\item for all $(\Phi, \CC)\in\Wc$,  $\alpha_{\U}\in\Nc((\Phi, \CC))$ if and only if $\U\in\CC$;
\item for all $(\Phi, \CC)\in\Wc$,  $(\Phi,\CC)\in\Vc(p)$ if and only if $p\in\Phi$.
\end{itemize}
\end{definition}


\begin{lemma}
For every minimal modal logic $\MLvar$, the canonical neighbourhood model $\M$ for $\MLvar$ is a minimal neighbourhood model for $\MLvar$.
\end{lemma}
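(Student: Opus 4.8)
The plan is to verify separately the two requirements contained in Definition~\ref{def:neigh semantics}: that the underlying structure $\langle \Wc, \lessc, \FWc, \Vc\rangle$ of $\Mc$ is a minimal relational model, and that the neighbourhood function $\Nc$ satisfies the condition $(\cX)$ for each $\mathsf X\in\mathsf\Sigma$.

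The relational part is routine. By Lemma~\ref{lemma:neigh seg}(i), together with the existence of some $\MLvar$-full set (Lemma~\ref{lemma:lind rel}), $\Wc$ is non-empty. Since $\lessc$ is set inclusion on the first components of segments, it is reflexive and transitive. If $(\Phi,\CC)\in\FWc$ and $(\Phi,\CC)\lessc(\Psi,\DD)$, then $\bot\in\Phi\subseteq\Psi$, so $(\Psi,\DD)\in\FWc$, that is, $\FWc$ is $\lessc$-upward closed; similarly $p\in\Phi\subseteq\Psi$ gives heredity of $\Vc$.

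For the conditions on $\Nc$ I would first record a small observation: since every $\MLvar$-full set occurs as the first component of some segment (Lemma~\ref{lemma:neigh seg}(i)), we have (a) $\alpha_\U=\emptyset$ iff $\U=\emptyset$; (b) the assignment $\U\mapsto\alpha_\U$ is injective on sets of $\MLvar$-full sets, so that $\Nc$ is well defined; and (c) $\alpha_\U\cap\alpha_\VV=\alpha_{\U\cap\VV}$ for all such $\U,\VV$. Now fix $(\Phi,\CC)\in\Wc$ and recall that $\Nc((\Phi,\CC))=\{\alpha_\U\mid\U\in\CC\}$; each of the five conditions then follows from the matching clause of Definition~\ref{def:neigh seg} or from the fact that the relevant axiom belongs to every $\MLvar$-full set. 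For $(\cC)$, if $\alpha_\U,\alpha_\VV\in\Nc((\Phi,\CC))$ then $\U,\VV\in\CC$, hence $\U\cap\VV\in\CC$ by $(\cCs)$, so $\alpha_\U\cap\alpha_\VV=\alpha_{\U\cap\VV}\in\Nc((\Phi,\CC))$. For $(\cN)$, $\MLvar$ contains \axNbox, so $\Box\top\in\Phi$, and the first clause of Definition~\ref{def:neigh seg} produces some $\U\in\CC$, whence $\Nc((\Phi,\CC))\neq\emptyset$. For $(\cD)$, from $\alpha_\U,\alpha_\VV\in\Nc((\Phi,\CC))$ we get $\U,\VV\in\CC$, so $\U\cap\VV\neq\emptyset$ by $(\cDs)$; picking $\Psi\in\U\cap\VV$ and a segment $(\Psi,\DD)$ gives $(\Psi,\DD)\in\alpha_\U\cap\alpha_\VV$. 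For $(\cP)$, $\MLvar$ contains \axPdiam, so $\diam\top\in\Phi$, and the second clause of Definition~\ref{def:neigh seg} makes every $\U\in\CC$ non-empty, so by (a) $\emptyset\notin\Nc((\Phi,\CC))$. For $(\cT)$, if $\alpha_\U\in\Nc((\Phi,\CC))$ then $\U\in\CC$, so $\Phi\in\U$ by $(\cTs)$, hence $(\Phi,\CC)\in\alpha_\U$.

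The only delicate aspect is bookkeeping: one must confirm that $\Nc$ is genuinely well defined (observation (b)) and, for each condition, pair the semantic requirement with the correct syntactic ingredient. Concretely, one uses that the minimal counterparts are set up with \axNbox\ and \axPdiam\ (rather than \axPbox) among their primitive axioms, so that $\Box\top$, resp.\ $\diam\top$, lies in every $\MLvar$-full set, and that whenever $\mathsf C,\mathsf D,\mathsf T\in\mathsf\Sigma$ the logic $\MLvar$ contains \axCbox, \axD, \axTbox, so that its segments satisfy $(\cCs)$, $(\cDs)$, $(\cTs)$ by Definition~\ref{def:neigh seg}. Apart from this matching, the argument is a direct transcription of the segment conditions into the neighbourhood conditions, and I do not expect any genuine obstacle.
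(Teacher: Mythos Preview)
Your proposal is correct and follows essentially the same approach as the paper: verify the relational reduct is a minimal relational model, then check each neighbourhood condition $(\cC),(\cN),(\cP),(\cD),(\cT)$ by translating the corresponding segment property or the presence of $\Box\top$/$\diam\top$ in every full set. The only difference is that you spell out the well-definedness of $\Nc$ (injectivity of $\U\mapsto\alpha_\U$) and the relational conditions explicitly, whereas the paper dismisses these as ``easy to verify''; your added care here is warranted but does not change the argument.
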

\begin{proof}
It is easy to very that $\M$ is a minimal neighbourhood model.
We show that $\M$ satisfies the conditions among
(\cC), (\cN), (\cP), (\cD), (\cT)
associated to the axioms of $\MLvar$.
\begin{enumerate}[leftmargin=*, align=left]
\item[(\cC)] Suppose that $\alpha, \beta\in\Nc((\Phi, \CC))$. Then, by definition, 
$\alpha=\alpha_{\U}$ and $\beta= \alpha_{\VV}$ for some $\U, \VV\in\CC$.
By the property (\cCs) of $\MLvar$-segments, $\U\cap\VV\in\CC$, thus $\alpha_{\U\cap\VV}\in\Nc((\Phi, \CC))$,
where
$\alpha_{\U\cap\VV} = \{(\Phi, \CC) \mid \Phi\in\U\cap\VV\} =
\{(\Phi, \CC) \mid \Phi\in\U\} \cap  \{(\Phi, \CC) \mid \Phi\in\VV\}
= \alpha_{\U} \cap \alpha_{\VV} = \alpha\cap\beta$.

\item[(\cN)] For all $\MLvar$-full sets $\Phi$, $\Box\top\in\Phi$,
then for all $\MLvar$-segments $(\Phi, \CC)$, $\CC\not=\emptyset$,
thus $\Nc((\Phi, \CC))\not=\emptyset$.

\item[(\cP)] For all $\MLvar$-full sets $\Phi$, $\diam\top\in\Phi$,
then for all $\MLvar$-segments $(\Phi, \CC)$ and all $\U\in\CC$, $\U\not=\emptyset$,
thus for all $\alpha_{\U}\in\Nc((\Phi, \CC))$, $\alpha_{\U}\not=\emptyset$,
that is, $\emptyset\notin\Nc((\Phi, \CC))$.

\item[(\cD)] Suppose that $\alpha, \beta\in\Nc((\Phi, \CC))$. Then 
$\alpha=\alpha_{\U}$ and $\beta= \alpha_{\VV}$ for some $\U, \VV\in\CC$.
By (\cDs), $\U\cap\VV\not=\emptyset$, which implies 
$\alpha_{\U} \cap \alpha_{\VV} = \alpha\cap\beta\not=\emptyset$.

\item[(\cT)] Suppose that $\alpha\in\Nc((\Phi, \CC))$. Then 
$\alpha=\alpha_{\U}$ for an $\U\in\CC$.
By (\cTs), $\Phi\in\U$, thus $(\Phi, \CC)\in\alpha_{\U} = \alpha$.\qedhere
\end{enumerate}
\end{proof}

\begin{lemma}\label{lemma:can model neigh}
Let $\MLvar$ be a minimal modal logic 
and $\Mc = \langle \Wc, \lessc, \FWc, \Nc, \Vc \rangle$ be the canonical neighbourhood model for $\MLvar$. Then for all $(\Phi, \CC)\in\Wc$ and all $A \in\lan$, 
$(\Phi, \CC)\Vd A$ if and only if $A\in\Phi$.
\end{lemma}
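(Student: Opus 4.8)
The plan is to prove both directions of the biconditional simultaneously by induction on the construction of $A$, following the same pattern as the proof of Lemma~\ref{lemma:can model rel} for the birelational canonical model, but using Lemma~\ref{lemma:neigh seg} in place of Lemmas~\ref{lemma:lind rel} and \ref{lemma:rel seg}. Writing $\vd$ for $\vd_{\MLvar}$, the base case $A = p$ is immediate from the definition of $\Vc$, the case $A = \bot$ from the definition of $\FWc$, and the cases $A = B \land C$ and $A = B \lor C$ follow at once from the induction hypothesis together with closure under derivation and the disjunction property of $\MLvar$-full sets. For $A = B \imp C$ the argument is verbatim that of Lemma~\ref{lemma:can model rel}: if $B \imp C \in \Phi$ and $(\Phi, \CC) \lessc (\Psi, \DD) \Vd B$, then $\Phi \subseteq \Psi$ gives $B \imp C \in \Psi$, the induction hypothesis gives $B \in \Psi$, hence $C \in \Psi$ and $(\Psi, \DD) \Vd C$; and if $B \imp C \notin \Phi$, then $\Phi \cup \{B\} \not\vd C$, so Lemma~\ref{lemma:lind rel} yields an $\MLvar$-full $\Psi \supseteq \Phi \cup \{B\}$ with $C \notin \Psi$, Lemma~\ref{lemma:neigh seg}(i) yields a segment $(\Psi, \DD)$, and then $(\Phi, \CC) \lessc (\Psi, \DD)$ with $(\Psi, \DD) \Vd B$ and $(\Psi, \DD) \not\Vd C$, so $(\Phi, \CC) \not\Vd B \imp C$.

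The heart of the proof is the two modal cases. For $A = \Box B$ with $\Box B \in \Phi$: any $(\Psi, \DD) \morec (\Phi, \CC)$ has $\Box B \in \Psi$, so the first clause of Definition~\ref{def:neigh seg} gives some $\U \in \DD$ with $B \in \Theta$ for every $\Theta \in \U$; then $\alpha_{\U} \in \Nc((\Psi, \DD))$ and, by the induction hypothesis, $\alpha_{\U} \ufor B$, so $(\Phi, \CC) \Vd \Box B$. For $\Box B \notin \Phi$, I would apply Lemma~\ref{lemma:neigh seg}(ii) to get an $\MLvar$-neighbourhood segment $(\Phi, \CC')$ in which every $\U \in \CC'$ contains some $\Psi$ with $B \notin \Psi$; since the antecedent is unchanged, $(\Phi, \CC) \lessc (\Phi, \CC')$. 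Every neighbourhood of $(\Phi, \CC')$ is of the form $\alpha_{\U}$ with $\U \in \CC'$ (Definition~\ref{def:can model neigh}), and for such $\U$ one picks $\Psi \in \U$ with $B \notin \Psi$ and a segment $(\Psi, \DD)$ via Lemma~\ref{lemma:neigh seg}(i), so $(\Psi, \DD) \in \alpha_{\U}$ and, by the induction hypothesis, $(\Psi, \DD) \not\Vd B$; hence no neighbourhood of $(\Phi, \CC')$ universally forces $B$, so $(\Phi, \CC') \not\Vd \Box B$ and therefore $(\Phi, \CC) \not\Vd \Box B$. The case $A = \diam B$ is dual, using the second clause of Definition~\ref{def:neigh seg} and Lemma~\ref{lemma:neigh seg}(iii): if $\diam B \in \Phi$, then at every $\morec$-successor $(\Psi, \DD)$ each neighbourhood $\alpha_{\U}$ contains (by the segment condition, the induction hypothesis, and Lemma~\ref{lemma:neigh seg}(i)) a segment forcing $B$, so $\alpha_{\U} \efor B$ and $(\Phi, \CC) \Vd \diam B$; and if $\diam B \notin \Phi$, then Lemma~\ref{lemma:neigh seg}(iii) produces $(\Phi, \CC')$ with $(\Phi, \CC) \lessc (\Phi, \CC')$ and some $\U \in \CC'$ with $B \notin \Psi$ for all $\Psi \in \U$, whence $\alpha_{\U} \in \Nc((\Phi, \CC'))$ witnesses $(\Phi, \CC') \not\Vd \diam B$ (the degenerate subcase $\U = \emptyset$, which occurs when $\Phi$ contains no $\diam$-formula, works vacuously), so $(\Phi, \CC) \not\Vd \diam B$.

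The step I expect to require the most care is the negative direction of the modal cases: the refutation is found not at the world $(\Phi, \CC)$ itself but at a $\lessc$-larger world $(\Phi, \CC')$ with the same antecedent but a suitably chosen neighbourhood component, exactly as the birelational proof passes from $(\Phi, \U)$ to $(\Phi, \VV)$; one must also be careful to note that the canonical neighbourhood function assigns to $(\Phi, \CC')$ precisely the sets $\alpha_{\U}$ for $\U \in \CC'$, so that quantifying over $\U \in \CC'$ genuinely quantifies over all neighbourhoods of that world. Everything substantive about these countermodels has, however, already been packaged into Lemma~\ref{lemma:neigh seg}, so once that lemma is available the present proof is essentially bookkeeping, and it goes through uniformly for every minimal modal logic $\MLvar$ in the family.
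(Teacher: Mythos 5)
Your proof is correct and follows essentially the same route as the paper's: the same induction, the same handling of the propositional cases via Lemma~\ref{lemma:can model rel}, and for the modal cases the same use of Lemma~\ref{lemma:neigh seg} (i)--(iii) to pass to a $\lessc$-larger world $(\Phi,\CC')$ with the same antecedent and a suitably chosen neighbourhood component. The remarks you single out as delicate (that neighbourhoods of $(\Phi,\CC')$ are exactly the $\alpha_{\U}$ for $\U\in\CC'$, and the vacuous case $\U=\emptyset$) are precisely the points the paper's argument relies on as well.
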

\begin{proof}
By induction on the construction of $A$.
For the cases $A = p, \bot, B\land C, B \lor C, B \imp C$
the proof is exactly as the proof of Lemma \ref{lemma:can model rel}. We consider the inductive cases $A = \Box B, \diam B$.
\begin{enumerate}[leftmargin=*, align=left]
\item[($A = \Box B$)]
Suppose that $\Box B\in\Phi$.
Then for all $(\Psi,\DD)\morec(\Phi,\CC)$, $\Box B\in\Psi$.
By definition of segment, there is $\U\in\DD$ such that for all $\Theta\in\U$, $B\in\Theta$.
Then, by definition of canonical model, $\alpha_{\U}\in\N((\Psi, \DD))$, and by \ih, $(\Theta,\EE)\Vd B$ for all $(\Theta,\EE)\in\alpha_{\U}$.
Therefore $(\Phi, \CC)\Vd \Box B$. 
Now suppose that $\Box B\notin\Phi$.
By Lemma~\ref{lemma:neigh seg} (ii),
there is an $\MLvar$-segment $(\Phi, \DD)$
such that for all $\U\in\DD$, there is $\Psi\in\U$ such that $B\notin\Psi$.
By definition, $(\Phi, \DD)\in\Wc$
and $(\Phi, \CC) \lessc (\Phi, \DD)$.
Moreover, assume $\alpha\in\Nc((\Phi, \DD))$.
Then $\alpha = \alpha_{\U}$ for some $\U\in\DD$.
Thus, there is $\Psi\in\U$ such that $B\notin\Psi$.
By Lemma~\ref{lemma:neigh seg} (i),
there is an $\MLvar$-segment $(\Psi, \EE)$,
thus by definition, 
$(\Psi, \EE)\in\alpha_{\U}$,
and by \ih, $(\Psi, \EE)\not\Vd B$.
Hence $\alpha = \alpha_{\U} \not\ufor B$,
therefore $(\Phi, \CC)\not\Vd\Box B$.

\item[($A = \diam B$)]
Suppose that $\diam B\in\Phi$.
Then for all $(\Psi,\DD)\morec(\Phi,\CC)$, $\diam B\in\Psi$.
By definition of segment, 
for all $\U\in\DD$, there is $\Psi\in\U$ such that $B\in\Psi$.
Now, assume $\alpha\in\Nc((\Psi,\DD))$.
By definition, $\alpha = \alpha_{\U}$ for some $\U\in\DD$.
Then there is $\Psi\in\U$ such that $B\in\Psi$.
By Lemma~\ref{lemma:neigh seg} (i),
there is an $\MLvar$-segment $(\Psi, \EE)$,
thus by definition, 
$(\Psi, \EE)\in\alpha_{\U}$,
and by \ih, $(\Psi, \EE)\Vd B$,
which implies $\alpha = \alpha_{\U} \efor B$.
Since this holds for every $\alpha\in\Nc((\Psi,\DD))$,
we have $(\Phi, \CC)\Vd\diam B$.
Now suppose that $\diam B\notin \Phi$.
By Lemma~\ref{lemma:neigh seg} (iii),
there is an $\MLvar$-segment $(\Phi, \DD)$
and a $\U\in\DD$ such that for all $\Psi\in\U$, $B\notin\Psi$.
By definition, $(\Phi, \DD)\in\Wc$,
$(\Phi, \CC) \lessc (\Phi, \DD)$,
and $\alpha_{\U}\in\Nc((\Phi, \DD))$.
Moreover, for all $(\Psi, \EE)\in\alpha_{\U}$,
$B\notin\Psi$,
then by \ih, $(\Psi, \EE)\not\Vd B$.
Hence,
$\alpha_{\U} \not\efor B$,
therefore $(\Phi, \CC)\not\Vd\diam B$.\qedhere
\end{enumerate}
\end{proof}

As a consequence of these lemmas, we obtain the following completeness result
(cf.~proof of Theorem~\ref{th:compl MK}).

\begin{theorem}[Completeness]
For all $A\in\lan$  and all minimal modal logics $\MLvar$,
if $A$ is valid in every minimal neighbourhod model for $\MLvar$, then $A$ is derivable in $\MLvar$.
\end{theorem}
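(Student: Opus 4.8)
The plan is to prove the contrapositive by the canonical model technique, in exact parallel with the proof of Theorem~\ref{th:compl MK}, now drawing on the canonical neighbourhood model of Definition~\ref{def:can model neigh} together with Lemmas~\ref{lemma:neigh seg} and~\ref{lemma:can model neigh}. So suppose $\MLvar\not\vd A$. First I would invoke the Lindenbaum lemma (Lemma~\ref{lemma:lind rel}, which holds for every $\MLvar$) to obtain an $\MLvar$-full set $\Psi$ with $A\notin\Psi$. Then Lemma~\ref{lemma:neigh seg}(i) produces an $\MLvar$-neighbourhood segment $(\Psi,\CC)$, and by Definition~\ref{def:can model neigh} this pair is a world of the canonical neighbourhood model $\Mc$ for $\MLvar$.

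Next I would appeal to the lemma stating that the canonical neighbourhood model for $\MLvar$ is itself a minimal neighbourhood model for $\MLvar$ (so that it satisfies all the conditions among (\cC), (\cN), (\cP), (\cD), (\cT) corresponding to the axioms of $\MLvar$); hence $\Mc$ is a legitimate model for $\MLvar$. Applying the truth lemma (Lemma~\ref{lemma:can model neigh}) at the world $(\Psi,\CC)$ yields $(\Psi,\CC)\Vd A$ iff $A\in\Psi$; since $A\notin\Psi$, we conclude $(\Psi,\CC)\not\Vd A$. Therefore $A$ fails at a world of a minimal neighbourhood model for $\MLvar$, so $A$ is not valid in all such models, which is exactly the contrapositive of the claim.

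Since the genuinely substantive work has already been carried out---constructing $\MLvar$-segments with the required closure properties (Lemma~\ref{lemma:neigh seg}), verifying that the canonical neighbourhood function inherits (\cC), (\cN), (\cP), (\cD), (\cT), and proving the inductive truth lemma (Lemma~\ref{lemma:can model neigh})---the completeness theorem itself is a routine assembly of these facts, uniform in the choice of $\MLvar$, and I do not expect any real obstacle. The only point deserving attention is pure bookkeeping: checking that the same chain of implications literally applies to each of the fourteen logics, which is guaranteed because every lemma in the chain was stated and proved ``for every minimal modal logic $\MLvar$'' rather than for a single system.
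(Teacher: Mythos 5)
Your proposal is correct and is exactly the argument the paper intends: the paper states this theorem as an immediate consequence of Lemma~\ref{lemma:neigh seg}, the lemma that the canonical neighbourhood model is a minimal neighbourhood model for $\MLvar$, and the truth lemma (Lemma~\ref{lemma:can model neigh}), assembled via the Lindenbaum lemma just as in the proof of Theorem~\ref{th:compl MK}. No gaps; the uniformity in $\MLvar$ is indeed guaranteed because each supporting lemma is proved for all fourteen systems.
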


Finally, on the basis of this semantic characterisation of logics $\MLvar$,
we can show that, for each classical logic $\logic$,
the fusion $\Sfour\oplus\logic$ is the bimodal companion of the corresponding minimal logic $\MLvar$.

\begin{theorem}\label{th:companion ML}
For all $A\in\lan$ and all minimal modal logics $\MLvar$,
$A$ is derivable in $\MK$ if and only if $A^{\tr}$ is derivable in $\Sfour\oplus\K$.
\end{theorem}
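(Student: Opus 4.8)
The plan is to follow the proof of Theorem~\ref{th:companion MK} almost verbatim, replacing the birelational semantics with the minimal neighbourhood semantics of Definition~\ref{def:neigh semantics}. We may rely on two facts established above: that $\MLvar$ is sound and complete with respect to the class of minimal neighbourhood models for $\MLvar$, and that $\Sfour\oplus\logic$ is sound and complete with respect to the class of models $\langle \W, \R_1, \N, \V\rangle$ where $\R_1$ is reflexive and transitive and $\N$ satisfies exactly those conditions among (\cC), (\cN), (\cP), (\cD), (\cT) satisfied by the models for $\logic$, as remarked right after the characterisation of $\Sfour\oplus\logic$. The key observation is that the conditions on $\N$ are \emph{literally the same} on the two sides, so the whole argument reduces to matching the relational component $\less$ with the $\Sfour$-relation $\R_1$ and encoding the fallible worlds via the constant $\fp$, exactly as in the birelational case.

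For the direction ($\Rightarrow$), suppose $\Sfour\oplus\logic\not\vd A^{\tr}$ and fix a model $\M = \langle \W, \R_1, \N, \V\rangle$ for $\Sfour\oplus\logic$ with $\M, w\not\Vd A^{\tr}$. Define $\M' = \langle \W, \less, \FW, \N, \V'\rangle$ over the same $\W$ and $\N$, putting $\less = \R_1$, $\V'(p) = \{v \mid \textup{for all } u,\ v\R_1 u \textup{ implies } u\in\V(p)\}$ and $\FW = \{v \mid \textup{for all } u,\ v\R_1 u \textup{ implies } u\in\V(\fp)\}$. Reflexivity and transitivity of $\R_1$ make $\V'(p)$ and $\FW$ both $\less$-upward closed, so $\M'$ is a minimal neighbourhood model, and since $\N$ is unchanged it satisfies each condition (\cX) that $\M$ does; hence $\M'$ is a model for $\MLvar$. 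One then shows by induction on $B$ that $\M', v\Vd B$ iff $\M, v\Vd B^{\tr}$. The propositional cases, including the base cases $p$ and $\bot$ where the choice of $\V'$ as the $\R_1$-box-closure of $\V$ is essential, go through exactly as in Theorem~\ref{th:companion MK}; the modal cases follow by unfolding the translation, since $(\Box B)^{\tr} = \BoxI\BoxM B^{\tr}$ and the clause for $\Box$ in minimal neighbourhood models both state that every $\less$/$\R_1$-successor $v$ has a neighbourhood $\alpha\in\N(v)$ all of whose points force $B$ (resp.\ $B^{\tr}$), and similarly $(\diam B)^{\tr} = \BoxI\diamM B^{\tr}$ matches the clause for $\diam$. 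Taking $B = A$ and $v = w$ gives $\M', w\not\Vd A$, so $\MLvar\not\vd A$.

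For ($\Leftarrow$) the construction is the inverse. From a minimal neighbourhood model $\M = \langle \W, \less, \FW, \N, \V\rangle$ for $\MLvar$ with $\M, w\not\Vd A$, set $\M'' = \langle \W, \R_1, \N, \V''\rangle$ with $\R_1 = \less$, $\V''(p) = \V(p)$ and $\V''(\fp) = \FW$. Then $\R_1$ is reflexive and transitive and $\N$ retains each of its conditions, so $\M''$ is a model for $\Sfour\oplus\logic$, and the same induction --- now invoking the hereditary property, i.e.\ $\less$-closure of $\V(p)$ and $\FW$, exactly as in Theorem~\ref{th:companion MK} --- yields $\M, v\Vd B$ iff $\M'', v\Vd B^{\tr}$, hence $\M'', w\not\Vd A^{\tr}$ and $\Sfour\oplus\logic\not\vd A^{\tr}$.

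I do not expect any genuine obstacle beyond bookkeeping: because $\N$ is carried over unchanged in both directions and the frame conditions (\cX) are stated purely in terms of $\N$, their transfer is immediate, and the only points requiring care are verifying that $\M'$ in direction ($\Rightarrow$) is a bona fide minimal neighbourhood model (upward closure of $\V'(p)$ and $\FW$) and that the base cases of the induction use the $\R_1$-box-closure definition of $\V'$. The proof is thus the neighbourhood analogue of Theorem~\ref{th:companion MK}, with $\N$ playing the inert role that $\R_2$ played there.
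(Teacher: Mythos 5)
Your proposal is correct and follows essentially the same route as the paper's own proof: both directions construct the companion model over the same carrier and the same neighbourhood function, identifying $\less$ with the $\Sfour$-relation, taking $\V'(p)$ and $\FW$ as the box-closures of $\V(p)$ and $\V(\fp)$ in one direction and $\V''(p)=\V(p)$, $\V''(\fp)=\FW$ in the other, and then proving the truth-preservation equivalence by induction, reusing the propositional cases from Theorem~\ref{th:companion MK}. No gaps; the only differences from the paper are notational.
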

\begin{proof}
($\Rightarrow$)
Suppose that $\Sfour\oplus\logic\not\vd A^{\tr}$. 
Then there are a model $\M = \langle \W, \R, \N, \V\rangle$
for $\Sfour\oplus\logic$ and a world $w$ such that $\M, w\not \Vd A^{\tr}$.
We define $\M' = \langle \W, \less, \FW, \N, \V'\rangle$ over the same $\W$ and $\N$,
where
$\less \ = \R$, 
for all $p\in\atm$, $\V'(p) = \{v \mid \textup{for all } u,  v\R u \textup{ implies } u\in\V(p)\}$,
and $\FW = \{v \mid \textup{for all } u,  v\R u \textup{ implies } u\in\V(\fp)\}$.
By the properties of $\N$ in $\M$,
it immediately follows that $\M'$ is a minimal neighbourhood model for $\MLvar$.
We show that for all $v\in\W$ and all $B\in\lan$,
$\M', v \Vd B$ if and only if $\M, v \Vd B^{\tr}$,
which implies that $\M', w \not\Vd A$,
therefore $\MLvar\not\vd A$.
The proof is by induction on the construction of $B$.
The cases $B = p, \bot, C\land D, C\lor D, C\imp D$ are as in the proof of Theorem~\ref{th:companion MK},
case ($\Rightarrow$).
We show the cases $B = \Box C, \diam C$.

\begin{enumerate}[leftmargin=*, align=left]
\item[($B = \Box C$)] $\M', v \Vd \Box C$ iff
for all $u\more v$, there is $\alpha\in\N(u)$ such that for all $z\in\alpha$, $\M', z \Vd C$;
iff
(by definition of $\less$ and \ih)
for all $u$, if $v \R u$, then there is $\alpha\in\N(u)$ such that for all $z\in\alpha$, $\M, z \Vd C^{\tr}$;
iff
for all $u$, if $v \R u$, then $\M, v \Vd \Box_2 C^{\tr}$;
iff
$\M, v \Vd \Box_1\Box_2 C^{\tr}$.

\item[($B = \diam C$)] $\M', v \Vd \diam C$ iff
for all $u\more v$, for all $\alpha\in\N(u)$, there is $z\in\alpha$ such that $\M', z \Vd C$;
iff
(by definition of $\less$ and \ih)
for all $u$, if $v \R u$, then for all $\alpha\in\N(u)$, there is $z\in\alpha$ such that $\M, z \Vd C^{\tr}$;
iff
for all $u$, if $v \R u$, then $\M, v \Vd \diam_2 C^{\tr}$;
iff
$\M, v \Vd \Box_1\diam_2 C^{\tr}$.
\end{enumerate}

\smallskip
\noindent
($\Leftarrow$) 
Suppose that $\MLvar\not\vd A$. 
Then there are a minimal neighbourhood model $\M = \langle \W, \less, \FW, \N, \V\rangle$
for $\MLvar$ and a world $w$ such that $\M, w\not \Vd A$.
We define $\M'' = \langle \W, \R, \N, \V''\rangle$ over the same $\W$ and $\N$,
where
$\R = \less$, 
for all $p\in\atm$, $\V''(p) = \V(p)$,
and $\V''(\fp) = \FW$.
Then $\M''$ is a model for $\Sfour\oplus\logic$.
We show that for all $v\in\W$ and all $B\in\lan$,
$\M, v \Vd B$ if and only if $\M'', v \Vd B^{\tr}$,
which implies that $\M'', w \not\Vd A$,
therefore $\Sfour\oplus\logic\not\vd A$.
The proof is by induction on the construction of $B$.
The cases $B = p, \bot, C\land D, C\lor D, C\imp D$ are as in the proof of Theorem~\ref{th:companion MK},
case ($\Leftarrow$).
We show the cases $B = \Box C, \diam C$.

\begin{enumerate}[leftmargin=*, align=left]
\item[($B = \Box C$)] $\M, v \Vd \Box C$ iff
for all $u\more v$, there is $\alpha\in\N(u)$ such that for all $z\in\alpha$, $\M, z \Vd C$;
iff
(by definition of $\R$ and \ih)
for all $u$, if $v \R u$, then there is $\alpha\in\N(u)$ such that for all $z\in\alpha$, $\M'', z \Vd C^{\tr}$;
iff
for all $u$, if $v \R u$, then $\M'', v \Vd \Box_2 C^{\tr}$;
iff
$\M'', v \Vd \Box_1\Box_2 C^{\tr}$.

\item[($B = \diam C$)] $\M, v \Vd \diam C$ iff
for all $u\more v$, for all $\alpha\in\N(u)$, there is $z\in\alpha$ such that $\M, z \Vd C$;
iff
(by definition of $\R$ and \ih)
for all $u$, if $v \R u$, then for all $\alpha\in\N(u)$, there is $z\in\alpha$ such that $\M'', z \Vd C^{\tr}$;
iff
for all $u$, if $v \R u$, then $\M'', v \Vd \diam_2 C^{\tr}$;
iff
$\M'', v \Vd \Box_1\diam_2 C^{\tr}$.\qedhere
\end{enumerate}
\end{proof}

\subsection{Minimal modal logics via sequent calculi}

\begin{figure}[t]
\begin{small}

\ax{$A \Seq B$}
\llab{\ruleMboxcl}
\uinf{$\Box A \Seq \Box B$}
\disp
\hfill
\ax{$A \Seq B$}
\llab{\ruleMdiamcl}
\uinf{$\diam A \Seq \diam B$}
\disp
\hfill
\ax{$A, B \Seq$}
\llab{\rulemncMcl}
\uinf{$\Box A, \diam B \Seq$}
\disp
\hfill
\ax{$\Seq A, B$}
\llab{\rulememMcl}
\uinf{$\Seq \Box A,  \diam B$}
\disp

\vspace{0.2cm}
\ax{$\Sigma, A \Seq B, \Pi$}
\llab{\ruleCboxcl}
\uinf{$\Box\Sigma, \Box A \Seq \Box B,  \diam\Pi$}
\disp
\hfill
\ax{$\Sigma, A \Seq B, \Pi$}
\llab{\ruleCdiamcl}
\uinf{$\Box\Sigma, \diam A \Seq \diam B, \diam\Pi$}
\disp
\hfill
\ax{$\Sigma, A, B \Seq$}
\llab{\rulemncCcl}
\uinf{$\Box\Sigma, \Box A, \diam B \Seq$}
\disp

\vspace{0.2cm}
\ax{$\Seq A, B, \Pi$}
\llab{\rulememCcl}
\uinf{$\Seq \Box A,  \diam B, \diam\Pi$}
\disp
\hfill
\ax{$\Seq A$}
\llab{\ruleNboxcl}
\uinf{$\Seq \Box A$}
\disp
\hfill
\ax{$A \Seq$}
\llab{\ruleNdiamcl}
\uinf{$\diam A \Seq$}
\disp
\hfill
\ax{$\Sigma \Seq A, \Pi$}
\llab{\ruleKboxcl}
\uinf{$\Box\Sigma \Seq \Box A,  \diam\Pi$}
\disp

\vspace{0.2cm}
\ax{$\Sigma, A \Seq \Pi$}
\llab{\ruleKdiamcl}
\uinf{$\Box\Sigma, \diam A \Seq \diam\Pi$}
\disp
\hfill
\ax{$A \Seq$}
\llab{\rulePboxcl}
\uinf{$\Box A \Seq$}
\disp
\hfill
\ax{$\Seq A$}
\llab{\rulePdiamcl}
\uinf{$\Seq\diam A$}
\disp
\hfill
\ax{$A \Seq B$}
\llab{\ruleDcl}
\uinf{$\Box A \Seq \diam B$}
\disp
\hfill
\ax{$A, B \Seq$}
\llab{\ruleDboxcl}
\uinf{$\Box A, \Box B \Seq$}
\disp

\vspace{0.2cm}
\ax{$\Seq A, B$}
\llab{\ruleDdiamcl}
\uinf{$\Seq \diam A, \diam B$}
\disp
\hfill
\ax{$\Sigma \Seq \Pi$}
\llab{\ruleCDcl}
\uinf{$\Box\Sigma \Seq \diam\Pi$}
\disp
\hfill
\ax{$\G, A \Seq \D$}
\llab{\ruleTboxcl}
\uinf{$\G, \Box A \Seq \D$}
\disp
\hfill
\ax{$\G \Seq A, \D$}
\llab{\ruleTdiamcl}
\uinf{$\G \Seq \diam A, \D$}
\disp
\end{small}
\caption{\label{fig:classical modal seq rules} 
Modal  rules for classical sequent calculi $\Gone\logic$.} 
\end{figure}

%
%

\begin{figure}[t]
\begin{small}
\ax{$A \Seq B$}
\llab{\rulemMbox}
\uinf{$\Box A \Seq \Box B$}
\disp
\hfill
\ax{$A \Seq B$}
\llab{\rulemMdiam}
\uinf{$\diam A \Seq \diam B$}
\disp
\hfill
\ax{$\Seq A$}
\llab{\rulemNbox}
\uinf{$\Seq \Box A$}
\disp
\hfill
\ax{$\Sigma, A \Seq B$}
\llab{\rulemCbox}
\uinf{$\Box\Sigma, \Box A \Seq \Box B$}
\disp

\vspace{0.2cm}
\ax{$\Sigma \Seq A$}
\llab{\rulemKbox}
\uinf{$\Box\Sigma \Seq \Box A$}
\disp
\hfill
\ax{$\Sigma, A \Seq B$}
\llab{\rulemKdiam}
\uinf{$\Box\Sigma, \diam A \Seq \diam B$}
\disp
\hfill
\ax{$\Seq A$}
\llab{\rulemPdiam}
\uinf{$\Seq\diam A$}
\disp
\hfill
\ax{$A \Seq B$}
\llab{\rulemD}
\uinf{$\Box A \Seq \diam B$}
\disp

\vspace{0.2cm}
\qquad \quad \hfill
\ax{$\Sigma \Seq A$}
\llab{\rulemCD}
\uinf{$\Box\Sigma \Seq \diam A$}
\disp
\hfill
\ax{$\G, A \Seq C$}
\llab{\rulemTbox}
\uinf{$\G, \Box A \Seq C$}
\disp
\hfill
\ax{$\G \Seq A$}
\llab{\rulemTdiam}
\uinf{$\G \Seq \diam A$}
\disp
\hfill \quad\qquad
\end{small}
\caption{\label{fig:minimal modal seq rules} 
Modal rules for minimal sequent calculi $\Gone\MLvar$.}
\end{figure}

G1-style sequent calculi for the considered classical modal logics are defined extending $\Gone\CPL$
(Figure~\ref{fig:prop seq rules}) with the following modal rules from Figure~\ref{fig:classical modal seq rules}:
\begin{center}
\begin{tabular}{lllllll}
$\seqEM$ := \ruleMboxcl, \ruleMdiamcl, \rulemncMcl, \rulememMcl &
$\seqEMP$ := $\seqEM$ + \rulePboxcl, \rulePdiamcl \\

$\seqEMN$ := $\seqEM$ + \ruleNboxcl, \ruleNdiamcl &
$\seqEMNP$ := $\seqEMN$ + \rulePboxcl, \rulePdiamcl \\

$\seqEMC$ := \ruleCboxcl, \ruleCdiamcl, \rulemncCcl, \rulememCcl \\

\vspace{0.2cm}
$\seqK$ := \ruleKboxcl, \ruleKdiamcl \\

$\seqEMD$ := $\seqEM$ + \ruleDcl, \ruleDboxcl, \ruleDdiamcl &
$\seqEMT$ := $\seqEM$ + \ruleTboxcl, \ruleTdiamcl \\

$\seqEMND$ := $\seqEMN$ + \ruleDcl, \ruleDboxcl, \ruleDdiamcl &
$\seqEMNT$ := $\seqEMN$ + \ruleTboxcl, \ruleTdiamcl \\

$\seqEMCD$ := $\seqEMC$ + \ruleCDcl &
$\seqEMCT$ := $\seqEMC$ + \ruleTboxcl, \ruleTdiamcl \\

$\seqKD$ := $\seqK$ + \ruleCDcl &
$\seqKT$ := $\seqK$ + \ruleTboxcl, \ruleTdiamcl \\
\end{tabular}
\end{center}

These calculi are studied and shown to be cut-free complete in \cite{indrzejczak:2005,lavendhomme:2000,Lellmann:2019,orlandelli:2020}.
By applying the single-succedent restriction to the classical calculi $\Gone\logic$,
we obtain the corresponding calculi $\Gone\MLvar$ 
which extend $\Gone\MPL$
(Figure~\ref{fig:prop seq rules}) with the following modal rules from Figure~\ref{fig:classical modal seq rules}:

\begin{center}
\begin{tabular}{lllllll}
$\Gone\MM$ := \rulemMbox, \rulemMdiam &
$\Gone\MMP$ := $\Gone\MM$ + \rulemPdiam \\

$\Gone\MMN$ := $\Gone\MM$ + \rulemNbox &
$\Gone\MMNP$ := $\Gone\MMN$ + \rulemPdiam \\

$\Gone\MMC$ := \rulemCbox, \rulemKdiam\\

\vspace{0.2cm}
$\Gone\MK$ := \rulemKbox, \rulemKdiam \\

$\Gone\MMD$ := $\Gone\MM$ + \rulemD, \rulemPdiam &
$\Gone\MMT$ := $\Gone\MM$ + \rulemTbox, \rulemTdiam \\

$\Gone\MMND$ := $\Gone\MMN$ + \rulemD, \rulemPdiam &
$\Gone\MMNT$ := $\Gone\MMN$ + \rulemTbox, \rulemTdiam \\

$\Gone\MMCD$ := $\Gone\MMC$ + \rulemCD &
$\Gone\MMCT$ := $\Gone\MMC$ + \rulemTbox, \rulemTdiam \\

$\Gone\MKD$ := $\Gone\MK$ + \rulemCD &
$\Gone\MKT$ := $\Gone\MK$ + \rulemTbox, \rulemTdiam \\
\end{tabular}
\end{center}

As before, the rules containing sequents with an empty succedent
or with two active/principal formulas in the succedent are dropped
(namely, \rulemncMcl, \rulememMcl, \rulemncCcl, \rulememCcl, \ruleNdiamcl, \rulePboxcl, \ruleDboxcl\ and  \ruleDdiamcl),
while the remaining rules preserve only one formula in the consequent of sequents
(note in particular that the modal context $\diam\Pi$ is removed from \rulemCbox, \rulemKbox\ and \rulemKdiam).
Observe also that the single-succedent restriction applied to \ruleCdiamcl\ and \ruleKdiamcl\ produces the same rule \rulemKdiam.
Finally, the calculi $\Gone\MMD$ and $\Gone\MMND$ contain the rule \rulemPdiam\ 
that corresponds to the restriction of the rule \ruleDdiamcl\ in the particular case where $A = B$
(\rulePdiamcl\ is derivable in $\seqEMD$ and $\seqEMND$ from \ruleDdiamcl\ and \lctrcl).

We now show that the rule \cut\ is admissible in the calculi $\Gone\MLvar$.
As a consequence of this result,
we prove that the calculi $\Gone\MLvar$ are equivalent to the corresponding axiomatic systems $\MLvar$.

\begin{restatable}{theorem}{ThCutML}\label{th:cut ML}
For every calculus $\Gone\MLvar$, 
the rule \cut\ is admissible in $\Gone\MLvar$.
\end{restatable}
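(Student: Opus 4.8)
The plan is to prove cut admissibility for each calculus $\Gone\MLvar$ by a standard double induction, following the pattern already established in the proof of Theorem~\ref{th:cut MK} (which is carried out in the appendix), and then extend it uniformly across the whole family. The induction is on the lexicographically ordered pair $(|A|, h_1 + h_2)$, where $A$ is the cut formula, $|A|$ its size, and $h_1, h_2$ are the heights of the derivations of the two premisses $\G \seq A$ and $\Sigma, A \seq C$. As preliminaries we need the usual structural facts for $\Gone\MLvar$: height-preserving admissibility of weakening, height-preserving invertibility of the propositional rules (in particular $\irimp$-style rules, and $\illand$, $\illor$), and height-preserving admissibility of contraction. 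These are proved exactly as in the propositional case and are unaffected by the modal rules, since the modal rules have no propositional context interacting with contraction in a problematic way (the context in \rulemCbox, \rulemKbox, etc.\ is a multiset of boxed formulas).

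First I would dispose of the cases where the cut formula is not principal in at least one premiss: if it is a context formula (or produced by weakening, or the conclusion is an initial sequent), the cut is either eliminated outright or permuted upward into the premisses, reducing $h_1 + h_2$. Next, the cases where $A$ is principal in both premisses split by the shape of $A$. For $A$ atomic or $\bot$ the argument is immediate. For $A$ of the form $B \land C$, $B \lor C$, $B \imp C$ the standard reductions apply, using invertibility to expose the subformulas and then cutting on the strictly smaller formulas $B$ and $C$ — this is verbatim the $\MPL$ argument and needs no modification.

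The genuinely modal part is where $A = \Box B$ or $A = \diam B$ and $A$ is principal in both premisses via modal rules. Here one must check, calculus by calculus, that the two modal rules whose principal formula is $\Box B$ (resp.\ $\diam B$) can be combined: the left premiss of the cut ends in a right modal rule introducing $\Box B$ (e.g.\ \rulemKbox\ with conclusion $\Box\Sigma' \seq \Box B$ from $\Sigma' \seq B$), and the right premiss ends in a left modal rule with $\Box B$ in the antecedent (e.g.\ \rulemKbox\ again, or \rulemTbox, or \rulemCbox, or \rulemCD). In each such pair one forms a new cut on $B$, which is a smaller formula, after possibly applying weakening to align contexts, and then reapplies the appropriate modal rule to the result; one must verify that the modal rule of $\Gone\MLvar$ that does the "merging" is actually present in the calculus (this is where the particular closure of each axiom set matters — e.g.\ \rulemCbox\ is needed precisely when \axCbox\ is in the logic). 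A subtlety is the overlap noted in the text: the single-succedent restrictions of \ruleCdiamcl\ and \ruleKdiamcl\ both collapse to \rulemKdiam, and \rulemPdiam\ arises as a degenerate instance of \ruleDdiamcl; the $\Box$/$\diam$ interaction rules (\rulemD, \rulemCD, \rulemKdiam) mixing the two modalities must be handled so that a cut on a $\Box$-formula can still feed a $\diam$-conclusion. I would treat the fourteen calculi in groups according to which of \rulemCbox, \rulemKbox, \rulemTbox, \rulemD, \rulemCD are available, rather than one at a time, since the modal reduction steps are essentially identical within each group.

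The main obstacle I anticipate is not any single reduction but bookkeeping: making sure that in every calculus, every pair (right modal rule, left modal rule) sharing a principal modal formula admits a merge using a rule that is actually in that calculus, including the degenerate and overlapping cases, and that the $\diam$-side interaction rules (which have a boxed context $\Box\Sigma$) permit the boxed context to be threaded through a cut on one of the $\Box C_i$'s — this last point requires an auxiliary argument (a secondary induction on the height of the derivation producing $\Box C_i$, i.e.\ cutting "into" the modal context), just as in the $\MK$ case where cutting $\Box C$ against \rulemKdiam\ needs care. Because this is the same phenomenon already managed in Theorem~\ref{th:cut MK}, the extension is routine but lengthy, and accordingly the full argument is deferred to the appendix.

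\begin{proof}
The proof is in the appendix. It proceeds by the standard double induction on the pair consisting of the size of the cut formula and the sum of the heights of the derivations of the two premisses, exactly as in the proof of Theorem~\ref{th:cut MK}. The only cases that differ from the propositional calculus $\Gone\MPL$ are those in which the cut formula is a modal formula $\Box B$ or $\diam B$ principal in both premisses; these are handled by forming a new cut on $B$ and reapplying the relevant modal rule of $\Gone\MLvar$, which is available precisely because the corresponding modal axiom belongs to $\MLvar$.
\end{proof}
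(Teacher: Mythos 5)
Your overall strategy --- plain \cut\ on the measure (size of cut formula, sum of heights), supported by height-preserving weakening, invertibility and contraction --- is the G3-style recipe, but the calculi $\Gone\MLvar$ are G1-style: weakening and contraction are primitive rules, \mlimp\ does not repeat its principal formula in its premisses, and the left rules are the non-invertible variants. In this setting the preliminary you lean on, height-preserving admissibility of contraction, is not available: contraction cannot even be eliminated from cut-free derivations (for instance, any cut-free derivation of $\Box A \land \Box B \seq \Box(A\land B)$ in $\Gone\MK$ must end with \mlland\ and \mlctr, since \rulemKbox\ requires a fully boxed antecedent), and the usual inductive argument for hp-contraction breaks at \mlimp, where the duplicated implication does not reappear duplicated in the premisses. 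As a consequence the critical case of your induction fails: when the right premiss of the cut is obtained by \mlctr\ applied to the cut formula $A$, cutting into $\Sigma, A, A \seq C$ gives $\G,\Sigma,A\seq C$ by the induction hypothesis, but the second cut needed to remove the remaining occurrence of $A$ has a right premiss of unbounded height, so the measure (size, height-sum) does not decrease. The same difficulty recurs when the cut formula occurs several times in the boxed context of \rulemKdiam\ or \rulemCbox; the ``secondary induction on the height producing $\Box C_i$'' you invoke does not resolve it, because after one cut the remaining occurrences sit in a derivation that is no longer a subderivation of the original.

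The paper's appendix avoids exactly this by proving admissibility of Gentzen's \mix\ (multicut), with induction on (complexity of the mix formula, cut height): the mix removes all occurrences $A^n$ at once, so \mlctr\ on the cut formula is absorbed into the multiplicity, the multiple occurrences inside boxed contexts are discharged in a single step in the principal combinations such as (\rulemKbox\ -- \rulemKdiam), and no invertibility lemmas are needed at all (the propositional principal cases are reduced directly, using weakening and contraction afterwards). To repair your argument you should either switch to \mix\ as in the proof of Theorem~\ref{th:cut MK} and check the new modal combinations (\rulemNbox, \rulemPdiam, \rulemD, \rulemCD, \rulemTbox, \rulemTdiam) as the paper does, or move to contraction-absorbing (G3-style) versions of the rules --- but that would be a different calculus from $\Gone\MLvar$, and the equivalence would then have to be proved separately.
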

\begin{proof}
The proof is in the appendix.
\end{proof}

\begin{theorem}\label{th:ML equiv G1ML}
For every calculus $\Gone\MLvar$, 
for all $A\in\lan$,
$A$ is derivable in $\Gone\MLvar$
if and only if
$A$ is derivable in $\MLvar$.
\end{theorem}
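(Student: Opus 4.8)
The plan is to mirror the two-directional argument used for $\MK$ in Section~\ref{sec:seq MK}, now carried out uniformly for all fourteen systems, relying on cut admissibility (Theorem~\ref{th:cut ML}) for the steps that need it.

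For the implication from $\Gone\MLvar$-derivability to $\MLvar$-derivability, I would first note that $\fint(A \seq A) = A \imp A$ is derivable in $\MLvar$, and then show that for every rule $\varseq_1, \dots, \varseq_n / \varseq$ of $\Gone\MLvar$ the rule $\fint(\varseq_1), \dots, \fint(\varseq_n) / \fint(\varseq)$ is derivable in $\MLvar$; since derivability of rules composes along a derivation tree, the claim follows by instantiating this at $\seq A$. For the propositional rules (those of $\Gone\MPL$) the derivations are standard. For the modal rules I would argue by cases on the rules of Figure~\ref{fig:minimal modal seq rules} present in the calculus at hand, producing in each case a short axiomatic derivation on the pattern of Figure~\ref{fig:der MK}, using the axioms of the relevant base logic together with the familiar facts that necessitation is derivable from \axNbox\ and \monbox\ and that the $\Box$-aggregation $\Box C_1 \land \dots \land \Box C_n \imp \Box(C_1 \land \dots \land C_n)$ is derivable from \axCbox. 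For instance, \rulemMbox\ and \rulemMdiam\ reduce to \monbox\ and \mondiam; \rulemCbox\ and \rulemKbox\ to \monbox, \axCbox\ (and \axNbox\ for the empty-context instance); \rulemKdiam\ additionally to \axKdiam; \rulemD\ to \axD; \rulemCD\ to the $\Box$-aggregation and \axD; \rulemPdiam\ to \axPdiam; and \rulemTbox, \rulemTdiam\ to \axTbox\ and \axTdiam.

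For the converse, I would show that $\seq A$ is derivable in $\Gone\MLvar$ for every axiom $A$ of $\MLvar$ and that modus ponens is admissible in $\Gone\MLvar$. The axioms of $\MPL$ have standard $\Gone\MPL$-derivations, and modus ponens is admissible by Theorem~\ref{th:cut ML}: from derivations of $\seq A$ and $\seq A \imp B$ one builds $A \imp B, A \seq B$ out of $A \seq A$ and $B \seq B$ via \mlwk\ and \mlimp, and two instances of \cut\ give $\seq B$, exactly as in Figure~\ref{fig:der G1MK}. For the modal axioms and rules I would again proceed case by case, using the derivations of Figure~\ref{fig:der G1MK} as templates: \monbox, \mondiam, (derived) necessitation, \axNbox, \axD, \axPdiam, \axTbox, \axTdiam, \axCbox\ and \axKdiam\ each follow from the corresponding minimal sequent rule, and for the $\MMCD$/$\MKD$ family one additionally checks that \rulemCD\ yields both \axD\ and \axPdiam. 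Combining derivability of all axioms with admissibility of modus ponens closes the implication.

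The only real obstacle is the bookkeeping: making the per-rule and per-axiom case analysis exhaustive and uniform across all fourteen calculi. The delicate points are the context-carrying rules \rulemCbox, \rulemKbox, \rulemKdiam\ (where the $\Box$-context must be aggregated, via \axCbox\ or via the aggregation already available in $\K$) and the rules \rulemTbox, \rulemTdiam\ with arbitrary left context $\G$; in every case the required derivations are routine adaptations of those displayed for $\MK$, and the genuinely heavy ingredient — admissibility of \cut\ — has already been dispatched in Theorem~\ref{th:cut ML}.
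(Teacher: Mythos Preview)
Your proposal is correct and follows essentially the same approach as the paper: both directions proceed exactly as you describe, translating sequent rules into axiomatic derivations via the formula interpretation for $(\Rightarrow)$, and deriving each modal axiom/rule in the calculus (with modus ponens via admissible \cut) for $(\Leftarrow)$. The paper's own proof is in fact terser than yours on the case analysis, so your additional remarks (e.g.\ that \rulemCD\ must yield both \axD\ and \axPdiam\ in the $\MMCD$ case) are welcome rather than superfluous.
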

\begin{proof}

($\Rightarrow$) For every modal rule $\varseq_1, ..., \varseq_n/\varseq$ of $\Gone\MLvar$,
we show that the rule $\fint(\varseq_1), ..., \fint(\varseq_n)/\fint(\varseq)$
is derivable in $\MLvar$.
(\rulemMbox) From $A \imp B$, by \monbox\ we get $\Box A \imp \Box B$.
(\rulemMdiam) From $A \imp B$, by \mondiam\ we get $\diam A \imp \diam B$.
(\rulemNbox) From $A$ we get $\top\imp A$, then by \monbox, $\Box \top \imp \Box A$,
hence with $\Box\top$ we obtain $\Box A$.
(\rulemPdiam) From $A$ we get $\top\imp A$, then by \mondiam, $\diam \top \imp \diam A$,
hence with $\diam\top$ we obtain $\diam A$.
(\rulemD) From $A \imp B$, by \monbox, $\Box A \imp \Box B$,
then with $\Box B \imp \diam B$ we get $\Box A imp \diam B$.
(\rulemCD) Assume $\Sigma = A_1 \land ... \land A_n$.
Then from $A_1 \land ... \land A_n \imp B$, by \monbox\ we get $\Box(A_1 \land ... \land A_n) \imp \Box B$.
From \axCbox\ we have $\Box A_1 \land ... \land \Box A_n \imp\Box(A_1 \land ... \land A_n)$,
then 
with $\Box B \imp \diam B$ we obtain $\Box A_1 \land ... \land \Box A_n \imp \diam B$.
(\rulemTbox) From $\AND\G \land A \imp B$, with $\Box A \imp A$ we get $\AND\G \land \Box A \imp B$.
(\rulemTdiam) From $\AND\G \imp A$, with $A \imp\diam A$ we get $\AND\G \imp \diam A$.
For \rulemCbox, \rulemKbox\ and \rulemKdiam\ see the derivations in Figure~\ref{fig:der MK},
replacing  consecutive applications of \nec\ and \axKbox\ with one application of \monbox.

($\Leftarrow$)
For the other direction, it is easy to see that the modal axioms and rules of $\MLvar$ are derivable, respectively admissible, in $\Gone\MLvar$.
We show as examples the derivations of \axCbox\ and \monbox.
\begin{center}
\begin{small}
\ax{$A, B \seq A$}
\ax{$A, B \seq B$}
\rlab{\mrland}
\binf{$A, B \seq A \land B$}
\rlab{\rulemCbox}
\uinf{$\Box A, \Box B \seq \Box(A\land B)$}
\rlab{\mlland}
\uinf{$\Box A \land \Box B, \Box B \seq \Box(A\land B)$}
\rlab{\mlland}
\uinf{$\Box A \land \Box B, \Box A \land \Box B \seq \Box(A\land B)$}
\rlab{\mlctr}
\uinf{$\Box A \land \Box B \seq \Box(A\land B)$}
\rlab{\mrimp}
\uinf{$\Seq \Box A \land \Box B \imp \Box(A\land B)$}
\disp
\qquad
\ax{$\seq A \imp B$}
\ax{$A \imp B, A \seq B$}
\rlab{\cut}
\binf{$A \seq B$}
\rlab{\ruleMbox}
\uinf{$\Box A \seq \Box B$}
\rlab{\mrimp}
\uinf{$\Seq \Box A \imp \Box B$}
\disp
\end{small}
\end{center}
\end{proof}

\subsection{Constructive modal logics}

On the basis of the relations between $\MK$ and $\CK$  observed in Section~\ref{sec:rel MK CK},
we now define a constructive counterpart for each logic $\MLvar$.
First, the constructive modal logics $\CLvar$ are defined extending $\MLvar$ with ex falso quodlibet $\bot\imp A$.

\begin{definition}[Constructive modal logics]
For every minimal modal logic $\MLvar$,
the corresponding constructive modal logic $\CLvar$
is defined 
as $\MLvar + \bot\imp A$.
\end{definition}

We show that each logic $\CLvar$ is semantically characterised by neighbourhood models obtained  by suitably restricting the minimal neighbourhood models for the corresponding system $\MLvar$. 
The restriction is analogous to the one of Definition~\ref{def:const birel model},
with the difference that the neighbourhood function is now involved.

\begin{definition}[Constructive neighbourhood semantics]\label{def:const neigh model}
For every constructive modal logic $\CLvar$,
a \emph{constructive neighbourhood model} for $\CLvar$
is any minimal neighbourhood model $\M = \langle \W, \less, \FW, \N, \V\rangle$
for the corresponding minimal logic $\MLvar$ such that 
the following hold for all $w\in\FW$:
\begin{itemize}
\item[(i)] $w\in\V(p)$ for all $p\in\atm$;
\item[(ii)] there is $\alpha\in\N(w)$ such that $\alpha\subseteq\FW$; 
\item[(iii)] for all $\alpha\in\N(w)$, $\alpha\cap\FW\not=\emptyset$.
\end{itemize}
\end{definition}

\begin{theorem}
For all $A\in\lan$  and all constructive modal logics $\CLvar$,
$A$ is valid in all constructive neighbourhod models for $\CLvar$
if and only if $A$ is derivable in $\CLvar$.
\end{theorem}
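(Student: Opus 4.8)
The plan is to prove both directions by merging, in the neighbourhood setting, the argument already carried out for $\CK$ in Theorems~\ref{th:sound CK} and~\ref{th:compl CK} with the soundness and completeness of $\MLvar$ with respect to minimal neighbourhood models. For soundness ($\Leftarrow$): since $\CLvar = \MLvar + \bot\imp A$ and every constructive neighbourhood model for $\CLvar$ is in particular a minimal neighbourhood model for $\MLvar$, Theorem~\ref{th:soundness ML} already yields validity of all axioms and rules of $\MLvar$; it remains to show $\M\models\bot\imp A$, i.e.\ that $w\Vd A$ whenever $w\in\FW$. This is proved by induction on $A$ exactly as in the proof of Theorem~\ref{th:sound CK}: the atomic case uses clause (i) of Definition~\ref{def:const neigh model}, the propositional cases use the \ih\ together with the $\less$-upward closure of $\FW$, the case $A = \Box B$ uses clause (ii) (for $v\more w$ one has $v\in\FW$; pick $\alpha\in\N(v)$ with $\alpha\subseteq\FW$, then $\alpha\ufor B$ by \ih), and the case $A = \diam B$ uses clause (iii) (for $v\more w$ and any $\alpha\in\N(v)$, some $u\in\alpha\cap\FW$ gives $\alpha\efor B$ by \ih).

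For completeness ($\Rightarrow$) I would use the canonical neighbourhood model $\Mc$ for $\CLvar$ of Definition~\ref{def:can model neigh}. First note that Lemma~\ref{lemma:lind rel}, Lemma~\ref{lemma:neigh seg} and Lemma~\ref{lemma:can model neigh} all hold with $\MLvar$ uniformly replaced by $\CLvar$ — the proofs are literally the same, as $\CLvar$ has the same modal axioms and rules as $\MLvar$ — and by the same argument as for $\MLvar$ the model $\Mc$ is a minimal neighbourhood model for $\MLvar$. Hence the truth lemma $(\Phi,\CC)\Vd A$ if and only if $A\in\Phi$ holds on $\Mc$. The genuinely new point is that $\Mc$ is a \emph{constructive} neighbourhood model for $\CLvar$. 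So suppose $(\Phi,\CC)\in\FWc$, i.e.\ $\bot\in\Phi$; since $\Phi$ is closed under derivation and $\CLvar\vd\bot\imp B$ for every $B$, we get $\Phi = \lan$. Then: (i) every $p$ lies in $\Phi$, so $(\Phi,\CC)\in\Vc(p)$; (ii) $\Box\bot\in\Phi$, so by the first clause of the definition of segment there is $\U\in\CC$ with $\bot\in\Psi$ for all $\Psi\in\U$, whence $\alpha_{\U}\in\Nc((\Phi,\CC))$ and $\alpha_{\U}\subseteq\FWc$ (each $(\Psi,\DD)\in\alpha_{\U}$ has $\bot\in\Psi$); (iii) given $\alpha\in\Nc((\Phi,\CC))$, say $\alpha = \alpha_{\U}$ with $\U\in\CC$, from $\diam\bot\in\Phi$ the second clause of the definition of segment yields $\Psi\in\U$ with $\bot\in\Psi$, and by Lemma~\ref{lemma:neigh seg}(i) this $\Psi$ extends to a segment $(\Psi,\EE)\in\alpha_{\U}\cap\FWc$, so $\alpha\cap\FWc\neq\emptyset$.

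Putting this together as in the proof of Theorem~\ref{th:compl MK}: if $\CLvar\not\vd A$, Lemma~\ref{lemma:lind rel} gives a $\CLvar$-full set $\Psi$ with $A\notin\Psi$, Lemma~\ref{lemma:neigh seg}(i) a $\CLvar$-segment $(\Psi,\CC)$, which by Definition~\ref{def:can model neigh} is a world of $\Mc$; the truth lemma gives $(\Psi,\CC)\not\Vd A$, and since $\Mc$ is a constructive neighbourhood model for $\CLvar$, $A$ is not valid in all such models.

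I do not expect a real obstacle here; the argument is a routine combination of two established proofs. The one place needing care is clause (iii) of Definition~\ref{def:const neigh model} in the canonical model: it is not enough to produce a $\CLvar$-full set $\Psi\in\U$ containing $\bot$, one must check that $\Psi$ actually names a point of $\Mc$, which is precisely what Lemma~\ref{lemma:neigh seg}(i) supplies — the neighbourhood analogue of the appeal to Lemma~\ref{lemma:rel seg} in condition (iii) of the birelational case. A secondary point worth stating explicitly is that Lemmas~\ref{lemma:lind rel}, \ref{lemma:neigh seg} and~\ref{lemma:can model neigh}, together with the observation that $\Mc$ is a minimal neighbourhood model for $\MLvar$, all transfer from $\MLvar$ to $\CLvar$ without change.
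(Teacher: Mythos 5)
Your proposal is correct and follows essentially the same route as the paper: soundness by extending Theorem~\ref{th:soundness ML} with an induction showing fallible worlds force every formula via clauses (i)--(iii) of Definition~\ref{def:const neigh model}, and completeness by transferring Lemmas~\ref{lemma:lind rel}, \ref{lemma:neigh seg} and \ref{lemma:can model neigh} to $\CLvar$ and verifying that the canonical neighbourhood model is constructive, using $\Phi=\lan$ when $\bot\in\Phi$ and Lemma~\ref{lemma:neigh seg}(i) for condition (iii). The point you flag about extending $\Psi$ to an actual segment is precisely the appeal the paper makes as well, so there is nothing to correct.
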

\begin{proof}
($\Rightarrow$)
The proof extends the one of Theorem~\ref{th:soundness ML} by showing that $\bot\imp A$ is 
valid in every constructive neighbourhood model.
Suppose that $w \Vd \bot$.
We show by construction on $A$ that $w \Vd A$, considering only the cases $A = \Box A, \diam A$
(see the proof of Theorem~\ref{th:sound CK} for $A = p, \bot, B\land C, B\lor C, B\imp C$).
($A = \Box B$)
Suppose $w \less v$. Since $\FW$ is $\less$-upward closed, $v \in\FW$.
Then by Definition~\ref{def:const neigh model}, item (ii),
there is $\alpha\in\N(v)$ such that $\alpha\subseteq\FW$.
Hence $\alpha\ufor\bot$, and by \ih, $\alpha\ufor B$.
Therefore $w \Vd \Box B$.
($A = \diam B$)
Suppose $w \less v$. Since $\FW$ is $\less$-upward closed, $v \in\FW$.
Then by Definition~\ref{def:const neigh model}, item (iii),
for all $\alpha\in\N(v)$, $\alpha\cap\FW\not=\emptyset$.
Hence for all $\alpha\in\N(v)$, $\alpha\efor\bot$, then by \ih, $\alpha\efor B$.
Therefore $w \Vd \diam B$.

($\Leftarrow$)
The proof extends the completeness proof of minimal modal logics by showing that the canonical neighbourhood model for $\CLvar$ (Definition~\ref{def:can model neigh})
satisfies the conditions (i), (ii), (iii) in Definition~\ref{def:const neigh model}.
Suppose that $(\Phi, \CC)\in\FWc$. Then $\bot\in\Phi$.
Since $\Phi$ is closed under derivation, by ex falso quodlibet we obtain $\Phi = \lan$,
which entails the following.
(i) For all $p\in\atm$, $p\in\Phi$, hence by definition, $(\Phi, \CC)\in\Vc(p)$.
(ii) $\Box\bot\in\Phi$, hence by Definition~\ref{def:neigh seg}, 
there is $\U\in\CC$ such that  for all $\Psi\in\U$, $\bot\in\Psi$.
Then by Definition~\ref{def:can model neigh},
there is $\alpha_{\U}\in\Nc((\Phi, \CC))$ such that for all $(\Psi, \DD)\in\alpha_{\U}$, $\bot\in\Psi$.
Then for all $(\Psi, \DD)\in\alpha_{\U}$, $(\Psi, \DD)\in\FWc$, thus $\alpha_{\U}\subseteq\FWc$.
(iii) $\diam\bot\in\Phi$,  hence by Definition~\ref{def:neigh seg}, 
for all $\U\in\CC$, there is $\Psi\in\U$ such that $\bot\in\Psi$.
Then by Definition~\ref{def:can model neigh} and Lemma~\ref{lemma:neigh seg}
(which holds for $\CLvar$-full segments as well),
for all $\alpha_{\U}\in\Nc((\Phi, \CC))$, there is $(\Psi, \DD)\in\alpha_{\U}$ such that $\bot\in\Psi$,
hence $(\Psi, \DD)\in\FWc$, thus $\alpha_{\U}\cap\FWc\not=\emptyset$.
\end{proof}

Now, we show that for each logic $\CLvar$, a sequent calculus $\Gone\CLvar$ can be obained by 
extending $\Gone\IPL$ with the modal rules of the corresponding calculus $\Gone\MLvar$.

\begin{definition}
For every logic $\CLvar$, the sequent calculus $\Gone\CLvar$ contains the rules of $\Gone\IPL$
(Figure~\ref{fig:G1IPL}) plus the modal rules of the corresponding minimal calculus $\Gone\MLvar$,
except for \rulemTbox\ which is replaced by its intuitionistic version \ruleiTbox, with $0 \leq |\delta| \leq 1$:
\begin{center}
\ax{$\G, A \seq \delta$}
\llab{\ruleiTbox}
\uinf{$\G, \Box A \seq \delta$}
\disp
\end{center}
\end{definition}

Differently from the other modal rules, \ruleTboxcl\ and \ruleTdiamcl\ are local and must therefore be treated like the propositional rules. 
Since \ruleTdiamcl\ has a principal formula in the succedent
which is preserved by both kinds of sequent restrictions, this only impacts on \ruleiTbox\
that requires an intuitionistic succedent containing zero or one formula.

\begin{restatable}{theorem}{ThCutCL}\label{th:cut CL}
For every calculus $\Gone\CLvar$, 
the rule \cut\ is admissible in $\Gone\CLvar$.
\end{restatable}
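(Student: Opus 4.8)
The plan is to establish cut admissibility by essentially the same Gentzen-style argument used for Theorem~\ref{th:cut ML}, transported to the intuitionistic propositional base. As a preliminary, one shows, exactly as for $\Gone\MLvar$, that weakening and contraction are \hp\ admissible in $\Gone\CLvar$; then \cut\ is eliminated by a primary induction on the complexity of the cut formula and a secondary induction on the sum of the heights of the derivations of the two premisses, the case in which the cut formula is principal in the last rule of the right premiss being \ilctr\ being handled by the usual double-cut reduction, as in the proof of Theorem~\ref{th:cut ML}.

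First I would dispose of the easy cases: if one premiss is an instance of \iinit\ or \ilbot\ the cut is immediately eliminable (for \ilbot, using that $\bot$ on the left, together with \ilwk\ and \irwk, derives any sequent); and if the cut formula is not principal in the last rule of one of the premisses, the cut is permuted above that rule, strictly decreasing the secondary measure. The latter is routine both in the propositional cases (as in the cut-elimination proof for $\Gone\IPL$, cf.~\cite{Troelstra:2000}) and in the modal ones, since in $\Gone\CLvar$ the right context of every modal rule is already at most one formula, so permuting a cut through a modal rule raises no new issue compared with the minimal calculi. When the cut formula is introduced on the right by \irwk, one weakens the sequent $\G\seq$ with the context $\Sigma$ and then re-applies \irwk.

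The principal/principal cases are the core of the argument. The propositional ones are standard. For the modal ones the decisive remark is that $\Gone\CLvar$ has exactly the same modal rules as $\Gone\MLvar$, the only exception being that \rulemTbox\ is replaced by its intuitionistic variant \ruleiTbox; hence every principal/principal modal reduction appearing in the proof of Theorem~\ref{th:cut ML} carries over, with \rulemTbox\ replaced by \ruleiTbox\ wherever it is used to rebuild the conclusion and with the succedent now possibly empty. Concretely, when the cut formula is a $\Box$-formula occurring as part of the principal block $\Box\Sigma$ of a right-$\Box$ rule of the right premiss, or is introduced on the left by \ruleiTbox, one cuts on the relevant immediate subformula and re-applies the appropriate $\Box$- or $\diam$-rule; since \ruleiTbox\ is local and carries arbitrary context, a conclusion such as $\Box\Gamma', \Sigma\seq\delta$ is recovered from $\Gamma', \Sigma\seq\delta$ by iterating \ruleiTbox, and the remaining modal rules of $\Gone\CLvar$ already admit an empty succedent wherever one is needed.

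The main obstacle is the bookkeeping in these modal principal/principal cases: one must verify, rule by rule, that each reduction used for $\Gone\MLvar$ still yields a correct $\Gone\CLvar$-derivation once an empty succedent is allowed and \rulemTbox\ is swapped for \ruleiTbox. This amounts to inspecting a finite list of cases rather than surmounting a conceptual difficulty, which is why the full proof is relegated to the appendix.
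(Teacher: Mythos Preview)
Your high-level plan—reuse the argument for $\Gone\MLvar$ and add the cases coming from \ilbot\ and \irwk—is exactly what the paper does. But two points in your sketch do not go through as written.

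First, the paper's proof of Theorem~\ref{th:cut ML} does not proceed via height-preserving admissibility of contraction followed by a ``double-cut'' reduction; it proves admissibility of the \emph{multicut} rule \mix\ directly, with lexicographic induction on (complexity of the mix formula, sum of heights). This matters because the calculi are G1-style with \ilctr\ as a primitive rule: the double-cut reduction you describe produces a second cut whose right premiss is the \emph{output} of an earlier inductive step, so you have no height bound on it and your secondary measure does not decrease. The paper's multicut absorbs contraction on the mix formula into the rule itself (case $R$--\mlctr\ in the appendix), which is why the induction closes. If you want to avoid \mix\ you need a different secondary measure, or an independent proof that contraction is height-preserving admissible in $\Gone\CLvar$; neither is available for free in a G1 calculus, and neither is what Theorem~\ref{th:cut ML} provides.

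Second, the case where the right premiss is the initial sequent \ilbot\ is not ``immediately eliminable.'' Here the cut formula is $\bot$, the left premiss is $\G\Seq\bot$, and the required conclusion is $\G\Seq$; there is no $\bot$ on the left of the conclusion to weaken from. The paper handles this by observing that no right rule has $\bot$ principal, so the last rule in the derivation of $\G\Seq\bot$ must be a left propositional rule, \ruleiTbox, or a structural rule, and the mix permutes upward through it (the height decreases). Your treatment of \irwk\ and of the modal principal--principal cases is correct and matches the paper.
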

\begin{proof}
The proof is in the appendix.
\end{proof}

\begin{theorem}\label{th:CL equiv G1CL}
For every calculus $\Gone\CLvar$, 
for all $A\in\lan$,
$A$ is derivable in $\Gone\CLvar$
if and only if
$A$ is derivable in $\CLvar$.
\end{theorem}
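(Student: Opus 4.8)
The plan is to mirror the proof of Theorem~\ref{th:ML equiv G1ML} exactly, adding only the ex falso layer. For the direction ($\Rightarrow$), I would show that each rule of $\Gone\CLvar$, under the formula interpretation $\fint$, yields a derivable rule of $\CLvar$. Since $\Gone\CLvar$ consists of the rules of $\Gone\IPL$ together with the modal rules of $\Gone\MLvar$ (with \rulemTbox\ replaced by \ruleiTbox), the modal rules are handled exactly as in the proof of Theorem~\ref{th:ML equiv G1ML}, noting that $\CLvar$ extends $\MLvar$ so every $\MLvar$-derivation is a $\CLvar$-derivation; the only new ingredient is that $\Gone\IPL$ contains \ilbot\ and \irwk, whose interpretations are $\bot\imp A$ and (an instance of) $\bigand\G\imp\bigor\D$ obtained by \emph{ex falso quodlibet} from $\bigand\G\imp\bot$, both of which are theorems of $\CLvar = \MLvar + \bot\imp A$; the remaining intuitionistic rules are derivable in $\MPL$, hence in $\CLvar$. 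Since a $\Gone\CLvar$-derivation of $\seq A$ translates step by step into a $\CLvar$-derivation of $\fint(\seq A) = A$, this direction follows.

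For the direction ($\Leftarrow$), I would show that every axiom and rule of $\CLvar$ is derivable, respectively admissible, in $\Gone\CLvar$. The modal axioms and rules of $\MLvar$ are handled precisely as in Theorem~\ref{th:ML equiv G1ML} (their derivations live inside $\Gone\MLvar$, and $\Gone\CLvar$ contains the same modal rules save for the harmless passage from \rulemTbox\ to \ruleiTbox, which still derives \axTbox\ and admits the modal use of \axT\ since the relevant succedents have at most one formula); the propositional axioms of $\MPL$ are derivable in $\Gone\MPL$, hence in $\Gone\IPL\subseteq\Gone\CLvar$; and the new axiom $\bot\imp A$ is derived using \ilbot\ followed by \irwk\ and \irimp. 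The rule \MP\ is handled by the admissible \cut\ of Theorem~\ref{th:cut CL}. Hence $\CLvar\vd A$ implies $\Gone\CLvar\vd\seq A$.

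The main obstacle — and the reason \cut-admissibility (Theorem~\ref{th:cut CL}) is invoked — is exactly as in the minimal case: without cut, simulating \emph{modus ponens} and the $\MPL$-to-sequent passage is not immediate, so the whole argument rests on the cut-elimination theorem already established (in the appendix) for $\Gone\CLvar$. A secondary point requiring care is checking that replacing \rulemTbox\ by \ruleiTbox\ does not disturb either direction: for ($\Rightarrow$) the interpretation of \ruleiTbox\ with $|\delta|\le 1$ is still derivable from \axTbox\ in $\CLvar$, and for ($\Leftarrow$) one must verify that the single-succedent proofs of \axTbox\ and of the admissibility of the modal \axT-principle go through when the succedent is allowed to be empty — which they do, since \ruleiTbox\ is strictly more liberal than \rulemTbox\ on the left and behaves identically on single-formula succedents. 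Everything else is a routine transcription of the proof of Theorem~\ref{th:ML equiv G1ML} with $\MLvar$ replaced by $\CLvar$ and $\Gone\MPL$ replaced by $\Gone\IPL$.
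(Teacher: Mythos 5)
Your proposal is correct and follows essentially the same route as the paper, which likewise handles the intuitionistic axioms and sequent rules by the standard argument and refers back to the proof of Theorem~\ref{th:ML equiv G1ML} for all modal axioms and rules, with cut admissibility (Theorem~\ref{th:cut CL}) covering modus ponens. The only tiny inaccuracy is that the formula interpretation of the initial sequent $\bot\Seq$ is $\bot\imp\bot$ (the empty succedent being read as $\bot$), not $\bot\imp A$; ex falso is really only needed for \irwk, but this does not affect your argument.
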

\begin{proof}
The derivations of the intuitionistic axioms and sequent rules are standard.
For the derivations of the modal axioms and sequent rules we refer to the proof of Theorem~\ref{th:ML equiv G1ML}.
\end{proof}

\section{Discussion}\label{sec:discussion}

\textbf{A framework of minimal and constructive modal logics.}
We have defined a family of minimal modal logics and a related family of constructive modal logics
corresponding each to a different classical modal logic.
The minimal modal logics have been defined
by
means of (1) a reduction to fusions of classical modal logics via the extended G{\"o}del-Johansson translation,
as well as (2)  the restriction of sequent calculi for classical modal logics to single-succedent sequents.
We have seen that the resulting minimal counterpart of $\K$ is strictly 
connected with the constructive modal logic $\CK$,
as the two systems essentially validate the same modal principles.
In particular, $\CK$ can be obtained from $\MK$ 
(1) semantically, by forcing the validity of ex falso quodlibet in minimal birelational models;
(2) based on the sequent calculi, by adding the minimal modal rules to an intuitionistic sequent calculus;
(3) axiomatically, by extending $\MK$ with ex falso quodlibet $\bot\imp A$.
Based on this 
relation between $\MK$ and $\CK$, we have defined a constructive correspondent for each minimal system.
Among the resulting constructive logics, the systems
$\CK$, $\CKD$ and $\CKT$ coincides with the constructive counterparts of $\K$, $\KD$ and $\KT$ studied in 
\cite{arisaka,Mendler2}
($\CKT$ also coincides with the propositional fragment of Fitch's first-order intuitionistic modal logic \cite{Fitch:1948}).
All in all, this work 
organises some constructive modal logics already studied in the literature into a uniform framework
 and also extends this family with constructive counterparts of some non-normal modal logics.
 As a consequence of the applied methodology, all minimal and constructive modal logics are
 endowed with cut-free sequent calculi and a modular semantic characterisation.
 
 \textbf{Simpson's requirements.} 
 Simpson~\cite{Simpson:1994}  listed some requirements 
 to evaluate whether an intuitionistic modal logic $\mf{IL}$ 
 can be understood as an intuitionistic counterpart of a classical modal logic $\logic$:
 $\mf{IL}$ should be a conservative extension of $\IPL$,
 it should contain all axioms of $\IPL$ (over the whole language $\lan$) and be closed under modus ponens,
 it should satisfy the disjunction property
 (if $A \lor B$ is derivable, then $A$ is derivable or $B$ is derivable),
 the modalities in $\mf{IL}$ should be independent,
 the extension of $\mf{IL}$ with $A \lor \neg A$ should coincide with $\logic$.
It looks natural to adapt these requirements to pairs of minimal and constructive/intuitionistic modal logics.
It is easy to verify that each logic $\MLvar$
is a conservative extension of $\MPL$,
contains all axioms of $\MPL$ and modus ponens,
satisfies the disjunction property
and has independent modalities.
Moreover,  the extension of $\MLvar$ with $\bot\imp A$ coincides with the corresponding logic $\CLvar$.
In this sense, each pair of corresponding logics $\MLvar$ and $\CLvar$ is a Simpsonian pair of modal logics.

\textbf{\wij-style constructive modal logics.}
In \cite{dalmonte:2022},
a family of \wij-style constructive modal logics $\WLvar$
(namely, $\WK$ and related systems)
is defined, where each logic $\WLvar$ corresponds to a different classical modal logic 
among the same set of classical modal logics considered in the present paper.
The logics $\WLvar$ are defined by extending
 to modal logics the relation that holds between the sequent calculi $\Gone\CPL$ and $\Gone\IPL$:
 The sequent calculi for the classical modal logics are restricted to sequents with \emph{at most} one
 formula in the succedent.
 At the same time, the logics $\WLvar$ can be shown to be reducible to fusions $\Sfour\oplus\logic$ by means of an extended G{\"o}del translation $g$ 
identical to $\tr$ in Definition~\ref{def:translation}
except for the clause $\bot^{g}=\Box\bot$.
Semantically, the models for $\WLvar$ logics defined in \cite{dalmonte:2022}
coincide with the restriction of minimal neighbourhood models 
to $\FW=\emptyset$.

The difference between the logics $\WLvar$ and $\CLvar$
is particularly evident from the point of view of the sequent calculi.
First, we observe that each
classical sequent calculus contains (possibly as particular cases) the two rules
\begin{center}
\ax{$A, B \Seq$}
\llab{$\mathsf{mnc}$}
\uinf{$\Box A, \diam B \Seq$}
\disp
\qquad
\ax{$\Seq A, B$}
\llab{$\mathsf{mem}$}
\uinf{$\Seq \Box A,  \diam B$}
\disp
\end{center}
that allow one to derive the axioms $\neg(\Box A \land \diam\neg A)$ 
and $\Box A \lor \diam\neg A$,
whose conjuction is equivalent to the duality principle $\Box A \tto \neg\diam\neg A$. 
Because of the different sequent restrictions, the calculi $\Gone\WLvar$ preserve $\mathsf{mnc}$ but reject $\mathsf{mem}$,
whereas the calculi $\Gone\CLvar$ reject both rules.
This is the essential 
difference between the logics $\WLvar$ and $\CLvar$.
Indeed, each logic $\WLvar$ can be obtained axiomatically by extending the corresponding
system $\CLvar$ with $\neg(\Box A \land \diam\neg A)$.
In this respect, the 
difference between $\CK$ and $\WK$ does not rely 
that much 
on a stronger $\diam$ of the latter,
but rather on a different interaction of $\Box$ and $\diam$ in the two systems.
In particular, 
 the modalities in $\CK$ are barely related.
At the same time,
this relation between logics $\CLvar$ and $\WLvar$ provides us with a 
framework of corresponding minimal, constructive, \wij-style and classical modal logics,
where the corresponding quadruples of systems 
are axiomatically related as follows:
\begin{center}
\begin{small}
\begin{tikzpicture}
    \node (ML) at  (0,0)  {$\MLvar$};
    \node (CL) at (3.5,0) {$\CLvar$};
    \node (WL) at (7,0) {$\WLvar$};
    \node (L) at (10.5,0) {$\logic$};

    \draw [->] (ML) -- (CL) node[midway,above] {+\begin{tabular}{c}$\bot\imp A$\\\end{tabular}};
    \draw[->] (CL) -- (WL) node[midway,above] {+\begin{tabular}{c}$\neg(\Box A \land \diam \neg A)$\\\end{tabular}};
    \draw[->] (WL) -- (L) node[midway,above] {+\begin{tabular}{c}$A \lor \neg A$ \\ $\Box A \lor \diam \neg A$ \\\end{tabular}};
\end{tikzpicture}
\end{small}
\end{center}

%
%
%
%

\textbf{Computational properties.}
In this work,
we have not considered the computational properties of the logics $\MLvar$ and $\CLvar$.
However, 
we can observe that the
equation ($\ast$) is not only a definitorial property of the logics $\MLvar$,
it is also a polynomial reduction of the derivability problem for $\MLvar$ into the derivability
problem for $\Sfour\oplus\logic$.
Considering that the derivability problems for $\Sfour\oplus\K$,  $\Sfour\oplus\KD$ and $\Sfour\oplus\KT$ are known to be \textsc{PSpace}-complete
\cite{YB},
and that $\MK$ is a conservative extension of $\MPL$ 
(with respect to the fragment of the language without the modalities)
which is also \textsc{PSpace}-complete, 
we can conclude that the derivability problems for $\MK$, $\MKD$ and $\MKT$ are \textsc{PSpace}-complete.

We conjecture that the same complexity bound applies to all logics $\MLvar$ and $\CLvar$.
In future work we would like to address this problem by studying terminating sequent calculi and construction of finite models in the style of \cite{dalmonte:2021}.
Moreover, we conjecture that 
\textsc{PSpace}-complexity can be proved for $\MM$ by combining the translation $\tr$ with the reduction of classical $\EM$ into multi-modal $\K$ presented in \cite{kracht:1999,gasquet:1996}.
We would also like to study 
 reductions for the constructive  logics $\CLvar$ along the lines of \cite{Fairtlough:1997}.


%
%

\bibliographystyle{plain}
\bibliography{mybibliography}
 
\section*{Appendix: Proofs of cut admissibility}

\ThCutMK*
\begin{proof}
The proof follows a standard strategy that goes back to Gentzen~\cite{gentzen}
(cf.~\cite{Troelstra:2000} for more details)
and consists in proving the admissibility of the following generalisation of \cut
\begin{center}
\ax{$\G \seq A$}
\ax{$\Sigma, A^n \seq C$}
\llab{\mix}
\binf{$\G, \Sigma \seq C$}
\disp
\end{center}
also known as \emph{multicut}, 
where $A^n$ denotes one or more occurrences of $A$.
The proof shows that every derivation containing one or more applications of \mix\
can be transformed into an equivalent derivation not containing applications of \mix\
by removing step by step all topmost applications of \mix.
Let us call \emph{mix formula} the formula which is deleted by the application of \mix.
The proof proceeds by induction on lexicographically ordered pairs
$(c, h)$,
where $c$ is the \emph{complexity} of the mix formula,
defined as usual as $c(p) = c(\bot) = 1$, 
$c(B \circ C) = c(B) + c(C) + 1$, 
$C(\heartsuit B) = c(B) + 1$, with
 $\circ \in \{\land, \lor, \imp\}$,
  $\heartsuit\in\{\Box,\diam\}$,
  and $h$ is the \emph{cut height},
  defined as the sum of the heights of the mix-free derivations of the premisses of \mix,
  where the height of a mix-free derivation is in turn defined as the length of the longest branch from the root to an initial sequent.
  The proof distinguishes among the following cases.
  In each case, the derivation on the left is converted into the derivation on the right,
  where the original application of \mix\ is possibly replaced by one or more applications of \mix,
  each of them having a mix formula with lower complexity or a having lower mix height.
  In the derivations, given a rule $R$, we denote $R^*$ an arbitrary number of repeated applications of $R$.

\begin{enumerate}[leftmargin=*, align=left]
\item[\framebox{1}] At least one premiss of \mix\ is an initial sequent.
There are two subcases.

\item[\framebox{1.1}] The left premiss of \mix\ is an initial sequent.
\begin{center}
\begin{small}
\ax{$A \Seq A$}
\ax{$\Der$}
\noLine
\uinf{$\G, A^n \Seq C$}
\llab{\mix}
\binf{$\G, A \Seq C$}
\disp
\quad
$\leadsto$
\quad
\ax{$\Der$}
\noLine
\uinf{$\G, A^n \Seq C$}
\rlab{\mlctr$^*$}
\uinf{$\G, A \seq C$}
\disp
\end{small}
\end{center}

\item[\framebox{1.2}] The right premiss of \mix\ is an initial sequent.
\begin{center}
\begin{small}
\ax{$\Der$}
\noLine
\uinf{$\G \Seq A$}
\ax{$A \Seq A$}
\llab{\mix}
\binf{$\G \Seq A$}
\disp
\quad
$\leadsto$
\quad
\ax{$\Der$}
\noLine
\uinf{$\G \Seq A$}
\disp
\end{small}
\end{center}

\item[\framebox{2}] Neither premiss of \mix\ is an initial sequent. 
There are  three subcases.

\item[\framebox{2.1}] The mix formula is not principal in the last rule applied in the derivation $\mathcal D$ of the left premiss of \mix.
We consider several cases depending on the last rule applied in $\mathcal D$.

\item[(\mlland)]

\begin{center}
\hfill
\begin{footnotesize}
\ax{$\Der$}
\noLine
\uinf{$\G, B_i \Seq A$}
\llab{\mlland}
\uinf{$\G, B_1 \land B_2 \Seq A$}
\ax{$\Der$}
\noLine
\uinf{$\Sigma, A^n \Seq C$}
\llab{\mix}
\binf{$\G, \Sigma, B_1 \land B_2 \Seq C$}
\disp
\hfill
$\leadsto$
\hfill
\ax{$\Der$}
\noLine
\uinf{$\G, B_i \Seq A$}
\ax{$\Der$}
\noLine
\uinf{$\Sigma, A^n \Seq C$}
\rlab{\mix}
\binf{$\G, \Sigma, B_i \seq C$}
\rlab{\mlland}
\uinf{$\G, \Sigma, B_1 \land B_2 \Seq C$}
\disp
\end{footnotesize}
\end{center}

\item[(\mllor)]

\begin{center}
\
\begin{footnotesize}
\ax{$\Der$}
\noLine
\uinf{$\G, B \Seq A$}
\ax{$\Der$}
\noLine
\uinf{$\G, C \Seq A$}
\llab{\mllor}
\binf{$\G, B \lor C \Seq A$}
\ax{$\Der$}
\noLine
\uinf{$\Sigma, A^n \Seq D$}
\llab{\mix}
\binf{$\G, \Sigma, B \lor C \Seq D$}
\disp
\ \
$\leadsto$
\hfill \ \ 

\vspace{0.2cm}
\ \ \hfill 
\ax{$\Der$}
\noLine
\uinf{$\G, B \Seq A$}
\ax{$\Der$}
\noLine
\uinf{$\Sigma, A^n \Seq D$}
\llab{\mix}
\binf{$\G, \Sigma, B \Seq D$}
\ax{$\Der$}
\noLine
\uinf{$\G, C \Seq A$}
\ax{$\Der$}
\noLine
\uinf{$\Sigma, A^n \Seq D$}
\rlab{\mix}
\binf{$\G, \Sigma, C \Seq D$}
\llab{\mllor}
\binf{$\G, \Sigma, B \lor C \Seq C$}
\disp
\end{footnotesize}
\end{center}

\item[(\mlimp)]

\begin{center}
\
\begin{footnotesize}
\ax{$\Der$}
\noLine
\uinf{$\G \Seq B$}
\ax{$\Der$}
\noLine
\uinf{$\G, C \Seq A$}
\llab{\mlimp}
\binf{$\G, B \imp C \Seq A$}
\ax{$\Der$}
\noLine
\uinf{$\Sigma, A^n \Seq D$}
\llab{\mix}
\binf{$\G, \Sigma, B \imp C \Seq D$}
\disp
\ \
$\leadsto$
\hfill \ \ 

\vspace{0.2cm}
\ \ \hfill 
\ax{$\Der$}
\noLine
\uinf{$\G \Seq B$}
\llab{\mlwk$^*$}
\uinf{$\G, \Sigma \Seq B$}
\ax{$\Der$}
\noLine
\uinf{$\G, C \Seq A$}
\ax{$\Der$}
\noLine
\uinf{$\Sigma, A^n \Seq D$}
\rlab{\mix}
\binf{$\G, \Sigma, C \Seq D$}
\llab{\mlimp}
\binf{$\G, \Sigma, B \imp C \Seq D$}
\disp
\end{footnotesize}
\end{center}

\item[(\mlwk)]

\begin{center}
\hfill
\begin{footnotesize}
\ax{$\Der$}
\noLine
\uinf{$\G \Seq A$}
\llab{\mlwk}
\uinf{$\G, B \Seq A$}
\ax{$\Der$}
\noLine
\uinf{$\Sigma, A^n \Seq C$}
\llab{\mix}
\binf{$\G, \Sigma, B \Seq C$}
\disp
\hfill
$\leadsto$
\hfill
\ax{$\Der$}
\noLine
\uinf{$\G \Seq A$}
\ax{$\Der$}
\noLine
\uinf{$\Sigma, A^n \Seq C$}
\rlab{\mix}
\binf{$\G, \Sigma \seq C$}
\rlab{\mlwk}
\uinf{$\G, \Sigma, B \Seq C$}
\disp
\end{footnotesize}
\end{center}

\item[(\mlctr)]

\begin{center}
\hfill
\begin{footnotesize}
\ax{$\Der$}
\noLine
\uinf{$\G, B, B \Seq A$}
\llab{\mlctr}
\uinf{$\G, B \Seq A$}
\ax{$\Der$}
\noLine
\uinf{$\Sigma, A^n \Seq C$}
\llab{\mix}
\binf{$\G, \Sigma, B \Seq C$}
\disp
\hfill
$\leadsto$
\hfill
\ax{$\Der$}
\noLine
\uinf{$\G, B, B \Seq A$}
\ax{$\Der$}
\noLine
\uinf{$\Sigma, A^n \Seq C$}
\rlab{\mix}
\binf{$\G, \Sigma, B, B \seq C$}
\rlab{\mlctr}
\uinf{$\G, \Sigma, B \Seq C$}
\disp
\end{footnotesize}
\end{center}

\item[Right rules and \rulemKbox, \rulemKdiam\ are not possible.]

\item[\framebox{2.2}] The mix formula is not principal in the last rule applied in the derivation $\mathcal D$ of the right premiss of \mix.
We consider several cases depending on the last rule applied in $\mathcal D$.

\item[(\mlland)]

\begin{center}
\hfill
\begin{footnotesize}
\ax{$\Der$}
\noLine
\uinf{$\Gamma \Seq A$}
\ax{$\Der$}
\noLine
\uinf{$\Sigma, A^n, B_i \Seq C$}
\rlab{\mlland}
\uinf{$\Sigma, A^n, B_1 \land B_2 \Seq C$}
\llab{\mix}
\binf{$\G, \Sigma, B_1 \land B_2 \Seq C$}
\disp
\hfill
$\leadsto$
\hfill
\ax{$\Der$}
\noLine
\uinf{$\G \Seq A$}
\ax{$\Der$}
\noLine
\uinf{$\Sigma, A^n, B_i \Seq C$}
\rlab{\mix}
\binf{$\G, \Sigma, B_i \seq C$}
\rlab{\mlland}
\uinf{$\G, \Sigma, B_1 \land B_2 \Seq C$}
\disp
\end{footnotesize}
\end{center}

\item[(\mrland)] 
\begin{center}
\begin{footnotesize}
\qquad\quad
\ax{$\Der$}
\noLine
\uinf{$\Gamma \Seq A$}
\ax{$\Der$}
\noLine
\uinf{$\Sigma, A^n \Seq B$}
\ax{$\Der$}
\noLine
\uinf{$\Sigma, A^n \Seq C$}
\rlab{\mrland}
\binf{$\Sigma, A^n \Seq B \land C$}
\llab{\mix}
\binf{$\G, \Sigma \Seq B \land C$}
\disp
\ \
$\leadsto$
\hfill \ \

\ \ \hfill
\ax{$\Der$}
\noLine
\uinf{$\G \Seq A$}
\ax{$\Der$}
\noLine
\uinf{$\Sigma, A^n \Seq B$}
\llab{\mix}
\binf{$\G, \Sigma \seq B$}
\ax{$\Der$}
\noLine
\uinf{$\G \Seq A$}
\ax{$\Der$}
\noLine
\uinf{$\Sigma, A^n \Seq C$}
\rlab{\mix}
\binf{$\G, \Sigma \seq C$}
\rlab{\mrland}
\binf{$\G, \Sigma \Seq B \land C$}
\disp
\end{footnotesize}
\end{center}

\item[(\mllor)] 
\begin{center}
\begin{footnotesize}
\qquad\quad
\ax{$\Der$}
\noLine
\uinf{$\Gamma \Seq A$}
\ax{$\Der$}
\noLine
\uinf{$\Sigma, A^n, B \Seq D$}
\ax{$\Der$}
\noLine
\uinf{$\Sigma, A^n, C \Seq D$}
\rlab{\mllor}
\binf{$\Sigma, A^n, B \lor C \Seq D$}
\llab{\mix}
\binf{$\G, \Sigma, B \lor C \Seq D$}
\disp
\ \
$\leadsto$
\hfill \ \

\ \ \hfill
\ax{$\Der$}
\noLine
\uinf{$\G \Seq A$}
\ax{$\Der$}
\noLine
\uinf{$\Sigma, A^n, B \Seq D$}
\llab{\mix}
\binf{$\G, \Sigma, B \seq D$}
\ax{$\Der$}
\noLine
\uinf{$\G \Seq A$}
\ax{$\Der$}
\noLine
\uinf{$\Sigma, A^n, C \Seq D$}
\rlab{\mix}
\binf{$\G, \Sigma, C \seq D$}
\rlab{\mllor}
\binf{$\G, \Sigma, B \lor C \Seq D$}
\disp
\end{footnotesize}
\end{center}

\item[(\mrlor)]

\begin{center}
\hfill
\begin{footnotesize}
\ax{$\Der$}
\noLine
\uinf{$\Gamma \Seq A$}
\ax{$\Der$}
\noLine
\uinf{$\Sigma, A^n \Seq B_i$}
\rlab{\mrlor}
\uinf{$\Sigma, A^n \Seq B_1 \lor B_2 $}
\llab{\mix}
\binf{$\G, \Sigma \Seq B_1 \lor B_2$}
\disp
\hfill
$\leadsto$
\hfill
\ax{$\Der$}
\noLine
\uinf{$\G \Seq A$}
\ax{$\Der$}
\noLine
\uinf{$\Sigma, A^n \Seq B_i$}
\rlab{\mix}
\binf{$\G, \Sigma \seq B_i$}
\rlab{\mrlor}
\uinf{$\G, \Sigma \Seq B_1 \lor B_2$}
\disp
\end{footnotesize}
\end{center}

\item[(\mlimp)]

\begin{center}
\ 
\begin{footnotesize}
\ax{$\Der$}
\noLine
\uinf{$\Gamma \Seq A$}
\ax{$\Der$}
\noLine
\uinf{$\Sigma, A^n \Seq B$}
\ax{$\Der$}
\noLine
\uinf{$\Sigma, A^n, C \Seq D$}
\rlab{\mlimp}
\binf{$\Sigma, A^n, B \imp C \Seq D$}
\llab{\mix}
\binf{$\G, \Sigma, B \imp C \Seq D$}
\disp
\
$\leadsto$ 
\hfill \ \ 

\ \ \hfill
\ax{$\Der$}
\noLine
\uinf{$\G \Seq A$}
\ax{$\Der$}
\noLine
\uinf{$\Sigma, A^n \Seq B$}
\llab{\mix}
\binf{$\G, \Sigma \seq B$}
\ax{$\Der$}
\noLine
\uinf{$\G \Seq A$}
\ax{$\Der$}
\noLine
\uinf{$\Sigma, A^n, C \Seq D$}
\rlab{\mix}
\binf{$\G, \Sigma, C \seq D$}
\rlab{\mlimp}
\binf{$\G, \Sigma, B \imp C \Seq D$}
\disp
\end{footnotesize}
\end{center}

\item[(\mrimp)]

\begin{center}
\hfill
\begin{footnotesize}
\ax{$\Der$}
\noLine
\uinf{$\Gamma \Seq A$}
\ax{$\Der$}
\noLine
\uinf{$\Sigma, A^n, B \Seq C$}
\rlab{\mrimp}
\uinf{$\Sigma, A^n \Seq B \imp C $}
\llab{\mix}
\binf{$\G, \Sigma \Seq B \imp C$}
\disp
\hfill
$\leadsto$
\hfill
\ax{$\Der$}
\noLine
\uinf{$\G \Seq A$}
\ax{$\Der$}
\noLine
\uinf{$\Sigma, A^n, B \Seq C$}
\rlab{\mix}
\binf{$\G, \Sigma, B \seq C$}
\rlab{\mrimp}
\uinf{$\G, \Sigma \Seq B \imp C$}
\disp
\end{footnotesize}
\end{center}

\item[(\mlwk)]

\begin{center}
\hfill
\begin{footnotesize}
\ax{$\Der$}
\noLine
\uinf{$\Gamma \Seq A$}
\ax{$\Der$}
\noLine
\uinf{$\Sigma, A^n \Seq C$}
\rlab{\mlwk}
\uinf{$\Sigma, A^n, B \Seq C$}
\llab{\mix}
\binf{$\G, \Sigma, B \Seq C$}
\disp
\hfill
$\leadsto$
\hfill
\ax{$\Der$}
\noLine
\uinf{$\G \Seq A$}
\ax{$\Der$}
\noLine
\uinf{$\Sigma, A^n \Seq C$}
\rlab{\mix}
\binf{$\G, \Sigma \seq C$}
\rlab{\mlwk}
\uinf{$\G, \Sigma, B \Seq C$}
\disp
\end{footnotesize}
\end{center}

\item[(\mlctr)]

\begin{center}
\hfill
\begin{footnotesize}
\ax{$\Der$}
\noLine
\uinf{$\Gamma \Seq A$}
\ax{$\Der$}
\noLine
\uinf{$\Sigma, A^n, B, B \Seq C$}
\rlab{\mlctr}
\uinf{$\Sigma, A^n, B \Seq C$}
\llab{\mix}
\binf{$\G, \Sigma, B \Seq C$}
\disp
\hfill
$\leadsto$
\hfill
\ax{$\Der$}
\noLine
\uinf{$\G \Seq A$}
\ax{$\Der$}
\noLine
\uinf{$\Sigma, A^n, B, B \Seq C$}
\rlab{\mix}
\binf{$\G, \Sigma, B, B \seq C$}
\rlab{\mlctr}
\uinf{$\G, \Sigma, B \Seq C$}
\disp
\end{footnotesize}
\end{center}

\item[\rulemKbox, \rulemKdiam\ are not possible.]

\item[\framebox{2.3}] The mix formula is principal in the last rule applied in the derivations $\mathcal D_1$, $\mathcal D_2$ of both premisses of \mix.
We consider several cases depending on the last rule applied in $\mathcal D_1$, $\mathcal D_2$.

%

\item[(\mrland\ - \mlland)] The mix formula $A$ has the form $B \land C$.
We consider the following case, the other case where the premiss of \mlland\ is 
$\Sigma, (B \land C)^{n-1}, C \seq D$
 is analogous.


\begin{footnotesize}
\begin{center}
\ax{$\Der$}
\noLine
\uinf{$\G \Seq B$}
\ax{$\Der$}
\noLine
\uinf{$\G \Seq C$}
\llab{\mrland}
\binf{$\G \Seq B \land C$}
\ax{$\Der$}
\noLine
\uinf{$\Sigma, (B \land C)^{n-1}, B \Seq D$}
\rlab{\mlland}
\uinf{$\Sigma, (B \land C)^n \seq D$}
\llab{\mix}
\binf{$\G, \Sigma \Seq D$}
\disp
\
$\leadsto$ 
\hfill \ \ 

\ \ \hfill
\ax{$\Der$}
\noLine
\uinf{$\G \Seq B$}
\ax{$\Der$}
\noLine
\uinf{$\G \Seq B$}
\ax{$\Der$}
\noLine
\uinf{$\G \Seq C$}
\llab{\mrland}
\binf{$\G \Seq B \land C$}
\ax{$\Der$}
\noLine
\uinf{$\Sigma, (B \land C)^{n-1}, B \Seq D$}
\rlab{\mix}
\binf{$\G, \Sigma, B \Seq D$}
\rlab{\mix}
\binf{$\G, \G, \Sigma \Seq D$}
\rlab{\mlwk$^*$}
\uinf{$\G, \Sigma \Seq D$}
\disp
\end{center}
\end{footnotesize}

\item[(\mrlor\ - \mllor)] The mix formula $A$ has the form $B \lor C$.
We consider the following case, the other case where the premiss of \mrlor\ is $\G \seq C$ is analogous.


\begin{footnotesize}
\begin{center}
\quad
\ax{$\Der$}
\noLine
\uinf{$\G \Seq B$}
\llab{\mrlor}
\uinf{$\G \Seq B \lor C$}
\ax{$\Der$}
\noLine
\uinf{$\Sigma, (B \lor C)^{n-1}, B \Seq D$}
\ax{$\Der$}
\noLine
\uinf{$\Sigma, (B \lor C)^{n-1}, C \Seq D$}
\rlab{\mllor}
\binf{$\Sigma, (B \lor C)^n \seq D$}
\llab{\mix}
\binf{$\G, \Sigma \Seq D$}
\disp
\
$\leadsto$ 
\hfill \ \ 

\ \ \hfill
\ax{$\Der$}
\noLine
\uinf{$\G \Seq B$}
\ax{$\Der$}
\noLine
\uinf{$\G \Seq B$}
\llab{\mrlor}
\uinf{$\G \Seq B \lor C$}
\ax{$\Der$}
\noLine
\uinf{$\Sigma, (B \lor C)^{n-1}, B \Seq D$}
\rlab{\mix}
\binf{$\G, \Sigma, B \Seq D$}
\rlab{\mix}
\binf{$\G, \G, \Sigma \Seq D$}
\rlab{\mlwk$^n$}
\uinf{$\G, \Sigma \Seq D$}
\disp
\end{center}
\end{footnotesize}

\item[(\mrimp\ - \mlimp)] The mix formula $A$ has the form $B \imp C$.


\begin{footnotesize}
\begin{center}
\ax{$\Der$}
\noLine
\uinf{$\G, B \Seq C$}
\llab{\mrimp}
\uinf{$\G \Seq B \imp C$}
\ax{$\Der$}
\noLine
\uinf{$\Sigma, (B \imp C)^{n-1} \Seq B$}
\ax{$\Der$}
\noLine
\uinf{$\Sigma, (B \imp C)^{n-1}, C \Seq D$}
\rlab{\mlimp}
\binf{$\Sigma, (B \imp C)^n \seq D$}
\llab{\mix}
\binf{$\G, \Sigma \Seq D$}
\disp
\
$\leadsto$ 
\hfill \ \ 

\ \ \hfill
\ax{$\Der$}
\noLine
\uinf{$\G, B \Seq C$}
\llab{\mrimp}
\uinf{$\G \Seq B \imp C$}
\ax{$\Der$}
\noLine
\uinf{$\Sigma, (B \imp C)^{n-1} \Seq B$}
\llab{\mix}
\binf{$\G, \Sigma \Seq B$}
\ax{$\Der$}
\noLine
\uinf{$\G, B \Seq C$}
\rlab{\mix}
\binf{$\G, \G, \Sigma \seq C$ $(\ast)$}
\disp

\ \ \hfill
\ax{$\G, \G, \Sigma \seq C$ $(\ast)$}
\ax{$\Der$}
\noLine
\uinf{$\G, B \Seq C$}
\llab{\mrimp}
\uinf{$\G \Seq B \imp C$}
\ax{$\Der$}
\noLine
\uinf{$\Sigma, (B \imp C)^{n-1}, C \Seq D$}
\rlab{\mix}
\binf{$\G, \Sigma, C \seq D$}
\rlab{\mix}
\binf{$\G, \G, \G, \Sigma \seq D$}
\rlab{\mlctr$^*$}
\uinf{$\G, \Sigma \seq D$}
\disp
\end{center}
\end{footnotesize}

\item[($R$ - \mlctr)] The transformation below 
applies for any last rule $R$ in the derivation of the left premiss of \mix.

\begin{footnotesize}
\begin{center}
\ax{$\Der$}
\noLine
\uinf{$\G \Seq A$}
\ax{$\Der$}
\noLine
\uinf{$\Sigma, A^n, A \Seq C$}
\rlab{\mlctr}
\uinf{$\Sigma, A^n \Seq C$}
\llab{\mix}
\binf{$\G, \Sigma \seq C$}
\disp
\ \
$\leadsto$
\ \
\ax{$\Der$}
\noLine
\uinf{$\G \Seq A$}
\ax{$\Der$}
\noLine
\uinf{$\Sigma, A^n, A \Seq C$}
\rlab{\mix}
\binf{$\G, \Sigma \seq C$}
\disp
\end{center}
\end{footnotesize}

\item[($R$ - \mlwk)] The transformation below applies for any last rule $R$ in the derivation of the left premiss of \mix\ (note that if $n=1$, then the conclusion of \mix\ $\G, \Sigma \Seq C$ can be obtained from $\Sigma \Seq C$ by \mlwk).

\begin{footnotesize}
\begin{center}
\ax{$\Der$}
\noLine
\uinf{$\G \Seq A$}
\ax{$\Der$}
\noLine
\uinf{$\Sigma, A^{n-1} \Seq C$}
\rlab{\mlwk}
\uinf{$\Sigma, A^{n} \Seq C$}
\llab{\mix}
\binf{$\G, \Sigma \seq C$}
\disp
\ \
$\leadsto$
\ \
\ax{$\Der$}
\noLine
\uinf{$\G \Seq A$}
\ax{$\Der$}
\noLine
\uinf{$\Sigma, A^{n-1} \Seq C$}
\rlab{\mix}
\binf{$\G, \Sigma \seq C$}
\disp
\end{center}
\end{footnotesize}

\item[(\rulemKbox\ - \rulemKbox)] The mix formula $A$ has the form $\Box B$.

\begin{footnotesize}
\begin{center}
\ax{$\Der$}
\noLine
\uinf{$\Sigma \Seq B$}
\llab{\rulemKbox}
\uinf{$\Box\Sigma \Seq \Box B$}
\ax{$\Der$}
\noLine
\uinf{$B^n, \Pi \Seq C$}
\rlab{\rulemKbox}
\uinf{$(\Box B)^n, \Box\Pi \Seq \Box C$}
\llab{\mix}
\binf{$\Box\Sigma, \Box\Pi \Seq \Box C$}
\disp
\ \ 
$\leadsto$
\ \ 
\ax{$\Der$}
\noLine
\uinf{$\Sigma \Seq B$}
\ax{$\Der$}
\noLine
\uinf{$B^n, \Pi \Seq C$}
\rlab{\mix}
\binf{$\Sigma, \Pi \Seq C$}
\rlab{\rulemKbox}
\uinf{$\Box\Sigma, \Box\Pi \Seq \Box C$}
\disp
\end{center}
\end{footnotesize}

\item[(\rulemKbox\ - \rulemKdiam)]
The mix formula $A$ has the form $\Box B$.

\begin{footnotesize}
\begin{center}
\ax{$\Der$}
\noLine
\uinf{$\Sigma \Seq B$}
\llab{\rulemKbox}
\uinf{$\Box\Sigma \Seq \Box B$}
\ax{$\Der$}
\noLine
\uinf{$B^n, \Pi, C \Seq D$}
\rlab{\rulemKdiam}
\uinf{$(\Box B)^n, \Box\Pi, \diam C \Seq \diam D$}
\llab{\mix}
\binf{$\Box\Sigma, \Box\Pi, \diam C \Seq \diam D$}
\disp
\hfill
$\leadsto$ 
\hfill
\ax{$\Der$}
\noLine
\uinf{$\Sigma \Seq B$}
\ax{$\Der$}
\noLine
\uinf{$B^n, \Pi, C \Seq D$}
\rlab{\mix}
\binf{$\Sigma, \Pi, C \Seq D$}
\rlab{\rulemKdiam}
\uinf{$\Box\Sigma, \Box\Pi, \diam C \Seq \diam D$}
\disp
\end{center}
\end{footnotesize}

\item[(\rulemKdiam\ - \rulemKdiam)] The mix formula $A$ has the form $\diam C$.

\begin{footnotesize}
\begin{center}
\ax{$\Der$}
\noLine
\uinf{$\Sigma, B \Seq C$}
\llab{\rulemKdiam}
\uinf{$\Box\Sigma, \diam B \Seq \diam C$}
\ax{$\Der$}
\noLine
\uinf{$\Pi, C \Seq D$}
\rlab{\rulemKdiam}
\uinf{$\Box\Pi, \diam C \Seq \diam D$}
\llab{\mix}
\binf{$\Box\Sigma, \Box\Pi, \diam B \Seq \diam D$}
\disp
\ \
$\leadsto$
\ \
\ax{$\Der$}
\noLine
\uinf{$\Sigma, B \Seq C$}
\ax{$\Der$}
\noLine
\uinf{$\Pi, C \Seq D$}
\rlab{\mix}
\binf{$\Sigma, \Pi, B \Seq D$}
\rlab{\rulemKdiam}
\uinf{$\Box\Sigma, \Box\Pi, \diam B \Seq \diam D$}
\disp
\end{center}
\end{footnotesize}
\end{enumerate}
\end{proof}

\ThCutML*
\begin{proof}
We extend the cases in the proof of Theorem~\ref{th:cut MK} with the analysis of the new modal rules.
The cases 1.1 and 1.2 are as before.
\begin{enumerate}[leftmargin=*, align=left]
\item[\framebox{2.1}] The mix formula is not principal in the last rule applied in the derivation $\mathcal D$ of the left premiss of \mix.

\item[(\rulemTbox)]

\begin{center}
\hfill
\begin{footnotesize}
\ax{$\Der$}
\noLine
\uinf{$\G, B \Seq A$}
\llab{\rulemTbox}
\uinf{$\G, \Box B \Seq A$}
\ax{$\Der$}
\noLine
\uinf{$\Sigma, A^n \Seq C$}
\llab{\mix}
\binf{$\G, \Sigma, \Box B \Seq C$}
\disp
\hfill
$\leadsto$
\hfill
\ax{$\Der$}
\noLine
\uinf{$\G, B \Seq A$}
\ax{$\Der$}
\noLine
\uinf{$\Sigma, A^n \Seq C$}
\rlab{\mix}
\binf{$\G, \Sigma, B \seq C$}
\rlab{\rulemTbox}
\uinf{$\G, \Sigma, \Box B \Seq C$}
\disp
\end{footnotesize}
\end{center}

\item[\framebox{2.2}] The mix formula is not principal in the last rule applied in the derivation $\mathcal D$ of the right premiss of \mix.

\item[(\rulemTbox)]

\begin{center}
\hfill
\begin{footnotesize}
\ax{$\Der$}
\noLine
\uinf{$\Gamma \Seq A$}
\ax{$\Der$}
\noLine
\uinf{$\Sigma, A^n, B \Seq C$}
\rlab{\rulemTbox}
\uinf{$\Sigma, A^n, \Box B \Seq C$}
\llab{\mix}
\binf{$\G, \Sigma, \Box B \Seq C$}
\disp
\hfill
$\leadsto$
\hfill
\ax{$\Der$}
\noLine
\uinf{$\G \Seq A$}
\ax{$\Der$}
\noLine
\uinf{$\Sigma, A^n, B \Seq C$}
\rlab{\mix}
\binf{$\G, \Sigma, B \seq C$}
\rlab{\rulemTbox}
\uinf{$\G, \Sigma, \Box B \Seq C$}
\disp
\end{footnotesize}
\end{center}

\item[(\rulemTdiam)]

\begin{center}
\hfill
\begin{footnotesize}
\ax{$\Der$}
\noLine
\uinf{$\Gamma \Seq A$}
\ax{$\Der$}
\noLine
\uinf{$\Sigma, A^n \Seq B$}
\rlab{\rulemTdiam}
\uinf{$\Sigma, A^n \Seq \diam B$}
\llab{\mix}
\binf{$\G, \Sigma \Seq \diam B$}
\disp
\hfill
$\leadsto$
\hfill
\ax{$\Der$}
\noLine
\uinf{$\G \Seq A$}
\ax{$\Der$}
\noLine
\uinf{$\Sigma, A^n \Seq B$}
\rlab{\mix}
\binf{$\G, \Sigma \seq B$}
\rlab{\rulemTdiam}
\uinf{$\G, \Sigma \Seq \diam B$}
\disp
\end{footnotesize}
\end{center}

\item[\framebox{2.3}]  The mix formula is principal in the last rule applied in the derivations $\mathcal D_1$, $\mathcal D_2$ of both premisses of \mix.
For the cases where the last rule applied in $\mathcal D_1$, $\mathcal D_2$
is propositional see the proof of Theorem~\ref{th:cut MK}.
We show the other cases.

\item[(\rulemNbox\ - \rulemMbox)] The mix formula $A$ has the form $\Box B$.

\begin{footnotesize}
\begin{center}
\ax{$\Der$}
\noLine
\uinf{$\Seq B$}
\llab{\rulemNbox}
\uinf{$\Seq \Box B$}
\ax{$\Der$}
\noLine
\uinf{$B \seq C$}
\rlab{\rulemMbox}
\uinf{$\Box B \Seq \Box C$}
\llab{\mix}
\binf{$\Seq \Box C$}
\disp
\
$\leadsto$ 
\
\ax{$\Der$}
\noLine
\uinf{$\Seq B$}
\ax{$\Der$}
\noLine
\uinf{$B \Seq C$}
\rlab{\mix}
\binf{$\Seq C$}
\rlab{\rulemNbox}
\uinf{$\Seq \Box C$}
\disp
\end{center}
\end{footnotesize}

\item[(\rulemPdiam\ - \rulemMdiam)] The mix formula $A$ has the form $\diam B$.

\begin{footnotesize}
\begin{center}
\ax{$\Der$}
\noLine
\uinf{$\Seq B$}
\llab{\rulemPdiam}
\uinf{$\Seq \diam B$}
\ax{$\Der$}
\noLine
\uinf{$B \seq C$}
\rlab{\rulemMdiam}
\uinf{$\diam B \Seq \diam C$}
\llab{\mix}
\binf{$\Seq \diam C$}
\disp
\
$\leadsto$ 
\
\ax{$\Der$}
\noLine
\uinf{$\Seq B$}
\ax{$\Der$}
\noLine
\uinf{$B \Seq C$}
\rlab{\mix}
\binf{$\Seq C$}
\rlab{\rulemPdiam}
\uinf{$\Seq \diam C$}
\disp
\end{center}
\end{footnotesize}

\item[(\rulemNbox\ - \rulemD)] The mix formula $A$ has the form $\Box B$
(note that by definition \rulemPdiam\ belongs to the calculus).

\begin{footnotesize}
\begin{center}
\ax{$\Der$}
\noLine
\uinf{$\Seq B$}
\llab{\rulemNbox}
\uinf{$\Seq \Box B$}
\ax{$\Der$}
\noLine
\uinf{$B \seq C$}
\rlab{\rulemD}
\uinf{$\Box B \Seq \diam C$}
\llab{\mix}
\binf{$\Seq \diam C$}
\disp
\
$\leadsto$ 
\
\ax{$\Der$}
\noLine
\uinf{$\Seq B$}
\ax{$\Der$}
\noLine
\uinf{$B \Seq C$}
\rlab{\mix}
\binf{$\Seq C$}
\rlab{\rulemPdiam}
\uinf{$\Seq \diam C$}
\disp
\end{center}
\end{footnotesize}

\item[(\rulemNbox\ - \rulemTbox)] The mix formula $A$ has the form $\Box B$.

\begin{footnotesize}
\begin{center}
\ax{$\Der$}
\noLine
\uinf{$\Seq B$}
\llab{\rulemNbox}
\uinf{$\Seq \Box B$}
\ax{$\Der$}
\noLine
\uinf{$\G, (\Box B)^{n-1}, B \Seq C$}
\rlab{\rulemTbox}
\uinf{$\G, (\Box B)^n \Seq C$}
\llab{\mix}
\binf{$\G \Seq C$}
\disp
\
$\leadsto$ 
\hfill \ \ 

\ \ \hfill
\ax{$\Der$}
\noLine
\uinf{$\Seq B$}
\ax{$\Der$}
\noLine
\uinf{$\Seq B$}
\llab{\rulemNbox}
\uinf{$\Seq \Box B$}
\ax{$\Der$}
\noLine
\uinf{$\G, (\Box B)^{n-1}, B \Seq C$}
\rlab{\mix}
\binf{$\G, B \Seq C$}
\rlab{\mix}
\binf{$\G \Seq C$}
\disp
\end{center}
\end{footnotesize}

\item[(\rulemCbox\ - \rulemCbox)] The mix formula $A$ has the form $\Box C$.

\begin{footnotesize}
\begin{center}
\ax{$\Der$}
\noLine
\uinf{$\Sigma, B \Seq C$}
\llab{\rulemCbox}
\uinf{$\Box\Sigma, \Box B \Seq \Box C$}
\ax{$\Der$}
\noLine
\uinf{$C^n, \Pi \Seq D$}
\rlab{\rulemCbox}
\uinf{$(\Box C)^n, \Box\Pi \Seq \Box D$}
\llab{\mix}
\binf{$\Box\Sigma, \Box\Pi, \Box B \Seq \Box D$}
\disp
\
$\leadsto$ 
\
\ax{$\Der$}
\noLine
\uinf{$\Sigma, B \Seq C$}
\ax{$\Der$}
\noLine
\uinf{$C^n, \Pi \Seq D$}
\rlab{\mix}
\binf{$\Sigma, \Pi, B \Seq D$}
\rlab{\rulemCbox}
\uinf{$\Box\Sigma, \Box\Pi, \Box B \Seq \Box D$}
\disp
\end{center}
\end{footnotesize}

\item[(\rulemCbox\ - \rulemKdiam)] The mix formula $A$ has the form $\Box C$.

\begin{footnotesize}
\begin{center}
\ax{$\Der$}
\noLine
\uinf{$\Sigma, B \Seq C$}
\llab{\rulemCbox}
\uinf{$\Box\Sigma, \Box B \Seq \Box C$}
\ax{$\Der$}
\noLine
\uinf{$C^n, \Pi, D \Seq E$}
\rlab{\rulemKdiam}
\uinf{$(\Box C)^n, \Box\Pi, \diam D \Seq \diam E$}
\llab{\mix}
\binf{$\Box\Sigma, \Box\Pi, \Box B, \diam D \Seq \diam E$}
\disp
\
$\leadsto$ 
\hfill \ \ 

\ \ \hfill
\ax{$\Der$}
\noLine
\uinf{$\Sigma, B \Seq C$}
\ax{$\Der$}
\noLine
\uinf{$C^n, \Pi, D \Seq E$}
\rlab{\mix}
\binf{$\Sigma, \Pi, B, D \Seq E$}
\rlab{\rulemKdiam}
\uinf{$\Box\Sigma, \Box\Pi, \Box B, \diam D \Seq \diam E$}
\disp
\end{center}
\end{footnotesize}


\item[(\rulemCbox\ - \rulemCD)] The mix formula $A$ has the form $\Box C$.

\begin{footnotesize}
\begin{center}
\ax{$\Der$}
\noLine
\uinf{$\Sigma, B \Seq C$}
\llab{\rulemCbox}
\uinf{$\Box\Sigma, \Box B \Seq \Box C$}
\ax{$\Der$}
\noLine
\uinf{$C^n, \Pi \Seq D$}
\rlab{\rulemCD}
\uinf{$(\Box C)^n, \Box\Pi \Seq \diam D$}
\llab{\mix}
\binf{$\Box\Sigma, \Box B, \Box\Pi \Seq \diam D$}
\disp
\hfill
$\leadsto$ 
\hfill
\ax{$\Der$}
\noLine
\uinf{$\Sigma, B \Seq C$}
\ax{$\Der$}
\noLine
\uinf{$C^n, \Pi \Seq D$}
\rlab{\mix}
\binf{$\Sigma, B, \Pi \Seq D$}
\rlab{\rulemCD}
\uinf{$\Box\Sigma, \Box B, \Box\Pi \Seq \diam D$}
\disp
\end{center}
\end{footnotesize}

\item[(\rulemCD\ - \rulemKdiam)] The mix formula $A$ has the form $\diam B$.

\begin{footnotesize}
\begin{center}
\ax{$\Der$}
\noLine
\uinf{$\Sigma \Seq B$}
\llab{\rulemCD}
\uinf{$\Box\Sigma \Seq \diam B$}
\ax{$\Der$}
\noLine
\uinf{$\Pi, B \Seq C$}
\rlab{\rulemKdiam}
\uinf{$\Box\Pi, \diam B \Seq \diam C$}
\llab{\mix}
\binf{$\Box\Sigma, \Box\Pi \Seq \diam C$}
\disp
\
$\leadsto$ 
\
\ax{$\Der$}
\noLine
\uinf{$\Sigma \Seq B$}
\ax{$\Der$}
\noLine
\uinf{$\Pi, B \Seq C$}
\rlab{\mix}
\binf{$\Sigma, \Pi \Seq C$}
\rlab{\rulemCD}
\uinf{$\Box\Sigma, \Box\Pi \Seq \diam C$}
\disp
\end{center}
\end{footnotesize}

\item[(\rulemCbox\ - \rulemTbox)] The mix formula $A$ has the form $\Box C$.

\begin{footnotesize}
\begin{center}
\ax{$\Der$}
\noLine
\uinf{$\Sigma, B \Seq C$}
\llab{\rulemCbox}
\uinf{$\Box\Sigma, \Box B \Seq \Box C$}
\ax{$\Der$}
\noLine
\uinf{$\G, (\Box C)^{n-1}, C \Seq D$}
\rlab{\rulemTbox}
\uinf{$\G, (\Box C)^n \Seq D$}
\llab{\mix}
\binf{$\G, \Box\Sigma, \Box B \Seq D$}
\disp
\
$\leadsto$ 
\hfill \ \ 

\ \ \hfill
\ax{$\Der$}
\noLine
\uinf{$\Sigma, B \Seq C$}
\ax{$\Der$}
\noLine
\uinf{$\Sigma, B \Seq C$}
\llab{\rulemCbox}
\uinf{$\Box\Sigma, \Box B \Seq \Box C$}
\ax{$\Der$}
\noLine
\uinf{$\G, (\Box C)^{n-1}, C \Seq D$}
\rlab{\mix}
\binf{$\G, \Box\Sigma, \Box B, C \Seq D$}
\rlab{\mix}
\binf{$\G, \Box\Sigma, \Box B, \Sigma, B \Seq D$}
\rlab{\rulemTbox$^*$}
\uinf{$\G, \Box\Sigma, \Box B, \Box\Sigma, \Box B \Seq D$}
\rlab{\mlctr$^*$}
\uinf{$\G, \Box\Sigma, \Box B \Seq D$}
\disp
\end{center}
\end{footnotesize}

\item[(\rulemTdiam\ - \rulemKdiam)] The mix formula $A$ has the form $\diam B$.

\begin{footnotesize}
\begin{center}
\ax{$\Der$}
\noLine
\uinf{$\G \Seq B$}
\llab{\rulemTdiam}
\uinf{$\G \Seq \diam B$}
\ax{$\Der$}
\noLine
\uinf{$\Sigma, B \Seq C$}
\rlab{\rulemKdiam}
\uinf{$\Box\Sigma, \diam B \Seq \diam C$}
\llab{\mix}
\binf{$\G, \Box\Sigma \Seq \diam C$}
\disp
\
$\leadsto$ 
\
\ax{$\Der$}
\noLine
\uinf{$\G \Seq B$}
\ax{$\Der$}
\noLine
\uinf{$\Sigma, B \Seq C$}
\rlab{\mix}
\binf{$\G, \Sigma \Seq C$}
\rlab{\rulemTbox$^*$}
\uinf{$\G, \Box\Sigma \Seq C$}
\rlab{\rulemTdiam}
\uinf{$\G, \Box\Sigma \Seq \diam C$}
\disp
\end{center}
\end{footnotesize}
\end{enumerate}

\noindent
For the remaining combinations,
(\rulemMbox\ - \rulemMbox) is analogous to (\rulemCbox\ - \rulemCbox) with $|\Sigma|=|\Pi|=0$ and $n = 1$;
(\rulemMdiam\ - \rulemMdiam) is analogous to (\rulemKdiam\ - \rulemKdiam) with $|\Sigma|=|\Pi|=0$;
(\rulemMbox\ - \rulemD) is analogous to (\rulemCbox\ - \rulemCD) with $|\Sigma|=|\Pi|=0$ and $n = 1$;
(\rulemD\ - \rulemMdiam) is analogous to (\rulemCD\ - \rulemKdiam) with $|\Sigma|=|\Pi|=0$;
(\rulemMbox\ - \rulemTbox) is analogous to (\rulemCbox\ - \rulemTbox) with $|\Sigma|=0$;
and
(\rulemTdiam\ - \rulemMdiam) is analogous to (\rulemTdiam\ - \rulemKdiam) with $|\Sigma|=0$.
\end{proof}

\ThCutCL*
\begin{proof}
We extend the cases in the proof of Theorem~\ref{th:cut ML} with the 
combinations involving \ilbot\ and \irwk.
The cases 1.1 and 2.1 are as in the proof of Theorem~\ref{th:cut ML}.

\begin{enumerate}[leftmargin=*, align=left]
\item[\framebox{1.2}] The right premiss of \mix\ is the initial sequent \ilbot:
\begin{center}
\begin{small}
\ax{$\Der$}
\noLine
\uinf{$\G \Seq \bot$}
\ax{$\bot \Seq$}
\llab{\mix}
\binf{$\G \Seq$}
\disp
\end{small}
\end{center}
We need to consider the last rule applied in the derivation of the left premiss of \mix\ $\G \seq \bot$,
which is a left propositional rule or \ruleiTbox. We show as an example the latter possibility.
\begin{center}
\begin{small}
\ax{$\Der$}
\noLine
\uinf{$\G, B \Seq \bot$}
\llab{\ruleiTbox}
\uinf{$\G, \Box B \Seq \bot$}
\ax{$\bot \Seq$}
\llab{\mix}
\binf{$\G, \Box B \Seq$}
\disp
\ \
$\leadsto$
\ \
\ax{$\Der$}
\noLine
\uinf{$\G, B \Seq \bot$}
\ax{$\bot \Seq$}
\llab{\mix}
\binf{$\G, B \Seq$}
\llab{\ruleiTbox}
\uinf{$\G, \Box B \Seq$}
\disp
\end{small}
\end{center}

\item[\framebox{2.2}] The mix formula is not principal in the last rule applied in the derivation $\mathcal D$ of the right premiss of \mix.

\item[(\irwk)]

\begin{center}
\hfill
\begin{footnotesize}
\ax{$\Der$}
\noLine
\uinf{$\Gamma \Seq A$}
\ax{$\Der$}
\noLine
\uinf{$\Sigma, A^n \seq$}
\rlab{\irwk}
\uinf{$\Sigma, A^n \seq B$}
\llab{\mix}
\binf{$\G, \Sigma \Seq B$}
\disp
\ \
$\leadsto$
\ \
\ax{$\Der$}
\noLine
\uinf{$\G \Seq A$}
\ax{$\Der$}
\noLine
\uinf{$\Sigma, A^n \Seq$}
\rlab{\mix}
\binf{$\G, \Sigma \seq$}
\rlab{\irwk}
\uinf{$\G, \Sigma \Seq B$}
\disp
\end{footnotesize}
\end{center}

\item[\framebox{2.3}]  The mix formula is principal in the last rule applied in the derivations $\mathcal D_1$, $\mathcal D_2$ of both premisses of \mix.

\item[(\irwk\ - $R$)] The transformation below 
applies for any last rule $R$ in the derivation of the left premiss of \mix.

\begin{footnotesize}
\begin{center}
\ax{$\Der$}
\noLine
\uinf{$\Gamma \Seq$}
\llab{\irwk}
\uinf{$\G \seq A$}
\ax{$\Der$}
\noLine
\uinf{$\Sigma, A^n \seq \delta$}
\llab{\mix}
\binf{$\G, \Sigma \Seq \delta$}
\disp
\ \
$\leadsto$
\ \
\ax{$\Der$}
\noLine
\uinf{$\G \seq$}
\rlab{\ilwk$^*$ (if $|\Sigma| \geq 0$)}
\uinf{$\G, \Sigma \seq$}
\rlab{\irwk$^*$ (if $|\delta| = 1$)}
\uinf{$\G, \Sigma \seq \delta$}
\disp
\end{center}
\end{footnotesize}

\end{enumerate}

\end{proof}

\end{document}